%
\documentclass[11pt]{article}
\usepackage{tikz}

\usepackage{amsfonts}
\usepackage{amsmath}
\usepackage{amssymb}
\usepackage{enumerate}
\usepackage{xspace}

\newcommand{\CCfont}[1]{\ensuremath{\mathsf{#1}}}

\newcommand{\myset}[2]{ \left\{ #1 \left| #2 \right.\right\} }
\newcommand{\prefix}{\sqsubseteq}
\newcommand{\limsupn}{\limsup\limits_{n\to\infty}}
\newcommand{\limsupN}{\limsup\limits_{N\to\infty}}
\newcommand{\ith}{{i^{\mathrm{th}}}}
\newcommand{\PR}[2]{\underset{#1}{\CCfont{Pr}}\left[#2\right]}
\newcommand{\restr}{\negthinspace\upharpoonright\negthinspace}

\newcommand{\N}{\mathbb{N}}
\newcommand{\Q}{\mathbb{Q}}

\newcommand{\p}{{\CCfont{p}}}
\newcommand{\ptwo}{{\p_{\thinspace\negthinspace_2}}}
\newcommand{\pspace}{{\CCfont{pspace}}}
\newcommand{\ptwospace}{{\CCfont{p}_{\thinspace\negthinspace_2}\CCfont{space}}}
\newcommand{\mup}{\mu_\p}
\newcommand{\muptwo}{\mu_\ptwo}

\newcommand{\C}{\CCfont{C}}

\newcommand{\co}[1]{\CCfont{co}#1}
\newcommand{\DTIME}{\CCfont{DTIME}}
\newcommand{\NTIME}{\CCfont{NTIME}}
\newcommand{\NP}{\CCfont{NP}}
\newcommand{\coNP}{\co{\NP}}
\newcommand{\FewP}{\CCfont{FewP}}
\newcommand{\UP}{\CCfont{UP}}
\newcommand{\coUP}{\co{\UP}}
\newcommand{\AM}{\CCfont{AM}}
\newcommand{\coAM}{\co{\AM}}
\newcommand{\BQP}{\CCfont{BQP}}
\newcommand{\BPP}{\CCfont{BPP}}
\newcommand{\RP}{\CCfont{RP}}
\newcommand{\E}{\CCfont{E}}

\newcommand{\ALMOST}{\CCfont{ALMOST}}

\renewcommand{\P}{\CCfont{P}}
\newcommand{\EXP}{\CCfont{EXP}}

\newcommand{\poly}{\CCfont{poly}}
\newcommand{\NPcoNP}{\NP\cap\coNP}
\newcommand{\AMcoAM}{\AM \cap \coAM}
\newcommand{\UPcoUP}{\UP \cap \coUP}

\newcommand{\calC}{{\cal C}}
\newcommand{\calF}{{\cal F}}
\newcommand{\calG}{{\cal G}}
\newcommand{\calM}{{\cal M}}

\newcommand{\T}{\CCfont{T}}
\newcommand{\leqp}{\leq^\p}
\newcommand{\leqpT}{\leqp_\T}
\renewcommand{\Pr}{\CCfont{P}_r}

\newcommand{\TM}{\CCfont{TM}}
\newcommand{\QTM}{\CCfont{QTM}}

\newcommand{\free}{\CCfont{free}}

\def\binaryn{\lbrace 0,1 \rbrace ^ n}
\def\binary{\lbrace 0,1 \rbrace}
\def\Pr{\CCfont{Pr}}

\def\binaryinfty{ \lbrace 0,1 \rbrace ^ \infty}

\newcommand{\permEXP}{\CCfont{PermEXP}}
\newcommand{\permE}{\CCfont{PermE}}

\renewcommand{\ALMOST}[1]{\CCfont{ALMOST}\textrm{-}#1}
\renewcommand{\ALMOST}[1]{\CCfont{ALMOST}\textrm{-}[#1]}
\renewcommand{\ALMOST}[1]{\CCfont{ALMOST}[#1]}

\newcommand{\BQTIME}{\CCfont{BQTIME}}

\newcommand{\NLIN}{\CCfont{NLIN}}

\newcommand{\emptylist}{[\hspace{.5mm}]}

\newcommand{\bfPi}{\mathsf{\Pi}}
\newcommand{\PFPi}{\CCfont{PP}\bfPi}
\newcommand{\OPPi}{\CCfont{OP}\bfPi}

\renewcommand{\C}{\CCfont{C}}
\newcommand{\cyl}[1]{\llbracket #1 \rrbracket}

\usepackage[numbers,sort&compress]{natbib}
\usepackage{hyperref}

\usepackage[margin=1in]{geometry}
\usepackage{mathtools}
\usepackage{amsmath}
\usepackage{stmaryrd}

\usepackage{environ}
\NewEnviron{appendixproof}[1][]{\begin{proof} \BODY \end{proof}}

\usepackage{amsthm}
\newtheorem{theorem}{Theorem}[section]
\newenvironment{theorem_cite}[1]{\begin{theorem} {\rm (#1)}}{\end{theorem}}
\usepackage{aliascnt}
\usepackage[nameinlink]{cleveref}
\newaliascnt{lemma}{theorem}
\newtheorem{lemma}[lemma]{Lemma}
\aliascntresetthe{lemma}
\crefname{lemma}{lemma}{lemmas}
\newaliascnt{corollary}{theorem}
\newtheorem{corollary}[corollary]{Corollary}
\aliascntresetthe{corollary}
\crefname{corollary}{corollary}{corollaries}
\newaliascnt{proposition}{theorem}
\newtheorem{proposition}[proposition]{Proposition}
\aliascntresetthe{proposition}
\crefname{proposition}{proposition}{propositions}
\newaliascnt{question}{theorem}
\newtheorem{question}[question]{Question}
\aliascntresetthe{question}
\crefname{question}{question}{questions}
\theoremstyle{definition}
\newaliascnt{definition}{theorem}
\newtheorem{definition}[definition]{Definition}
\aliascntresetthe{definition}
\crefname{definition}{Definition}{Definitions}
\Crefname{definition}{Definition}{Definitions}

\crefalias{definition_cite}{definition}
\newaliascnt{remark}{theorem}

\aliascntresetthe{remark}
\crefname{remark}{remark}{remarks}
\usepackage{thm-restate}

\usepackage{paralist}
\setlength{\pltopsep}{0.25ex} \setlength{\plitemsep}{0.25ex}

\newenvironment{enumerateC}{\begin{enumerate}}
{\end{enumerate}}
   \numberwithin{theorem}{section}
\numberwithin{figure}{section}

\newcommand{\HALFRANGE}{\textrm{HALFRANGE}}
\renewcommand{\HALFRANGE}{\CCfont{HRNG}}

\title{\bf Random Permutations in Computational Complexity\footnote{A preliminary version of this paper appeared the {\em Proceedings of the 50th International Symposium on Mathematical Foundations of Computer Science (MFCS 2025)} \cite{Hitchcock:RPCC}.}}

\author{John M. Hitchcock
\thanks{Department of Electrical Engineering and Computer Science, University of Wyoming. jhitchco@uwyo.edu. This research was supported in part by NSF grant 2431657.}
\and Adewale Sekoni
\thanks{Department of Mathematics, Computer Science \& Physics,
  Roanoke College. sekoni@roanoke.edu}
\and
Hadi Shafei
\thanks{Department of Computer Science, University of Huddersfield. H.Shafei@hud.ac.uk}
}

\begin{document}

\maketitle
\begin{abstract}

	Classical results of Bennett and Gill (1981) show that with probability 1, $\P^A \neq \NP^A$ relative to a random oracle $A$, and with probability 1, $\P^\pi \neq \NP^\pi \cap \coNP^\pi$ relative to a random permutation $\pi$.
	Whether $\P^A = \NP^A \cap \coNP^A$ holds relative to a random oracle $A$ remains open.
	While the random oracle separation has been extended to specific individually random oracles--such as Martin-Löf random or resource-bounded random oracles--no analogous result is known for individually random permutations.

	We introduce a new resource-bounded measure framework for analyzing individually random permutations. We define permutation martingales and permutation betting games that characterize measure-zero sets in the space of permutations, enabling formal definitions of polynomial-time random permutations, polynomial-time betting-game random permutations, and polynomial-space random permutations.

	Our main result shows that $\P^\pi \neq \NP^\pi \cap \coNP^\pi$ for every polynomial-time betting-game random permutation $\pi$. This is the first separation result relative to individually random permutations, rather than an almost-everywhere separation.
	We also strengthen a quantum separation of Bennett, Bernstein, Brassard, and Vazirani (1997) by showing that $\NP^\pi \cap \coNP^\pi \not\subseteq \BQP^\pi$ for every polynomial-space random permutation $\pi$.

	We investigate the relationship between random permutations and random oracles. We prove that random oracles are polynomial-time reducible from random permutations. The converse--whether every random permutation is reducible from a random oracle--remains open. We show that if $\NP \cap \coNP$ is not a measurable subset of $\EXP$, then $\P^A \neq \NP^A \cap \coNP^A$ holds with probability 1 relative to a random oracle $A$. Conversely, establishing this random oracle separation with time-bounded measure would imply $\BPP$ is a measure 0 subset of $\EXP$.

	Our framework builds a foundation for studying permutation-based complexity using resource-bounded measure, in direct analogy to classical work on random oracles. It raises natural questions about the power and limitations of random permutations, their relationship to random oracles, and whether individual randomness can yield new class separations.

\end{abstract}

\newpage
\tableofcontents
\newpage

\section{Introduction}\label{sec:introduction}

The seminal work of Bennett and Gill \cite{BennettGill81}  established two foundational separations in computational complexity theory:
\begin{enumerate}
	\item $\P^A \neq \NP^A$ relative to a random oracle $A$ with probability 1.
	\item $\P^\pi \neq \NP^\pi \cap \coNP^\pi$ relative to a random permutation $\pi$ with probability 1.
\end{enumerate}
Subsequent research extended the first separation to hold for specific, individually random oracles, including algorithmically (Martin-L{\"o}f) random oracles \cite{Lutz:BLW}, polynomial-space-bounded random oracles \cite{Lutz:CSRPO}, and polynomial-time betting-game random oracles \cite{Hitchcock:PTROSCC}. However, the second separation has not yet been strengthened in an analogous way.
Whether $\P^A \neq \NP^A \cap \coNP^A$
holds relative to a random oracle $A$ remains an open question.

In this paper, we develop a novel framework for resource-bounded permutation measure and randomness, introducing {\em permutation martingales} and {\em permutation betting} games. These concepts generalize classical martingales and betting games to the space $\bfPi$ of all length-preserving permutations $\pi : \{0,1\}^* \to \{0,1\}^*$ where $|\pi(x)|=|x|$ for all $x \in \{0,1\}^*$.

\subsection{Background}

Bennett and Gill \cite{BennettGill81} initiated the study of random oracles in computational complexity, proving that $\P^A \neq \NP^A$ for a random oracle $A$ with probability 1. Subsequent work extended this to individual random oracles. Book, Lutz, and Wagner \cite{Lutz:BLW} showed that $\P^A \neq \NP^A$ for every oracle $A$ that is algorithmically random in the sense of Martin-L{\"o}f \cite{MartinLof66}. Lutz and Schmidt \cite{Lutz:CSRPO} improved this further to show $\P^A \neq \NP^A$ for every oracle $A$ that is pspace-random in the sense of resource-bounded measure \cite{Lutz:AEHNC}. Hitchcock, Sekoni, and Shafei \cite{Hitchcock:PTROSCC} extended this result to polynomial-time betting-game random oracles \cite{BvMRSS01}.

The complexity class \(\NP \cap \coNP\) is particularly significant because it comprises problems that have both efficiently verifiable proofs of membership and non-membership. This class includes important problems such as integer factorization and discrete logarithm, which are widely believed to be outside \(\P\) but are not known to be \(\NP\)-complete. These problems play a central role in cryptography, as the security of widely-used cryptosystems relies on their presumed intractability \cite{RSA78,DiffieHellman76}.
Furthermore, under derandomization hypotheses, \(\NP \cap \coNP\) has been shown to contain problems such as graph isomorphism \cite{KlivMe02}, further underscoring its importance in complexity theory. Thus, understanding the relationship between \(\P\) and \(\NP \cap \coNP\) relative to different notions of randomness could shed light on the structure of these classes and the limits of efficient computation.

\subsection{Our Approach: Permutation Martingales and Permutation Measure}

In this work, we develop a novel framework for resource-bounded permutation measure and randomness. We introduce permutation martingales and permutation betting games, extending classical notions of random permutations. Our theory captures essential properties of random permutations while enabling complexity separations. We prove that random oracles can be computed in polynomial time from a random permutation; however, the converse remains unresolved.

First, we recall the basics of resource-bounded measure.
A martingale in Cantor space may be viewed as betting on the membership of strings in a language. The standard enumeration of $\{0,1\}^*$ is $s_0 = \lambda, s_1 = 0, s_2 = 1, s_3 = 00, s_4 = 01, \ldots$. In the $\ith$ stage of the game, the martingale has seen the membership of the first $i$ strings and bets on the membership of $s_i$ in the language. The martingale's value is updated based on the outcome of the bet.
Formally, a classical martingale is a function \(d:\{0,1\}^* \to [0,\infty)\) satisfying the fairness condition
\[
	d(w) = \frac{d(w0) + d(w1)}{2}
\]
for all strings \(w\). Intuitively, \(d(w)\) represents the capital that a gambler has after betting on the sequence of bits in \(w\) according to a particular strategy. The fairness condition ensures that the expected capital after the next bit is equal to the current capital. A martingale succeeds on a language \(A \subseteq \{0,1\}^*\) if \[\limsupn d(A\restr n) = \infty,\] where \(A\restr n\) is the length-\(n\) prefix of \(A\)'s characteristic sequence. The {\em success set} of $d$ is $S^\infty[d]$, the set of all sequences that $d$ succeeds on. Ville \cite{Vill39} proved that a set $X$ has Lebesgue measure zero if and only if there is a martingale that succeeds on all elements of $X$. Lutz \cite{Lutz:AEHNC} defined resource-bounded measure by imposing computability and complexity constraints on the martingales in Ville's theorem.

We take a similar approach in developing resource-bounded permutation measure. Unlike a classical martingale betting on the bits of a language's characteristic sequence, a permutation martingale bets on the function values of a permutation $\pi$.
Instead of seeing the characteristic string of a language, a permutation martingale sees a list of permutation function values. More precisely, after $i \geq 0$ rounds of betting, a permutation martingale has seen a {\em prefix partial permutation}
\[ g = [g(s_0), \ldots, g(s_{i-1})] \] where $|g(s_i)|=|s_i|$ for all $i$. The permutation martingale will bet on the next function value $g(s_i)$. The current {\em betting length} is $l(g) = |s_i|$, the length of the next string $s_i$ in the standard enumeration.
The set of {\em free strings} available for the next function value is
\[ \free(g) = \myset{ x \in \{0,1\}^{l(g)} }{ x \text{ is not listed in } g  }. \]
For any prefix partial permutation $g$,
a permutation martingale $d$ outputs a value $d(g,x) \geq 0$ for each $x \in \free(g)$. The values satisfy the averaging condition
\[ d(g) = \frac{1}{|\text{free}(g)|}\sum_{x \in \text{free}(g)} {d(g, x)}. \]
Here $g,x$ denotes appending the string $x$ as the next function value in prefix partial permutation $g$. See Figure \ref{fig:permutation_martingale} for an example of a permutation martingale betting on strings of length 2.
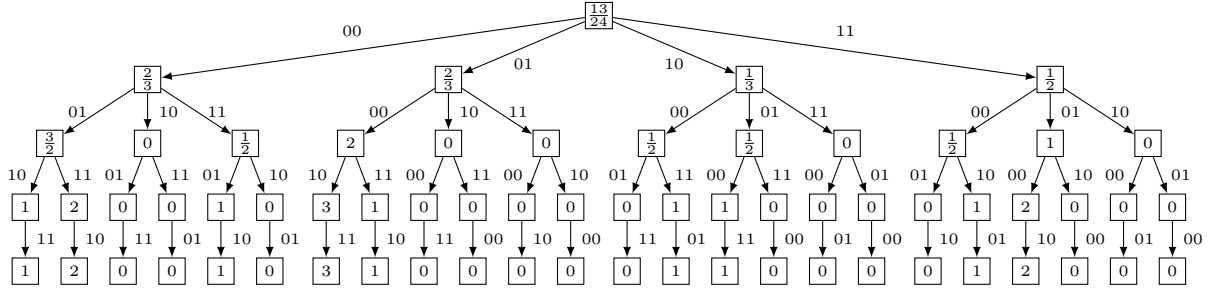
\begin{figure}
	\centering
	\begin{tikzpicture}[scale=0.5,
			level distance=1.7cm,
			level 1/.style={sibling distance=8cm},
			level 2/.style={sibling distance=2.6cm},
			level 3/.style={sibling distance=1.3cm},
			level_node/.style={rectangle, draw, minimum size=0.35cm, inner sep=0.5pt, fill=white,font=\tiny},leaf_node/.style={rectangle, draw, minimum size=0.35cm, inner sep=0.5pt,fill=white,font=\tiny},edge from parent/.style={draw, -latex},
			edge_label/.style={midway, font=\tiny,
					inner sep=4pt}
		]

		\node[level_node] {$\frac{13}{24}$}
		child { node[level_node] {$\frac{2}{3}$}
				child { node[level_node] {$\frac{3}{2}$}
						child { node[level_node] {$1$}
								child { node[leaf_node] {$1$} edge from parent node[edge_label, right] {11} }
								edge from parent node[edge_label, left] {10}
							}
						child { node[level_node] {$2$}
								child { node[leaf_node] {$2$} edge from parent node[edge_label, right] {10} }
								edge from parent node[edge_label, right] {11}
							}
						edge from parent node[edge_label, left] {01}
					}
				child { node[level_node] {$0$}
						child { node[level_node] {$0$}
								child { node[leaf_node] {$0$} edge from parent node[edge_label, right] {11} }
								edge from parent node[edge_label, left] {01}
							}
						child { node[level_node] {$0$}
								child { node[leaf_node] {$0$} edge from parent node[edge_label, right] {01} }
								edge from parent node[edge_label, right] {11}
							}
						edge from parent node[edge_label, right] {10}
					}
				child { node[level_node] {$\frac{1}{2}$}
						child { node[level_node] {$1$}
								child { node[leaf_node] {$1$} edge from parent node[edge_label, right] {10} }
								edge from parent node[edge_label, left] {01}
							}
						child { node[level_node] {$0$}
								child { node[leaf_node] {$0$} edge from parent node[edge_label, right] {01} }
								edge from parent node[edge_label, right] {10}
							}
						edge from parent node[edge_label, right] {11}
					}
				edge from parent node[edge_label, above left] {00}
			}
		child { node[level_node] {$\frac{2}{3}$}
				child { node[level_node] {$2$}
						child { node[level_node] {$3$}
								child { node[leaf_node] {$3$} edge from parent node[edge_label, right] {11} }
								edge from parent node[edge_label, left] {10}
							}
						child { node[level_node] {$1$}
								child { node[leaf_node] {$1$} edge from parent node[edge_label, right] {10} }
								edge from parent node[edge_label, right] {11}
							}
						edge from parent node[edge_label, left] {00}
					}
				child { node[level_node] {$0$}
						child { node[level_node] {$0$}
								child { node[leaf_node] {$0$} edge from parent node[edge_label, right] {11} }
								edge from parent node[edge_label, left] {00}
							}
						child { node[level_node] {$0$}
								child { node[leaf_node] {$0$} edge from parent node[edge_label, right] {00} }
								edge from parent node[edge_label, right] {11}
							}
						edge from parent node[edge_label, right] {10}
					}
				child { node[level_node] {$0$}
						child { node[level_node] {$0$}
								child { node[leaf_node] {$0$} edge from parent node[edge_label, right] {10} }
								edge from parent node[edge_label, left] {00}
							}
						child { node[level_node] {$0$}
								child { node[leaf_node] {$0$} edge from parent node[edge_label, right] {00} }
								edge from parent node[edge_label, right] {10}
							}
						edge from parent node[edge_label, right] {11}
					}
				edge from parent node[edge_label, below] {01}
			}
		child { node[level_node] {$\frac{1}{3}$}
				child { node[level_node] {$\frac{1}{2}$}
						child { node[level_node] {$0$}
								child { node[leaf_node] {$0$} edge from parent node[edge_label, right] {11} }
								edge from parent node[edge_label, left] {01}
							}
						child { node[level_node] {$1$}
								child { node[leaf_node] {$1$} edge from parent node[edge_label, right] {01} }
								edge from parent node[edge_label, right] {11}
							}
						edge from parent node[edge_label, left] {00}
					}
				child { node[level_node] {$\frac{1}{2}$}
						child { node[level_node] {$1$}
								child { node[leaf_node] {$1$} edge from parent node[edge_label, right] {11} }
								edge from parent node[edge_label, left] {00}
							}
						child { node[level_node] {$0$}
								child { node[leaf_node] {$0$} edge from parent node[edge_label, right] {00} }
								edge from parent node[edge_label, right] {11}
							}
						edge from parent node[edge_label, right] {01}
					}
				child { node[level_node] {$0$}
						child { node[level_node] {$0$}
								child { node[leaf_node] {$0$} edge from parent node[edge_label, right] {01} }
								edge from parent node[edge_label, left] {00}
							}
						child { node[level_node] {$0$}
								child { node[leaf_node] {$0$} edge from parent node[edge_label, right] {00} }
								edge from parent node[edge_label, right] {01}
							}
						edge from parent node[edge_label, right] {11}
					}
				edge from parent node[edge_label, below] {10}
			}
		child { node[level_node] {$\frac{1}{2}$}
				child { node[level_node] {$\frac{1}{2}$}
						child { node[level_node] {$0$}
								child { node[leaf_node] {$0$} edge from parent node[edge_label, right] {10} }
								edge from parent node[edge_label, left] {01}
							}
						child { node[level_node] {$1$}
								child { node[leaf_node] {$1$} edge from parent node[edge_label, right] {01} }
								edge from parent node[edge_label, right] {10}
							}
						edge from parent node[edge_label, left] {00}
					}
				child { node[level_node] {$1$}
						child { node[level_node] {$2$}
								child { node[leaf_node] {$2$} edge from parent node[edge_label, right] {10} }
								edge from parent node[edge_label, left] {00}
							}
						child { node[level_node] {$0$}
								child { node[leaf_node] {$0$} edge from parent node[edge_label, right] {00} }
								edge from parent node[edge_label, right] {10}
							}
						edge from parent node[edge_label, right] {01}
					}
				child { node[level_node] {$0$}
						child { node[level_node] {$0$}
								child { node[leaf_node] {$0$} edge from parent node[edge_label, right] {01} }
								edge from parent node[edge_label, left] {00}
							}
						child { node[level_node] {$0$}
								child { node[leaf_node] {$0$} edge from parent node[edge_label, right] {00} }
								edge from parent node[edge_label, right] {01}
							}
						edge from parent node[edge_label, right] {10}
					}
				edge from parent node[edge_label, above right] {11}
			};
	\end{tikzpicture}     \caption{An example permutation martingale on strings of length 2. Each path through the tree represents a permutation on $\{00,01,10,11\}$.}
	\label{fig:permutation_martingale}
\end{figure}

Prefix partial permutations may be used as cylinders to define a measure in $\bfPi$ that is equivalent to the natural product probability measure. We detail this in \Cref{sec:permutation_martingales_and_permutation_measure}. Briefly, a class $X \subseteq \bfPi$ has measure 0 if for every $\epsilon > 0$, there exists a sequence of cylinders $\{\cyl{g_i} \mid i \in \N\}$ that has total measure at most $\epsilon$ and covers $X$.
This is difficult to work with computationally as the covers may be large and require exponential time to enumerate.

We prove an analogue of Ville's theorem \cite{Vill39},
showing that permutation martingales characterize measure 0 sets in the permutation space $\bfPi$: a class $X \subseteq \bfPi$ has measure 0 if and only if there a permutation martingale $d$ with $X \subseteq S^\infty[d]$. This permutation martingale characterization allows us to impose computability and complexity constraints in the same way Lutz did for resource-bounded measure in Cantor space~\cite{Lutz:AEHNC}. In the following, let $\Delta$ be a resource bound such as $\p$, $\ptwo$, $\pspace$, or $\ptwospace$ (see \Cref{sec:resource-bounded_permutation_measure} for more details).
\begin{definition}
	Let $\Delta$ be a resource bound.
	A class of permutations $X \subseteq \bfPi$ has $\Delta$-measure 0 if there is a $\Delta$-computable permutation martingale that succeeds on $X$.
\end{definition}

Betting games \cite{BvMRSS01,MeMiNiStRe06} are a generalization of martingales that are allowed to bet on strings in an adaptive order rather than the standard order.
We analogously introduce permutation betting games as a generalization of both permutation martingales and classical betting games by allowing the betting strategy to adaptively choose the order in which it bets on the permutation's values. We use these betting games to define resource-bounded permutation betting-game measure.
\begin{definition}
	Let $\Delta$ be a resource bound.
	A class of permutations $X \subseteq \bfPi$ has $\Delta$-betting game measure 0 if there is a $\Delta$-computable permutation betting game that succeeds on $X$.
\end{definition}
We also define {\em individually} random permutations.
\begin{definition}
	Let $\pi \in \bfPi$ be a permutation and let $\Delta$ be a resource bound.
	\begin{enumerate}
		\item $\pi$ is $\Delta$-random if no $\Delta$-permutation martingale succeeds on $\pi$.
		\item $\pi$ is $\Delta$-betting game random if no $\Delta$-permutation betting game succeeds on $\pi$.
	\end{enumerate}
\end{definition}

\subsection{Our Results}

Our main result strengthens the Bennett--Gill permutation separation by proving that \(\P \neq \NP \cap \coNP\) relative to every polynomial-time betting-game random permutation \(\pi\). Formally, \Cref{th:main} establishes that
\[
	\P^\pi \neq \NP^\pi \cap \coNP^\pi
\]
for every \(\p\)-betting-game random permutation \(\pi\). In fact, we obtain even stronger separations in terms of bi-immunity \cite{FlajoletSteyaert74,BalSch85}, a notion formalizing the absence of infinite, easily-decidable subsets (see \Cref{sec:p_np_conp} for more details). We show that for a \(\p\)-betting-game random permutation \(\pi\), the class \(\NLIN^\pi \cap \co\NLIN^\pi\) contains languages that are bi-immune to \(\DTIME^\pi(2^{kn})\) for all \(k \geq 1\), where \(\NLIN\) denotes nondeterministic linear time. Moreover, relative to a \(\ptwo\)-betting-game random permutation, we derive that \(\NP^\pi \cap \coNP^\pi\) contains languages that are bi-immune to \(\DTIME^\pi(2^{n^k})\) for every \(k \geq 1\).

Bennett et al. \cite{BeBeBrVa97} showed
that  $\NP^\pi \cap \coNP^\pi \not\subseteq \BQTIME^\pi(o(2^{n/3}))$ relative to a random permutation $\pi$ with probability 1. We apply our resource-bounded permutation measure framework to improve this to individual space-bounded random oracles.
Specifically, we show that relative to a \(\ptwospace\)-random permutation \(\pi\),
\[
	\NP^\pi \cap \coNP^\pi \not\subseteq \BQP^\pi.
\]
This illustrates the power of our framework for analyzing the interplay between randomness, classical complexity, and quantum complexity.

\subsection{Random Oracles and Measure 0-1 Laws in \texorpdfstring{$\EXP$}{EXP}}

Tardos \cite{Tardos89} proved that  if $\AM \cap \coAM \neq \BPP$, then $\P^A \neq \NP^A \cap \coNP^A$ with probability 1 for a random oracle $A$.
\renewcommand{\ALMOST}[1]{\CCfont{ALMOST}\mbox{-}#1}
This is proved using $\CCfont{ALMOST}$ complexity classes.
For a relativizable complexity class $\calC$, its $\ALMOST{\calC}$ class consists of all languages that are in the class with probability 1 relative to a random oracle:
$\ALMOST{\calC} = \{ L \mid \Pr[ L \in \calC^A ] = 1 \}.$
We have $\ALMOST{\P} = \BPP$ \cite{BennettGill81} and $\ALMOST{\NP} = \AM$ \cite{NisWig94}. The condition $\AM \cap \coAM \neq \BPP$ implies that there exist problems in $\ALMOST{\NP}\cap\ALMOST{\coNP}$ that are not in $\ALMOST{\P}$. Since the intersection of measure 1 classes is measure 1, this implies $\NP^A \cap \coNP^A \neq \P^A$ relative to a random oracle $A$ with probability 1.
Recent work of Ghosal et al. \cite{Ghosal:STOC23} shows that if $\UP \not\subseteq \RP$, then $\P^A \neq \NP^A \cap \coNP^A$ with probability 1 for a random oracle $A$.
In \Cref{sec:limitations} we pivot from permutation randomness to classical random oracles and show that resolving the long‑standing question ``does $\P^R = \NP^R\cap\coNP^R$ with probability 1?'' is tightly linked to quantitative structure inside $\EXP$.  Leveraging the conditional oracle separations of Tardos \cite{Tardos89} and of Ghosal et al. \cite{Ghosal:STOC23}, we prove that if $\P^R=\NP^R\cap\coNP^R$ holds almost surely, then several familiar subclasses of $\EXP$ obey strong 0-1 laws: specifically, either $\NPcoNP$, $\UPcoUP$,  (and, in a weaker form, $\UP$ vs. $\FewP$) each has $\p$-measure 0 or else fills all of $\EXP$.  Consequently, non‑measurability of any one of these classes immediately forces $\P^R\neq \NP^R\cap\coNP^R$ with probability 1.  We further show that placing the same oracle separation in $\ptwo$ measure would collapse BPP below EXP, thereby framing the random-oracle problem in terms of concrete measure-theoretic thresholds inside exponential time.

\subsection{Organization}

This paper is organized as follows: \Cref{sec:preliminaries} contains preliminaries. \Cref{sec:permutation_martingales_and_permutation_measure} develops permutation martingales, resource-bounded permutation measure, and random permutations. Elementary properties of $
	\p$-random permutations are presented in \Cref{sec:p_rand_permutations_app}.
In \Cref{sec:p_np_conp}, we prove our main results on random permutations for $
	\P$ vs. $\NP \cap \coNP$. \Cref{sec:quantum} contains our results on $\NPcoNP$ versus quantum computation relative to a random permutation. In \Cref{sec:limitations} we present our results on random oracles and 0-1 laws.
We conclude in \Cref{sec:conclusion} with some open questions.

\section{Preliminaries}\label{sec:preliminaries}
The binary alphabet is $\Sigma = \{0,1\}$,
the set of all binary strings is $\Sigma^*$, the set of all binary strings of
length $n$ is $\Sigma^n$, and the set of all infinite binary sequences is
$\Sigma^\infty$. The empty string is denoted by $\lambda$.  We use
the standard enumeration of strings,
$s_0=\lambda,s_1=0,s_2=1,s_3=00,s_4=01,\ldots$.
The characteristic sequence of a language $A$ is the sequence
$\chi_A \in \Sigma^\infty$, where
$\chi_A[n] = 1 \iff s_n \in A$. We refer to $\chi_A[s_n] = \chi_A[n]$ as the
characteristic bit of $s_n$ in $A$. A language $A$ can alternatively be seen
as a subset of $\Sigma^*$, or as an element of $\Sigma^\infty$ via
identification with its characteristic sequence $\chi_A$. Given strings
$x,y$
we denote by $[x,y]$ the set of all strings $z$ such that $x\leq z \leq y$.
For any string $s_n$ and natural number $k$, $s_n+k$ is the string $s_{n+k}$; e.g.
$\lambda + 4 = 01$. Similarly we denote by $A[x,y]$ the substring of the
characteristic sequence $\chi_A$ that corresponds to the characteristic bits
of the strings in $[x,y]$. We use parentheses for intervals that do not
include the endpoints. We write $A\restr n$ for the length $n$ prefix of $A$.  A statement $\mathcal{S}_n$ holds infinitely
often (written i.o.) if it holds for infinitely many $n$, and it holds almost
everywhere (written a.e.) if it holds for all but finitely many $n$.

\section{Permutation Martingales and Permutation Measure}\label{sec:permutation_martingales_and_permutation_measure}
\subsection{Permutation Measure Space}

Resource-bounded measure is typically defined in the Cantor Space
$\C = \{0,1\}^\infty = 2^\N$ of all infinite binary sequences.
For measure in $\C$, we use the open balls or cylinders $\C_w = w \cdot \C$ that
have measure $\mu(\C_w) = 2^{-|w|}$ for each $w \in \Sigma^*$. Let
$\calC$ be the $\sigma$-algebra generated by
$\{ \C_w \mid w \in \{0,1\}^* \} $. Resource-bounded measure and algorithmic randomness typically work in the probability space $(\C,\calC,\mu)$.

We only consider permutations in $\bfPi$, the set of permutations on
$\binary^*$ that preserve string lengths. Given a permutation
$\pi \in \bfPi$, we denote by $\pi_n$ the permutation $\pi$ restricted to
$\binaryn$ i.e., $\pi_n$ is a permutation on $\binaryn$. Similarly, $\bfPi_n$
denotes the set of permutations in $\bfPi$ restricted to $\binaryn$. Bennett and Gill \cite{BennettGill81} considered random permutations by placing the uniform measure on each $\bfPi_n$ and taking the product measure to get a measure on $\bfPi$. We now define this measure space more formally so we may place martingales on it.

Standard resource-bounded measure identifies a language $A \subseteq \{0,1\}^*$ with its infinite binary characteristic sequence $\chi_A \in \C$.
For permutations, we analogously use the value sequence consisting of all function values.
\begin{definition}
	The {\em value sequence} of a permutation $f \in \bfPi$ is the sequence
	\[\nu_f = [f(s_0),f(s_1),f(s_2),\ldots]\]
	of function values where $s_0,s_1,s_2,\ldots$ is the standard enumeration of $\{0,1\}^*$.
\end{definition}
We identify a permutation $f \in \bfPi$ with its value sequence $\nu_f$. Initial segments of permutations are called prefix partial permutations.

\begin{definition}
	A {\em prefix partial permutation} is a list
	$g = [g(s_0),\ldots, g(s_{N-1})]$
	of function values for some $N \geq 0$ where no value is repeated and  $|g(s_i)| = |s_i|$ for all $0 \leq i < N$.
	We let $\PFPi$ denote the class of all
		{prefix partial permutations}.
\end{definition}

We write each $g \in \PFPi$ as a list $g = [g(s_0),\ldots,g(s_{N-1})]$. The {\em length} of $g$ is $N$, the number of function values assigned, and is denoted $|g|$.  We use $\emptylist$ to denote the {\em empty list}, the list of length 0. We write $f\restr N$ for the length $N$ prefix partial permutation of $f \in \bfPi$.

\begin{definition}
	For each $g = [g(s_0),\ldots,g(s_{N-1})] \in \PFPi$, the {\em cylinder} of all permutations in $\bfPi$ that extend $g$ is
	\[\cyl{g} = \{ h \in \bfPi \mid h(s_0)=g(s_0),\ldots,h(s_{N-1})=g(s_{N-1}) \}.\]
\end{definition}

For measure in ${\bfPi}$, we are taking the uniform distribution on
the set of all $\bfPi_n$ of length-preserving permutations for all $n$. Our basic open sets are $\{\cyl{g} \mid g \in \PFPi\}$.
Suppose $g \in \PFPi$ has $|g| = 2^{n}-1$ for some $n \geq 0$. Then, following Bennett and Gill \cite{BennettGill81}, the measure
\[\mu(\cyl{g}) = \prod_{k=0}^{n-1} \frac{1}{(2^k)!}\]
is  easy to define because the distribution is uniform over the $(2^k)!$ permutations at each length.
If $2^{n}-1 \leq |g| < 2^{n+1} -1$, let $m = |g| - 2^{n} +1$ and then
\[\mu(\cyl{g})
	= \left(\prod_{k=0}^{n-1} \frac{1}{(2^k)!} \right) \frac{(2^{n}-m)!}{(2^{n})!}
	= \left(\prod_{k=0}^{n-1} \frac{1}{(2^k)!} \right) \frac{1}{P(2^{n},m)},\]
where $P(n,k) = \frac{n!}{(n-k)!}$ denotes the number of $k$-permutations on $n$ elements. For convenience, we commonly write $\mu(g) = \mu(\cyl{g}).$

Let
$\calF_\bfPi = \sigma(\PFPi)$ be the $\sigma$-algebra generated by the collection of all $\cyl{g}$ where $g \in \PFPi.$
By Carath\'eodory's extension theorem, $\mu$ extends
uniquely to $\calF_\bfPi$, yielding the probability space
$(\bfPi,\calF_\bfPi,\mu).$
We will work in this probability space.
Because $\mu$ is outer regular, we have the typical open cover characterization of measure zero:

\begin{theorem}
	A class $X \subseteq \bfPi$ has {\em measure 0} if and only if for every $\epsilon > 0$, there is an open covering $G = \{ g_0, g_1, \ldots, \} \subseteq \PFPi$ such that
	\[\sum\limits_{i=0}^\infty \mu(g_i) < \epsilon
		\quad\text{and}\quad
		X \subseteq \bigcup\limits_{i=0}^\infty \cyl{g_i}.\]
\end{theorem}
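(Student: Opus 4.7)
The plan is to derive the characterization directly from the Carath\'eodory extension already used to define $\mu$ on $\calF_\bfPi$. By construction, the outer measure associated to the cylinder pre-measure is, for every $Y \subseteq \bfPi$,
\[\mu^*(Y) = \inf\left\{\sum_{i=0}^\infty \mu(g_i) : Y \subseteq \bigcup_{i=0}^\infty \cyl{g_i},\ g_i \in \PFPi\right\},\]
and $\mu^*$ agrees with $\mu$ on $\calF_\bfPi$. By completeness of the Carath\'eodory extension, any $Y$ with $\mu^*(Y)=0$ is automatically measurable. Hence ``$X$ has measure 0'' is equivalent to $\mu^*(X)=0$, which by the infimum representation is equivalent to the existence of cylinder covers with total mass less than $\epsilon$ for every $\epsilon > 0$. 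Both directions of the equivalence then fall out of this formula immediately.

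The one technical step is verifying that the cylinder assignment $g \mapsto \mu(\cyl{g})$ is a countably additive pre-measure on the algebra $\calA$ generated by $\PFPi$, so that Carath\'eodory in fact applies and yields the stated outer-measure formula. Two cylinders $\cyl{g}$ and $\cyl{h}$ are either disjoint or nested according to whether $g$ and $h$ are compatible as partial functions, so every element of $\calA$ is a finite disjoint union of cylinders. Finite additivity reduces to the counting identity
\[\mu(\cyl{g}) = \sum_{\substack{g' \text{ extends } g \\ |g'| = 2^{n+1}-1}} \mu(\cyl{g'}),\]
valid for all sufficiently large $n$, which is immediate from the product formula for $\mu(g)$ and the factorial identities for $P(2^n,m)$ already used in the definition of $\mu$. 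Countable additivity then follows from a standard compactness argument: $\bfPi$ is the inverse limit of the finite sets $\bfPi_n$ and hence compact in the product topology, and each cylinder is clopen, so every countable cylinder cover of a cylinder admits a finite subcover and thereby reduces to the finite case.

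The main obstacle is mostly bookkeeping---ensuring that cylinders of ``uneven length'' $2^n - 1 \leq |g| < 2^{n+1}-1$ are handled consistently when reducing a countable cylinder cover to a finite disjoint refinement by cylinders of a uniform next length. I expect no conceptual difficulty beyond the classical arguments used to establish the analogous statement for the uniform measure on Cantor space; the only differences are that the Bernoulli weight $2^{-|w|}$ is replaced by the permutation weight $\mu(\cyl{g})$ and that the finite stages $\{0,1\}^n$ are replaced by the finite permutation sets $\bfPi_n$, both of which interact harmoniously with the extension argument.
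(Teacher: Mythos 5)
Your proposal is correct and takes essentially the same route the paper intends: the paper simply asserts the characterization follows from outer regularity of the Carath\'eodory extension without giving details, and your argument (outer measure formula, semi-ring structure of cylinders, finite additivity via the counting identity, countable additivity via compactness of the inverse limit) is the standard way to make that assertion rigorous.
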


\subsection{Permutation Martingales}

In resource-bounded measure in Cantor Space, a {\em martingale} is a function $d : \Sigma^* \to
	[0,\infty)$ such that for all $w \in \Sigma^*$, we have the
following averaging condition:
\[d(w) = \frac{d(w0)+d(w1)}{2}.\]
A martingale in Cantor space may be viewed as betting on the membership of strings in a language. The standard enumeration of $\{0,1\}^*$ is $s_0 = \lambda, s_1 = 0, s_2 = 1, s_3 = 00, s_4 = 01, \ldots$. In the $\ith$ stage of the game, the martingale has seen the membership of the first $i$ strings and bets on the membership of $s_i$ in the language. The martingale's value is updated based on the outcome of the bet. For further background on resource-bounded measure,
we refer to \cite{Lutz:AEHNC,Lutz:QSET,AmbMay97,BvMRSS01,Harkins:ELBG}.

A permutation martingale operates similarly, but instead of betting on the membership of a string in a language it bets on the next function value of the permutation. Instead of seeing the characteristic string of a language, a permutation martingale sees a {\em prefix partial permutation}, which is a list of permutation function values $g = [g(s_0), \ldots, g(s_{i-1})]$ satisfying $|g(s_i)| = |s_i|$ for all $i$.
The permutation martingale will bet on the next function value $g(s_i)$. The current {\em betting length} is the length of the next string $s_{|g|}$ in the standard enumeration:
$ l(g) = |s_{|g|}|. $
The set of {\em free strings} available for the next function value is
\[ \free(g) = \{ x \in \{0,1\}^{l(g)} \mid x \text{ is not in } g  \}. \]
For example,
$\free([\lambda]) = \{0,1\}$, $\free([\lambda, 1, 0, 11]) = \{00, 01, 10\}$,
and $\free([\lambda, 1, 0, 11, 00, 01]) = \{10\}$.

We now introduce our main conceptual contribution, permutation martingales.
\begin{definition}
	A {\em permutation martingale} is a function
	$d: \PFPi \to [0, \infty)$
	such that for every prefix partial permutation $g \in \PFPi$,
	\[d(g) = \frac{1}{|\free(g)|}\sum_{x\in\free(g)} d(g, x),\]
	where $(g, x)$ is the result of appending $x$ to $g$.
\end{definition}

Success is defined for permutation martingales analogously to success for classical martingales.

\begin{definition}
	Let $d$ be a permutation martingale.
	We say $d$ {\em succeeds on} $f \in \bfPi$ if \[\limsupN d(f \restr N) = \infty.\]
	The {\em success set} of $d$ is \[S^\infty[d] = \{ f \in \bfPi \mid d\textrm{ succeeds on } f\}\]
	and the {\em unitary success set} of $d$ is the set
	\[S^1[d] = \{ f \in \bfPi \mid (\exists n)\ d(f\restr n) \geq 1 \}.\]
\end{definition}

In the remainder of this section, we establish the analogue of Ville's theorem \cite{Vill39} for measure in $\bfPi$ and permutation martingales.

\begin{theorem}\label{th:ville_permutation_measure}
	The following statements are equivalent for every $X \subseteq \bfPi$:
	\begin{enumerate}
		\item $X$ has measure 0.
		\item For every $\epsilon > 0$, there is a permutation martingale $d$ with $d(\lambda) < \epsilon$ and $X \subseteq S^1[d]$.
		\item There is a permutation martingale $d$ with $X \subseteq S^\infty[d]$.
	\end{enumerate}
\end{theorem}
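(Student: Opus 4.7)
The plan is to mirror the classical proof of Ville's theorem \cite{Vill39}, adapting every step to the non-binary branching of the permutation tree. I would prove the three implications cyclically as $(1) \Rightarrow (2) \Rightarrow (3) \Rightarrow (1)$, relying throughout on the splitting identity $\mu((h,x)) = \mu(h)/|\free(h)|$ for $x \in \free(h)$, which follows immediately from the definition of $\mu$ on $\bfPi$ and is exactly what makes the averaging condition of a permutation martingale coincide with conditional expectation in $(\bfPi, \calF_\bfPi, \mu)$.

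For $(1) \Rightarrow (2)$, given an open cover $\{g_i\}_{i \in \N}$ of $X$ with $\sum_i \mu(g_i) < \epsilon$, I would attach to each $g_i$ the single-cylinder function $d_{g_i} : \PFPi \to [0,\infty)$ defined by $d_{g_i}(h) = \mu(g_i)/\mu(h)$ if $h \prefix g_i$, $d_{g_i}(h) = 1$ if $g_i \prefix h$, and $d_{g_i}(h) = 0$ otherwise. For a strict prefix $h \prefix g_i$, exactly one of the $|\free(h)|$ successors $(h,x)$ stays on the path to $g_i$ and the rest take value $0$, so the averaging condition collapses to the splitting identity; the two remaining cases are immediate. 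Setting $d = \sum_i d_{g_i}$ yields a permutation martingale with $d(\emptylist) = \sum_i \mu(g_i) < \epsilon$, and $d(f \restr |g_i|) \ge 1$ whenever $f \in \cyl{g_i}$, so $X \subseteq S^1[d]$.

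For $(2) \Rightarrow (3)$, I would apply (2) with $\epsilon_k = 4^{-k}$ to obtain martingales $d_k$ with $d_k(\emptylist) < 4^{-k}$ and $X \subseteq S^1[d_k]$, and set $d = \sum_{k \ge 1} 2^k d_k$. Linearity preserves the averaging condition, $d(\emptylist) < \sum_{k \ge 1} 2^{-k} = 1$, and for every $f \in X$ and every $k$ there is an $n_k$ with $d_k(f \restr n_k) \ge 1$, so $d(f \restr n_k) \ge 2^k$, forcing $\limsupN d(f \restr N) = \infty$. For $(3) \Rightarrow (1)$, I would establish a Doob-style maximal inequality: for any $\alpha > 0$ the antichain $G_\alpha = \{g \in \PFPi \mid d(g) \ge \alpha \text{ and } d(h) < \alpha \text{ for every proper } h \prefix g\}$ satisfies $\sum_{g \in G_\alpha} \mu(g) \le d(\emptylist)/\alpha$. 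For any finite $F \subseteq G_\alpha$, choosing $N$ larger than every length in $F$ and iterating the averaging condition gives $\sum_{|g'|=N} d(g') \mu(g') = d(\emptylist)$; since the level-$N$ descendants of distinct elements of $F$ are disjoint, $\sum_{g \in F} d(g) \mu(g) \le d(\emptylist)$, whence $\sum_{g \in F} \mu(g) \le d(\emptylist)/\alpha$, and letting $F$ exhaust $G_\alpha$ yields the bound. Choosing $\alpha = d(\emptylist)/\epsilon$ then produces a cover of $S^\infty[d] \supseteq X$ of total mass $< \epsilon$, proving $X$ has measure zero.

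The main obstacle is the careful bookkeeping of the branching factor $|\free(h)|$, which replaces the uniform factor $2$ of Cantor space and drops abruptly from $1$ back to $2^{n+1}$ between length classes. Once the splitting identity $\mu((h,x)) = \mu(h)/|\free(h)|$ is fixed and $\sigma$-additivity of $\mu$ on $\calF_\bfPi$ (guaranteed by Carath\'eodory's extension theorem) is invoked, the classical Ville/Doob arguments transfer verbatim up to notation.
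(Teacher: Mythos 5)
Your proposal is correct and follows essentially the same path as the paper's proof: construct a single-cylinder martingale for each element of an open cover and sum them for $(1)\Rightarrow(2)$, sum countably many small-capital unitary-success martingales for $(2)\Rightarrow(3)$, and invoke a Kraft-style maximal inequality over the antichain of first-crossing prefixes for $(3)\Rightarrow(1)$. The only cosmetic differences are that the paper handles $(2)\Rightarrow(3)$ by assuming WLOG each $d_k$ freezes once it reaches $1$ (rather than your rescaling by $2^k$), and it modularizes the Kraft bound through a separate ``premeasure'' lemma where you unroll the averaging condition directly; both variants are standard and equally valid.
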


First, we need a few lemmas.

\begin{lemma}\label{le:permutation_martingale_cylinder}
	If $g \in \PFPi$, then there is a permutation martingale $d_g$ with $d_g(\lambda) = \mu(g)$ and $S^1(d_g) = \cyl{g}$.
\end{lemma}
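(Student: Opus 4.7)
My plan is to define $d_g$ as the conditional probability that a random permutation lies in $\cyl{g}$, given the prefix partial permutation seen so far. Concretely, for $h \in \PFPi$ I would set
\[
  d_g(h) \;=\; \begin{cases}
    \mu(g)/\mu(h) & \text{if } h \text{ is a prefix of } g, \\
    1 & \text{if } g \text{ is a proper prefix of } h, \\
    0 & \text{otherwise, i.e., when } h \text{ and } g \text{ are incomparable.}
  \end{cases}
\]
Since $\mu(\lambda) = 1$ (the empty product in the formula of Section~3.1), this immediately gives $d_g(\lambda) = \mu(g)/1 = \mu(g)$, and the two branches agree at the boundary $h = g$.

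Second, I would verify the martingale averaging condition at every $h$. The two ``absorbing'' cases are trivial: if $g$ is a proper prefix of $h$, then every extension $(h,x)$ still has $g$ as a prefix, so $d_g(h,x) = 1$ uniformly; and if $h$ is incomparable with $g$, then every extension remains incomparable, so $d_g(h,x) = 0$ uniformly. The only substantive case is $h$ a proper prefix of $g$. Writing $x^* = g(s_{|h|})$, the extension $(h,x^*)$ is again a prefix of $g$, while every other $x \in \free(h)\setminus\{x^*\}$ produces an $(h,x)$ that disagrees with $g$ at position $|h|$ and is therefore incomparable with $g$. The averaging sum collapses to
\[
  \frac{1}{|\free(h)|}\,d_g(h,x^*) \;=\; \frac{1}{|\free(h)|}\cdot \frac{\mu(g)}{\mu(h,x^*)}.
\]
Everything hinges on the identity $\mu(h,x) = \mu(h)/|\free(h)|$ for each $x \in \free(h)$, and I expect this is the only step requiring genuine care. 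It follows from a short direct calculation with the two-branch product formula for $\mu$ in Section~3.1: if $|h| = 2^n-1+m$ with $0 \le m < 2^n$, then $|\free(h)| = 2^n - m$ and appending one value simply replaces the factor $(2^n-m)!/(2^n)!$ by $(2^n-m-1)!/(2^n)!$, producing the ratio $1/(2^n-m)$; the boundary case $m+1 = 2^n$ collapses to $1/|\free(h)|=1$ consistently with formula~1. Substituting reduces the averaging sum to $\mu(g)/\mu(h) = d_g(h)$.

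Finally, to establish $S^1[d_g] = \cyl{g}$: if $f \in \cyl{g}$, then $f \restr |g| = g$ and $d_g(f \restr |g|) = 1$, so $f \in S^1[d_g]$. Conversely, suppose $f \notin \cyl{g}$ and let $N < |g|$ be the least position at which $f(s_N) \neq g(s_N)$. For every $n \leq N$ the prefix $f \restr n$ equals $g \restr n$ and is a proper prefix of $g$, so $d_g(f \restr n) = \mu(g)/\mu(f \restr n) < 1$ because $\mu$ strictly decreases under proper extension; for every $n > N$ the prefix is incomparable with $g$, so $d_g(f \restr n) = 0$. Hence $d_g(f \restr n) < 1$ for every $n$, giving $f \notin S^1[d_g]$. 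The substantive content of the whole argument is really just the one multiplicative identity; the rest is bookkeeping over the three cases in the definition.
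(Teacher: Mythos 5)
Your construction is essentially the paper's: the paper defines $d_g$ in one line as the conditional probability $\Pr_{|h|=N}\bigl[\,h=g \mid x \prefix h\,\bigr]$ with $h$ drawn uniformly from prefix partial permutations of length $N=|g|$, and your three-case formula $\mu(g)/\mu(h)$, $1$, $0$ is exactly that conditional probability written out, together with the detailed check of the averaging condition via $\mu(h,x) = \mu(h)/|\free(h)|$. So same approach, more spelled out.

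One stated justification in the last paragraph is imprecise and worth correcting: $\mu$ does \emph{not} strictly decrease under every proper extension, since at a forced step (when $|\free(h)| = 1$, which occurs exactly when the last string of a given length is being assigned) one has $\mu(h,x) = \mu(h)$. The inequality $d_g(f\restr n) < 1$ for $n \le N$ is nonetheless correct, for a slightly different reason: writing $\mu(g)/\mu(f\restr n) = \prod_{i=n}^{|g|-1} |\free(g\restr i)|^{-1}$ (valid since $f\restr n = g\restr n$), note that $N$ lies in the index range and $|\free(g\restr N)| \ge 2$, because $f(s_N)$ and $g(s_N)$ are two distinct elements of $\free(g\restr N)$. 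Hence the product includes a factor $\le 1/2$, giving $d_g(f\restr n) \le 1/2 < 1$.
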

\begin{proof}Let $N = |g|$ and define $d_g(x) = \PR{|h|=N}{g \mid x \prefix h}$, where we choose $h \in \PFPi$ of length $N$ uniformly at random.
\end{proof}

A {\em premeasure} on $\bfPi$ is a function $\rho: \PFPi \to [0,1]$ such that $\rho(\lambda) = 1$ and for all $g \in \PFPi$,
$$\rho(g) = \sum_{w \in \free(g)} \rho(g,w).$$
A {\em prefix set} in $\PFPi$ is a set $W \subseteq \PFPi$ such that no element of $W$ is a prefix of any other element.

\begin{lemma}\label{le:kraft_inequality_PPF}
	If $W \subseteq \PFPi$ is a prefix-free set and $\rho$ is any premeasure on $\bfPi$, then
	$$\sum_{g \in W} \rho(g) \leq 1.$$
\end{lemma}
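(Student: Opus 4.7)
The plan is to reduce to a finite, single-length computation via the standard Kraft argument adapted to the permutation setting. First I would observe that by monotone convergence (all terms are nonnegative), it suffices to prove the inequality for every finite prefix-free $W \subseteq \PFPi$; the infinite case then follows by taking a supremum over finite subsets.

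Next, I would establish the key iterated averaging lemma: for every $g \in \PFPi$ and every $M \geq |g|$,
\[
\rho(g) = \sum_{\substack{h \in \PFPi \\ g \prefix h,\ |h|=M}} \rho(h).
\]
This follows by induction on $M - |g|$. The base case $M = |g|$ is trivial, and the inductive step applies the premeasure condition $\rho(h) = \sum_{w \in \free(h)} \rho(h,w)$ to each length-$M$ extension, where the extensions $(h,w)$ ranging over $w \in \free(h)$ are exactly the length-$(M+1)$ extensions of $h$. Iterating the premeasure identity from the empty list $\emptylist$ also gives $\sum_{|h|=M} \rho(h) = \rho(\emptylist) = 1$ for every $M$.

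Now given a finite prefix-free $W$, let $M = \max_{g \in W} |g|$. For each $g \in W$, let $E(g)$ be the set of length-$M$ prefix partial permutations extending $g$. By the iterated identity,
\[
\sum_{g \in W} \rho(g) = \sum_{g \in W} \sum_{h \in E(g)} \rho(h).
\]
The crucial combinatorial point is that the sets $\{E(g) : g \in W\}$ are pairwise disjoint: if some length-$M$ prefix partial permutation $h$ extended two distinct $g_1, g_2 \in W$, then one of $g_1, g_2$ would be a prefix of the other, contradicting prefix-freeness. Hence the right-hand side is a sum of $\rho(h)$ over a subset of the length-$M$ prefix partial permutations, which is bounded above by $\sum_{|h|=M} \rho(h) = 1$.

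The only subtlety I expect is making sure the inductive expansion really enumerates every length-$(k+1)$ extension of $g$ exactly once, which is immediate from the definition of $\free(h)$: every length-$(k+1)$ extension of $g$ has the form $(h,w)$ for a unique length-$k$ extension $h$ and a unique $w \in \free(h)$. Everything else is routine bookkeeping, so there is no serious obstacle beyond stating this expansion cleanly.
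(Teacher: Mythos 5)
Your proof is correct and is essentially the rigorous version of what the paper gives: the paper's proof simply asserts that a prefix-free $W$ yields a disjoint collection of cylinders whose total $\rho$-mass is at most $1$, adding parenthetically that ``a rigorous proof may be given using induction.'' Your iterated-expansion argument (pushing each $\rho(g)$ down to a common length $M$, noting the resulting sets of length-$M$ extensions are disjoint by prefix-freeness, and bounding by $\sum_{|h|=M}\rho(h)=\rho(\emptylist)=1$) is exactly that induction carried out in detail.
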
\begin{appendixproof}[Proof of \Cref{le:kraft_inequality_PPF}]
	Because $W$ provides a disjoint collection of cylinders, the total measure of the cylinders is at most $1$. (A rigorous proof may be given using induction.)
\end{appendixproof}

\begin{lemma}\label{le:kraft_inequality_martingales}
	If $W$ is a prefix-free set and $d$ is any permutation martingale, then $$\sum_{g\in W} \mu(g) d(g) \leq d(\emptylist).$$
\end{lemma}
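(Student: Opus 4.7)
The plan is to derive the inequality from \Cref{le:kraft_inequality_PPF} by exhibiting $g\mapsto \mu(g)\,d(g)$, suitably normalized, as a premeasure on $\bfPi$. First dispose of the degenerate case $d(\emptylist)=0$: the averaging condition and a short induction on $|g|$ force $d\equiv 0$, so the left-hand side of the claim vanishes and the inequality is trivial. Assume henceforth $d(\emptylist)>0$ and set $\rho(g)=\mu(g)\,d(g)/d(\emptylist)$.

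The key step is to verify the identity
\[
\mu(g,x)\;=\;\frac{\mu(g)}{|\free(g)|}\qquad\text{for every }g\in\PFPi,\ x\in\free(g).
\]
This follows directly from the explicit formula for $\mu$. Inside a length class, if $|g|=2^n-1+m$ with $0\le m<2^n-1$, then $|\free(g)|=2^n-m$ and the identity $P(2^n,m+1)=(2^n-m)\,P(2^n,m)$ gives $\mu(g,x)/\mu(g)=1/(2^n-m)$. At the length boundary $m+1=2^n$ one has $|\free(g)|=1$ and $P(2^n,2^n-1)=(2^n)!$, so the newly introduced factor $1/(2^n)!$ in the product form of $\mu(g,x)$ matches $\mu(g)$ exactly, and the identity again holds.

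With this identity in hand, multiplying the averaging condition $d(g)=\tfrac{1}{|\free(g)|}\sum_{x\in\free(g)}d(g,x)$ through by $\mu(g)/d(\emptylist)$ yields $\rho(g)=\sum_{x\in\free(g)}\rho(g,x)$, which is the additivity clause in the definition of a premeasure. Since $\mu(\emptylist)=1$, we also have $\rho(\emptylist)=1$, so $\rho$ is indeed a premeasure on $\bfPi$. Applying \Cref{le:kraft_inequality_PPF} to $W$ and $\rho$ gives $\sum_{g\in W}\rho(g)\le 1$, and multiplying by $d(\emptylist)$ yields $\sum_{g\in W}\mu(g)\,d(g)\le d(\emptylist)$.

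The only step that warrants real care is the boundary case in the verification of $\mu(g,x)=\mu(g)/|\free(g)|$, since $\mu$ is defined piecewise and the two formulas must agree as one crosses from $|g|=2^{n+1}-2$ to $|g|=2^{n+1}-1$. This is the single place where the specific nonuniform structure of the permutation cylinders plays a substantive role, and it is the natural candidate for a bookkeeping slip; every other part of the argument is a direct transcription of the classical Kraft-style martingale inequality into the permutation setting via the correspondence $\rho=\mu\cdot d/d(\emptylist)$.
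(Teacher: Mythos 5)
Your proof is correct and follows the same approach as the paper: reduce to \Cref{le:kraft_inequality_PPF} by showing that $\mu\cdot d$, after normalization, is a premeasure. The paper's proof is a one-liner asserting that $\mu(g)d(g)$ is itself a premeasure, which strictly speaking requires $d(\emptylist)=1$; your normalization $\rho = \mu\cdot d / d(\emptylist)$ and the separate treatment of $d(\emptylist)=0$ fix this small imprecision, and your verification of $\mu(g,x)=\mu(g)/|\free(g)|$ (including the length-boundary case) supplies the calculation the paper leaves implicit.
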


\begin{proof}
	The function $\mu(g)d(g)$ is a premeasure on $\bfPi$, so this follows from \Cref{le:kraft_inequality_PPF}
\end{proof}

For $k \geq 1$, let $S^k[d] =\{ \pi \in \bfPi \mid (\exists n) d(\pi \restr n) \geq k \}$.
\begin{lemma}\label{le:permutation_martingale_conservation}
	For any permutation martingale and $k \geq 1$, $$\mu(S^k[d]) \leq \frac{d(\emptylist)}{k}.$$
	and there is an open cover $W_k$ with $\mu(W_k) = \mu(S^k[d])$.
\end{lemma}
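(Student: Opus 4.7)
The plan is to mimic the classical Kolmogorov-inequality argument for martingales in Cantor space: produce a canonical prefix-free cover of $S^k[d]$ via the first time the martingale crosses the threshold $k$, and then bound its total measure using \Cref{le:kraft_inequality_martingales}.

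Concretely, I would define
\[
W_k = \{\, g \in \PFPi \mid d(g) \geq k \text{ and } d(g') < k \text{ for every proper prefix } g' \prefix g \,\}.
\]
By minimality, no element of $W_k$ is a prefix of another, so $W_k$ is prefix-free. Next I would verify the exact equality $S^k[d] = \bigcup_{g \in W_k} \cyl{g}$, which is what gives the stated equality $\mu(W_k) = \mu(S^k[d])$ (interpreting $\mu(W_k)$ as $\mu(\bigcup_{g \in W_k} \cyl{g})$). For ``$\subseteq$'', given $\pi \in S^k[d]$, pick the least $n_0$ with $d(\pi \restr n_0) \geq k$; then $\pi\restr n_0 \in W_k$ and $\pi \in \cyl{\pi \restr n_0}$. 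For ``$\supseteq$'', if $\pi \in \cyl{g}$ with $g \in W_k$, then $d(\pi \restr |g|) = d(g) \geq k$, so $\pi \in S^k[d]$. Because $W_k$ is prefix-free the union is disjoint, giving
\[
\mu(S^k[d]) = \sum_{g \in W_k} \mu(g) = \mu(W_k).
\]

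Finally, applying \Cref{le:kraft_inequality_martingales} to the prefix-free set $W_k$ yields
\[
\sum_{g \in W_k} \mu(g)\, d(g) \leq d(\emptylist).
\]
Since $d(g) \geq k$ for every $g \in W_k$, this gives $k \cdot \mu(S^k[d]) = k \sum_{g \in W_k} \mu(g) \leq d(\emptylist)$, hence $\mu(S^k[d]) \leq d(\emptylist)/k$, as required.

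There is no real obstacle here: the argument is a direct transcription of the classical result to the permutation setting, with \Cref{le:kraft_inequality_martingales} doing all of the combinatorial work. The only point needing a moment of care is the measurability of $S^k[d]$, and that is immediate from the representation above, since $W_k \subseteq \PFPi$ is at most countable and $S^k[d]$ is therefore a countable union of basic cylinders in $\calF_\bfPi$.
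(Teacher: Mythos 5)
Your proof is correct and follows essentially the same approach as the paper: define $W_k$ as the set of minimal prefix partial permutations where $d$ first reaches $k$, observe that this prefix-free set covers $S^k[d]$ exactly, and apply \Cref{le:kraft_inequality_martingales} to bound $k\,\mu(W_k) \leq d(\emptylist)$. No differences worth noting.
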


\begin{proof}Let $d$ be a permutation martingale.
	Define for each $k \geq 1$, $$W_k = \{ g \in \PFPi \mid d(g) \geq k \textrm{ and }d(h) < k\textrm{ for all proper prefixes $h$ of $g$} \}.$$
	Then $$S^k[d] = \bigcup_{g \in W_k} \cyl{g},$$
	so $\mu(W_k) = \mu(S^k[d]).$
	By  \Cref{le:kraft_inequality_martingales},
	$$d(\emptylist) \geq \sum_{g \in W_k} \mu(g)d(g) \geq \sum_{g \in W_k} \mu(g) k = k \sum_{g\in W_k} \mu(g) = k \mu(W_k).$$
\end{proof}

We are now ready to prove \Cref{th:ville_permutation_measure}.

\begin{proof}[Proof of \Cref{th:ville_permutation_measure}]
	Suppose 1 is true. Let $G$ be a covering of $X$ with $\mu(G) < \epsilon$.
	Then define $d$ by $d = \sum\limits_{g \in G} d_g$ where each $d_g$ comes from \Cref{le:permutation_martingale_cylinder}. We have $d(\lambda) = \mu(G) < \epsilon$ and $X \subseteq S^1[d]$.

	Suppose 2 is true. For each $k \in \N$, let $d_k$ be a martingale with $d_k(\lambda) < 2^{-k}$ and $X \subseteq S^1[d_k]$. Without loss of generality, we assume that if $d_k(g) \geq 1$, then $d_k(h) = d_k(g)$ for all $g \prefix h$.
	Define $d$ by $d = \sum\limits_{k \in \N} d_k$. Let $A \in X$. For every $k$, there exists $n_k$ such that $d_k(A \restr n_k) \geq 1$. Let $m_k = \max\{n_1, \ldots, n_k\}$. Then $d(A \restr m_k) \geq k$. Since $k$ is arbitrary, $A \in S^\infty[d]$. Therefore $X \subseteq S^\infty[d]$.

	Suppose 3 is true. Let $d$ be a martingale with $d(\lambda) = 1$ and $X \subseteq S^\infty[d]$. Let $\epsilon > 0$ and let $k \geq 1$ such that $\frac{1}{k} < \epsilon$. Let $B_k$ be the set of all shortest $g$ with $d(g) \geq k$. Then $X \subseteq B_k$, $B_k$ is an open set, and $\mu(B_k) \leq \frac{1}{k} < \epsilon$ by \Cref{le:permutation_martingale_conservation}.
\end{proof}

\subsection{A Permutation Martingale Example}

We construct a permutation martingale \(d\) that succeeds on any length-preserving permutation whose restriction to length~\(n\) is a cycle permutation for all but finitely many \(n\).
We partition the initial capital into infinitely many shares \(a_i = 1/i^2\).
For each \(i\), the share \(a_i\) is used to bet on the event that, for all \(n \ge i\), the length-\(n\) restriction of the permutation is a cycle permutation.

The betting strategy is simple: when moving from length \(n-1\) to \(n\), the martingale wagers all relevant capital on the image of the \(n\)-bit string \(1^{n-1}0\). In the final step of forming a cycle of length \(2^n\), there are exactly two choices for the image of \(1^{n-1}0\).
One choice yields a cycle of length \(2^n\); the other does not.
Since it is a binary choice, the martingale places its entire stake \(a_i\) (for all \(i \le n\)) on the cycle outcome, thereby doubling its capital whenever the cycle is formed.

Hence, on any permutation whose restriction to length~\(n\) is a cycle permutation for all but finitely many \(n\), infinitely many of these bets succeed.
Consequently, each of those corresponding shares \(a_i\) grows without bound, and so the overall martingale \(d\) succeeds on all such permutations.

\medskip

\noindent

\medskip

\noindent

\subsection{Permutation Martingales as Random Variables}

Hitchcock and Lutz \cite{Hitchcock:WCCRSM} showed how the martingales used in computational complexity are a special case of martingales used in probability theory. We explain how this extends to permutation martingales. Given a martingale $d : \{0,1\}^* \to [0,\infty)$, Hitchcock and Lutz define the random variable $\xi_{d,n} : \C \to [0,\infty)$ by
$\xi_{d,n}(S) = d(S \restr n)$
for each $n \geq 0$.
Let $\calM_n = \sigma( \{ \C_w \mid w \in \{0,1\}^n \} )$ be the $\sigma$-algebra generated by the cylinders of length $n$.
Then the sequence $(\xi_{d,n} \mid n \geq 0)$ is a martingale in the probability theory sense with respect to the filtration $(\calM_n \mid n \geq 0)$: for all $n \geq 0$,
$E[\xi_{d,n+1} \mid \calM_n] = \xi_{d,n}.$

Similarly, given a permutation martingale $d : \PFPi \to [0,\infty)$, for each $N$ we can define the random variable
$X_{d,N} : \bfPi \to [0,\infty)$ by
$X_{d,N}(f) = d(f \restr N)$ for each $ N \geq 0$.
Let \[\calG_N = \sigma( \{ \cyl{g} \mid g \in \PFPi\textrm{ and }|g| = N \})\] be the $\sigma$-algebra generated by the  cylinders in $\PFPi$ of length $N$.
Then $(X_{d,N}\mid N \geq 0)$ is a martingale in the probability theory sense with respect to the filtration $(\calG_N \mid N \geq 0)$: for all $N \geq 0$,
$E[X_{d,N+1} \mid \calG_N] = X_{d,N}.$

\subsection{Resource-Bounded Permutation Measure}\label{sec:resource-bounded_permutation_measure}

We follow the standard notion of computability for real-valued functions \cite{Lutz:AEHNC} to define resource-bounded permutation martingales.
\begin{definition}\label{def:resource_bounded_permutation_martingales}
	Let $d : \PFPi \to [0,\infty)$ be a permutation martingale.
	\begin{enumerateC}
		\item   $d$ is {\em computable in time $t(n)$} if there  is an exactly computable $\hat{d} : \PFPi \times \N \to \Q$ such
		that for all $f \in \PFPi$ and $r \in \N$, $|d(f) - \hat{d}(f,r)| \leq 2^{-r}$ and $\hat{d}(f,r)$ is
		computable in time $t(|f|+r)$.
		\item   $d$ is {\em computable in space $s(n)$} if there  is an exactly computable $\hat{d} : \PFPi \times \N \to \Q$ such
		that for all $f \in \PFPi$ and $r \in \N$, $|d(f) - \hat{d}(f,r)| \leq 2^{-r}$ and $\hat{d}(f,r)$ is
		computable in space $s(|f|+r)$.
		\item If $d$ is {computable in polynomial time}, then $d$ is a {\em $\p$-permutation martingale}.
		\item If $d$ is {computable in quasipolynomial time}, then $d$ is a {\em $\ptwo$-permutation martingale}.
		\item If $d$ is {computable in polynomial space}, then $d$ is a {\em $\pspace$-permutation martingale}.
		\item If $d$ is {computable in quasipolynomial space}, then $d$ is a {\em $\ptwospace$-permutation martingale}.
	\end{enumerateC}
\end{definition}

We are now ready to define resource-bounded permutation measure.

\begin{definition} Let $\Delta \in \{\p,\ptwo,\pspace,\ptwospace\}$. Let  $X \subseteq \bfPi$ and $X^c = \bfPi - X$ be the complement of $X$ within $\bfPi$.
	\begin{enumerateC}
		\item $X$ has {\em $\Delta$-measure 0}, written $\mu_\Delta(X) = 0$, if there is a $\Delta$-computable permutation martingale $d$ with $X \subseteq S^\infty[d]$.
		\item $X$ has {\em $\Delta$-measure 1}, written $\mu_\Delta(X) = 1$, if $\mu_\Delta(X^c) = 0$
	\end{enumerateC}
\end{definition}

\begin{definition} Let $\Delta \in \{\p,\ptwo,\pspace,\ptwospace\}$.
	A permutation $\pi \in \bfPi$ is {\em $\Delta$-random} if $\pi$ is not contained in any $\Delta$-measure 0 set.
\end{definition}
Equivalently, $\pi$ is $\Delta$-random if no $\Delta$-martingale succeeds on $\pi$.

\subsection{Permutation Betting Games}

Originated in the field of algorithmic information theory, betting games are a generalization of martingales \cite{Muchnik1998,MeMiNiStRe06}, which were introduced to computational complexity by Buhrman et al.~\cite{BvMRSS01}. Similar to martingales, betting games can be thought of as strategies for betting on a binary sequence, except that with betting games we have the additional capability of selecting which position in a sequence to bet on next. In other words, a betting
game is permitted to select strings in a nonmonotone order, with the important restriction that it may not bet on
the same string more than once (see Buhrman et al. \cite{BvMRSS01} for more details).

A permutation betting game is a
generalization of a permutation martingale, implemented by an oracle Turing machine, where it is allowed to select strings
in nonmonotone order. Prefixes of permutation betting games can be represented as {\em ordered partial permutations} defined below.
\begin{definition}
	An {\em ordered partial permutation} is a list
	$g = [(x_1,y_1),\ldots, (x_n, y_n)]$
	of pairs of strings for some $n \geq 0$ where for all $1 \leq i < j \leq n$, $x_i \neq x_j$ and $y_i \neq y_j$, and   $|x_i| = |y_i|$ for all $1 \leq i \leq n$.
	We let $\OPPi$ denote the class of all
		{ordered partial permutations}.
\end{definition}
For a permutation betting game, the averaging
condition takes into consideration the length of the next string to be queried as follows.
Let $w \in \OPPi$ be the list of queried strings
paired with their images, and $a \in \binaryn$ be the next string the betting game will query. Define free$(w,n)$ to be the set of length-$n$ strings that are available for the next function value, i.e., length-$n$ strings that are not the function value of any of the queried strings.
Then the following averaging condition over free strings of length $n$ must hold for the permutation betting game $d : \OPPi \to [0,\infty) $

\[
	d(w) = \sum_{b\in\free(w,n)} \frac{d(w[a, b])}{|\free(w,n)|}\]
where $w[a,b]$ is the list $w$ appended with the pair $(a, b)$.

\begin{definition}
	A betting game is a {\em $t(n)$-time betting game} if for all $n$, all strings of length $n$ have been queried by time $t(2^n)$.
\end{definition}

We define betting game measure 0 and betting game randomness analogously.

\begin{definition} Let $\Delta \in \{\p,\ptwo,\pspace,\ptwospace\}$. \begin{enumerate}
		\item A class $X\subseteq \bfPi$ has {\em $\Delta$-betting-game measure 0} if there is a $\Delta$-computable permutation betting game $d$ with $X \subseteq S^\infty[d]$.
		\item    A permutation $\pi \in \bfPi$ is {\em $\Delta$-betting game random} if no $\Delta$-betting game succeeds on $\pi$.
	\end{enumerate}
\end{definition}

{\subsection{Measure Conservation}

Lutz's Measure Conservation Theorem implies that resource-bounded measure gives nontrivial notions of measure within exponential-time complexity classes: $\mup(\E) \neq 0$ and $\muptwo(\EXP) \neq 0$.
Let $\permE$ be the class of length-preserving permutations that can be
computed in $2^{O(n)}$ time and $\permEXP$ be the class of length-preserving
permutations that can be computed in $2^{n^{O(1)}}$ time.
We show that our notions of permutation measure have conservation theorems within these classes of exponential-time computable permutations.

\begin{lemma}
	\label{lem:perm_p_i}
	For any $t(2^n)$-time permutation martingale $D$, we can construct a $O(2^{2n}t(2^n + 2^{2n}))$-time permutation that is not succeeded on by $D$.
\end{lemma}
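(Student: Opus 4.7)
The plan is a greedy diagonalization along the standard enumeration $s_0,s_1,s_2,\ldots$ of $\{0,1\}^{*}$. We construct $\pi$ one function value at a time. At step $i$, having already committed to the prefix partial permutation $g_i = [\pi(s_0),\ldots,\pi(s_{i-1})]\in\PFPi$, we must select $\pi(s_i)\in\free(g_i)$. The averaging condition for permutation martingales,
\[ D(g_i)=\frac{1}{|\free(g_i)|}\sum_{x\in\free(g_i)}D(g_i,x), \]
ensures that at least one $x\in\free(g_i)$ satisfies $D(g_i,x)\leq D(g_i)$; we take any such $x$ as $\pi(s_i)$. Then $D(\pi\restr N)$ is nonincreasing in $N$, so $\limsup_N D(\pi\restr N)\leq D(\emptylist)<\infty$ and $D$ does not succeed on $\pi$. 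The sequence $\pi$ so produced lies in $\bfPi$ because $\free(g_i)$ consists of length-$|s_i|$ strings not yet used as images.

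Because $D$ is only accessible through an approximation $\hat D(\cdot,r)$ with $|\hat D(f,r)-D(f)|\leq 2^{-r}$, we refine the greedy step with a geometrically tightening tolerance. At step $i$ we set precision $r_i=i+2$ and take $\pi(s_i)\in\free(g_i)$ minimizing $\hat D((g_i,x),r_i)$. Comparing with an exact minimizer $x^{\mathrm{opt}}$ (which satisfies $D(g_i,x^{\mathrm{opt}})\leq D(g_i)$ by averaging), the triangle inequality gives $D(g_i,\pi(s_i))\leq D(g_i)+2\cdot 2^{-r_i}$. Telescoping yields $D(\pi\restr N)\leq D(\emptylist)+\sum_{i=0}^{\infty}2^{-i-1}=D(\emptylist)+1$ for every $N$, so $D$ still fails to succeed on $\pi$.

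For the time bound, to evaluate $\pi(s)$ with $|s|=n$ we simulate the construction for all indices $i<2^{n+1}$. At a step with $|s_i|=m\leq n$ we make at most $|\free(g_i)|\leq 2^m$ calls to $\hat D$. Each call is on an input whose size is the bit-length of $g_i$ extended by one entry plus the precision parameter $r_i=O(2^n)$, bounded generously by $O(2^n+2^{2n})$. Summing, the total number of $\hat D$-calls is $\sum_{m=0}^{n}2^m\cdot 2^m=O(2^{2n})$, and each takes time $t(2^n+2^{2n})$, giving total running time $O(2^{2n}\,t(2^n+2^{2n}))$ for the construction of $\pi$ restricted to $\{0,1\}^{\leq n}$.

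The main obstacle is managing the approximation error: a naive greedy choice with $\hat D$ could allow errors to accumulate and, in principle, let $D$ succeed on $\pi$. The geometric precision schedule $r_i=i+2$ keeps the accumulated slack bounded by a constant, which is essential. The remaining ingredients—verifying that $\pi\in\bfPi$, the telescoping inequality, and the time accounting—are routine once the tolerant greedy step is in place.
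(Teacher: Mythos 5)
Your proof is correct and follows essentially the same approach as the paper's: a greedy diagonalization that at each step selects, via the approximate martingale $\hat D$, a free image that (nearly) minimizes $D$'s capital, with a precision schedule chosen so the accumulated slack stays bounded. The only cosmetic difference is the precision schedule—you set $r_i=i+2$ indexed by step, while the paper uses error $2^{-2n}$ indexed by string length—but both sum to a constant and yield the same $O(2^{2n}\,t(2^n+2^{2n}))$ time bound.
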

\begin{appendixproof}[Proof of \Cref{lem:perm_p_i}]
	Let $Q$ be a Turing machine that operates as follows on input $x \in \{0,1\}^n$.
	The machine $Q$ simulates the permutation martingale $D$ on prefix partial
	permutations starting from the empty list and ending when the image of $x$ is added
	to the list. During the simulation, whenever $D$ bets on a string $y$, $Q$ maps $y$
	to the first string to which no other string has been mapped, and for which $D$’s
	capital decreases the most, up to an additive error of $2^{-2n}$. After this
	simulation ends, $D$ outputs the string assigned to $x$ by this simulation. Note that when computing $Q(x)$, we first compute $Q(y)$ for all $y<x$.

	Clearly, $Q$ computes a permutation on which $D$’s capital never exceeds
	its initial value plus \[\sum_{n=1}^\infty\sum_{x\in\{0,1\}^n}2^{-2n} = \sum_{n=1}^\infty 2^{-n} = 1,\] so $D$ cannot succeed
	on this permutation. The runtime of $Q$ on input $x$ is
	$O(2^{2n}t(2^n + 2^{2n}))$, since finding the image that minimizes $D$’s capital on a length-$n$ string $x$ requires scanning $O(2^n)$ candidates for each of the $O(2^n)$ strings preceding $x$, and each evaluation of the martingale’s value (to within an additive error of $2^{-2n}$)
	takes $t(2^n + 2^{2n})$ time.
\end{appendixproof}

The following theorem follows from \Cref{lem:perm_p_i}.
\begin{theorem}\label{thm:measure_conservation_1}
	\begin{enumerate}
		\item  $\permE$ does not have $\p$-permutation measure 0.
		\item  $\permEXP$ does not have $\p_2$-permutation measure 0.
	\end{enumerate}
\end{theorem}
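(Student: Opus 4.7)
The plan is to derive both parts of the theorem directly from \Cref{lem:perm_p_i} by a contradiction argument: if the relevant exponential-time permutation class had measure zero, then the diagonalizing permutation constructed in the lemma out of the witnessing martingale would itself land inside that class while simultaneously escaping the martingale's success set.

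For part~1, I would begin by assuming for contradiction that $\mup(\permE) = 0$. By the definition of resource-bounded permutation measure, there exists a $\p$-permutation martingale $D$ such that $\permE \subseteq S^\infty[D]$. By \Cref{def:resource_bounded_permutation_martingales}(1), $D$ admits an exactly computable approximation $\hat{D}(g,r)$ of error at most $2^{-r}$, running in time polynomial in $|g|+r$. A prefix partial permutation that has committed to the images of all strings up through length $n$ has bit-length $\sum_{k=0}^{n} k\cdot 2^{k} = O(n 2^{n})$. Hence $D$ qualifies as a $t(2^{n})$-time permutation martingale with $t(m)=m^{O(1)}$. Plugging this into \Cref{lem:perm_p_i}, the permutation $Q$ it constructs is computable in time $O\bigl(2^{2n}\,t(2^{n}+2^{2n})\bigr) = O\bigl(2^{2n}\cdot (2^{2n})^{O(1)}\bigr) = 2^{O(n)}$. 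Therefore $Q \in \permE$. But by the construction in \Cref{lem:perm_p_i}, $D$ does not succeed on $Q$, contradicting $\permE \subseteq S^\infty[D]$.

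Part~2 proceeds identically, only with a coarser time bound. Assuming $\muptwo(\permEXP)=0$, one obtains a $\ptwo$-permutation martingale $D$ that succeeds on all of $\permEXP$. Now $\hat{D}(g,r)$ runs in quasipolynomial time, so as a function of the maximal betting length $n$ we have $t(m) = 2^{(\log m)^{O(1)}}$. Feeding this into \Cref{lem:perm_p_i} yields a permutation $Q$ computable in time $O\bigl(2^{2n}\cdot 2^{(\log(2^{n}+2^{2n}))^{O(1)}}\bigr) = 2^{n^{O(1)}}$, hence $Q \in \permEXP$, again contradicting $\permEXP \subseteq S^\infty[D]$.

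Because the heavy lifting—the explicit adaptive diagonalization that constructs $Q$ and controls the accumulated capital of $D$ by a convergent $\sum 2^{-n}$ series—is already done inside \Cref{lem:perm_p_i}, the only real content here is a careful bookkeeping of the time overhead. The one step that deserves attention is verifying that the bit-size of a prefix partial permutation through length $n$ is $O(n 2^{n})$, so that a $\p$-permutation martingale can be treated as $t(2^{n})$-time with $t$ polynomial; this is the sole point where one must be precise about the interplay between the polynomial size parameter used in \Cref{def:resource_bounded_permutation_martingales} and the exponential-length parameter $2^{n}$ that appears in \Cref{lem:perm_p_i}. I do not foresee any conceptual obstacle beyond this routine translation.
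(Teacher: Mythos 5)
Your proposal is correct and takes essentially the same route as the paper, which proves the theorem simply by invoking \Cref{lem:perm_p_i}; you are merely filling in the routine bookkeeping (translating the polynomial/quasipolynomial bound on $\hat{d}(f,r)$ as a function of $|f|\approx 2^{n}$ into the $t(2^{n})$ parameterization of the lemma and checking the resulting permutation lands in $\permE$, respectively $\permEXP$) that the paper leaves implicit.
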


Proving similar results for betting games turns out to be more challenging, given that they are allowed to bet on strings in an adaptive order. To address this, we define the following class of {\em honest} betting games.

\begin{definition}
	A {\em $\log(t(2^n))$-honest $t(n)$-permutation betting game} is a $t(n)$-time permutation betting game
	such that for all languages $A$, for all $n$, all non-zero bets by time $t(2^n)$ are for strings of length at most $\log(t(2^n))$.

\end{definition}

We use this definition in the following lemma.
\begin{lemma}
	\label{lem:perm_bg_p_i}
	For any $\log(t(2^n))$-honest $t(2^n)$-permutation betting game $G$, we can
	construct a $O\bigr(t(2^n)^2\, t(t(2^n) + t(2^n)^2)\bigl)$-time permutation that is not succeeded on by $G$.
\end{lemma}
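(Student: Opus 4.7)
The plan is to parallel the construction of \Cref{lem:perm_p_i}, adapted to the adaptive nature of betting games. I define a Turing machine $Q$ that on input $x \in \{0,1\}^n$ simulates $G$ from the empty ordered partial permutation, adversarially answering each of $G$'s queries by the image that minimizes $G$'s capital (breaking ties lexicographically), and finally outputs the image eventually assigned to $x$. Starting with $w = \emptylist$, $Q$ repeatedly obtains $G$'s next query $a$ of length $\ell$; for each $b \in \free(w,\ell)$ it computes an approximation $\hat d(w[a,b])$ within $2^{-2\ell-2}$ of $d(w[a,b])$, selects the minimizer $b^\ast$, sets $\pi(a) = b^\ast$, updates $w \leftarrow w[a,b^\ast]$, and continues until $a = x$. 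Since $G$ is a $t(n)$-time betting game, $x$ is queried within simulation time $t(2^n)$; since the adversarial strategy is deterministic, invocations of $Q$ on different inputs produce consistent values, defining a single $\pi \in \bfPi$.

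To see that $G$ does not succeed on $\pi$, I invoke the averaging condition for betting games: at every reached state $(w,a)$,
\[
d(w) = \frac{1}{|\free(w,\ell)|} \sum_{b \in \free(w,\ell)} d(w[a,b]),
\]
so $\min_b d(w[a,b]) \leq d(w)$. Combining the two approximation errors (computing $\hat d$ to within $2^{-2\ell-2}$ and taking the min of the $\hat d$-values) gives $d(w[a,b^\ast]) \leq d(w) + 2^{-2\ell-1}$. Summing the excess over the at most $2^\ell$ possible length-$\ell$ queries and over all $\ell \geq 1$ bounds the total excess capital by $\sum_{\ell \geq 1} 2^\ell \cdot 2^{-2\ell-1} = 1/2$. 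Therefore $G$'s capital along $\pi$ never exceeds $d(\emptylist) + 1/2$, and $G$ fails to succeed on $\pi$.

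For the running time, the honesty hypothesis ensures that within simulation time $t(2^n)$ the betting game only bets on strings of length at most $\log(t(2^n))$, so $Q$ processes at most $t(2^n)$ queries, each with at most $2^\ell \leq t(2^n)$ free candidate images. At each step the history $w$ contains at most $t(2^n)$ pairs of strings of length $\leq \log(t(2^n))$, so $|w| = O(t(2^n)^2)$; each evaluation of $\hat d(w[a,b])$ therefore runs in time at most $t(|w[a,b]| + 2\ell + 2) = O(t(t(2^n) + t(2^n)^2))$. Multiplying $t(2^n)$ queries by $t(2^n)$ candidate images per query by this per-evaluation cost yields the claimed $O(t(2^n)^2 \cdot t(t(2^n) + t(2^n)^2))$ bound for $Q$.

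I expect the principal difficulty to be bookkeeping rather than any new probabilistic insight: one must verify that the adversarial strategy, as a function of $G$ alone, defines a single coherent permutation across all calls to $Q$, and that the histories passed to $G$ never exceed polynomial size in $t(2^n)$. The honesty condition is precisely what is needed here, because it keeps query lengths logarithmic in the simulation time and hence the cumulative history length polynomial in $t(2^n)$; without the honesty restriction the adversarial construction could be charged exponentially in the worst case, and the quadratic blow-up from the history length together with the $2^\ell$ candidate images would not close.
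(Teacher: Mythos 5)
Your construction and capital-bounding argument mirror the paper's proof and are sound: simulate $G$ from the empty ordered partial permutation, adversarially answer each query with an image that approximately minimizes $G$'s capital, and bound the cumulative approximation excess by a geometric sum. That part is fine and matches the paper in spirit.

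The gap is in the treatment of \emph{queries to long strings}, and it is a real one. The honesty hypothesis says only that all \emph{non-zero bets} by time $t(2^n)$ are on strings of length at most $\log(t(2^n))$; it places no bound on the lengths of strings the game may \emph{query}. Your runtime step asserts ``the honesty hypothesis ensures that within simulation time $t(2^n)$ the betting game only bets on strings of length at most $\log(t(2^n))$, so $Q$ processes at most $t(2^n)$ queries, each with at most $2^\ell \leq t(2^n)$ free candidate images'' --- but this silently converts a bound on bet-lengths into a bound on query-lengths. If $G$ queries a string of length $\ell > \log(t(2^n))$ (placing a zero bet), your algorithm as written iterates over $|\free(w,\ell)|$ candidate images, which may be as large as $2^\ell$ and hence far exceeds $t(2^n)$; the claimed time bound does not follow. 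The paper avoids this by treating long queries as a separate case: because the bet there is zero, the capital is unaffected by the answer, so the construction just returns a canonical free image (the first unused string of the appropriate length) without any minimization search, in $O(t(2^n))$ time by scanning the at most $t(2^n)$ entries in the history. Once you add this case split, the rest of your analysis --- the history size bound $O(t(2^n)^2)$, the per-evaluation cost $O(t(t(2^n)+t(2^n)^2))$, and the capital bound $d(\emptylist) + 1/2$ --- goes through and recovers the stated bound.
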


\begin{appendixproof}[Proof of \Cref{lem:perm_bg_p_i}]
	This proof is similar to the proof of \Cref{lem:perm_p_i}. So we
	focus on the main difference between the proofs.

	We construct a $\DTIME\bigl(t(2^n)^2\, t(t(2^n) + t(2^n)^2)\bigr)$ permutation using the same
	strategy as was used in the previous proof. The only difference is when the simulated betting game $G$  queries a string of length greater than
	$m = \log(t(2^n))$. In this case, the betting game answers the query with the
	first string of length $m$ that has not yet been made an image. This step
	takes $O(t(2^n))$ time. Since the betting game does not bet on this string,
	its capital is unaffected. When the betting game queries strings of length at
	most $m$, it behaves just like it did in the previous proof, it finds a string
	that minimizes the betting game's capital within an additive factor of
	$2^{-2n}$.

	It is easy to see that the betting game does not succeed on this
	permutation. This permutation can be computed in \(O\bigl(t(2^n)^2\, t(t(2^n) + t(2^n)^2)\bigr)\). This is because on input $x \in \{0,1\}^n$, the simulation queries at most $t(2^n)$ strings, for each queried string it examines at most $t(2^n)$ strings to be its image, and it takes $O(t(2^{\log(t(2^n))} + 2^{2\log(t(2^n)}) = O(t(t(2^n) + t(2^n)^2)$ time to evaluate the betting on each image within the appropriate margin of error.
\end{appendixproof}

\begin{theorem} \label{thm:measure_conservation_2}
	\begin{enumerate}
		\item $\permE$ does not have $O(n)$-honest $\p$-permutation betting game measure 0.
		\item  $\permEXP$ does not have  $O(n^k)$-honest $\p_2$-permutation betting game measure 0.
	\end{enumerate}
\end{theorem}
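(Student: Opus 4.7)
The plan is to derive both parts directly from \Cref{lem:perm_bg_p_i}, in close analogy with the way \Cref{thm:measure_conservation_1} follows from \Cref{lem:perm_p_i}. To show that a class $\calC$ does not have $h(n)$-honest $\Delta$-permutation betting-game measure $0$, it suffices to prove that every $h(n)$-honest $\Delta$-permutation betting game $G$ fails to succeed on some member of $\calC$; \Cref{lem:perm_bg_p_i} already produces such a witness permutation $\pi$, so the work reduces to verifying that $\pi$ has the claimed time bound.

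For part (1), I would fix an arbitrary $O(n)$-honest $\p$-permutation betting game $G$. By definition $G$ runs in some polynomial time $t(n) = n^{O(1)}$, and then $\log(t(2^n)) = O(n)$, which matches the honesty hypothesis needed to apply \Cref{lem:perm_bg_p_i} with this $t$. The lemma yields a permutation $\pi$ not succeeded on by $G$ that is computable in time
\[
O\bigl(t(2^n)^2 \cdot t(t(2^n) + t(2^n)^2)\bigr),
\]
which for a polynomial $t$ simplifies to $2^{O(n)}$. Hence $\pi \in \permE$, and since $G$ was arbitrary, no $O(n)$-honest $\p$-betting game can have a success set containing all of $\permE$. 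By the definition of $\p$-betting-game measure $0$, this gives part (1).

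For part (2), the argument has the same shape with a quasipolynomial $t$. Given an $O(n^k)$-honest $\ptwo$-permutation betting game, we have $t(n) = 2^{(\log n)^{O(1)}}$ and $\log(t(2^n)) = n^{O(1)}$, so the honesty hypothesis of \Cref{lem:perm_bg_p_i} is again satisfied. Plugging this $t$ into the composite bound above produces a permutation computable in time $2^{n^{O(1)}}$, placing it in $\permEXP$; arbitrariness of the betting game finishes part (2).

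The only substantive thing to check carefully is that the composite time bound from \Cref{lem:perm_bg_p_i} indeed behaves as claimed. For polynomial $t$ this is immediate. For quasipolynomial $t(m) = 2^{(\log m)^c}$ one computes $t(2^n) = 2^{n^c}$, and hence $t(t(2^n) + t(2^n)^2) = 2^{(O(n^c))^c} = 2^{n^{O(1)}}$, so that the product $t(2^n)^2 \cdot t(t(2^n)+t(2^n)^2)$ remains within $2^{n^{O(1)}}$. This bookkeeping step is the closest thing to an obstacle, but it is purely routine and uses only closure of the (quasi)polynomial class under such compositions.
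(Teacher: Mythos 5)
Your proof is correct and follows exactly the route the paper intends: Theorem~\ref{thm:measure_conservation_2} is stated immediately after Lemma~\ref{lem:perm_bg_p_i} with no separate proof, precisely because (as with Theorem~\ref{thm:measure_conservation_1} from Lemma~\ref{lem:perm_p_i}) it is obtained by instantiating the lemma with polynomial and quasipolynomial $t$ and checking that the composite time bound lands in $2^{O(n)}$ and $2^{n^{O(1)}}$ respectively, which you verify correctly. The only thing I would tighten is the remark that for the $\ptwo$ case the quasipolynomial exponent $c$ in $t$ must be taken $\leq k$ so that $\log(t(2^n)) = n^c$ actually falls within the $O(n^k)$ honesty bound assumed in the lemma; this is a small bookkeeping point and does not affect the substance.
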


Since $\pspace$-permutation martingales can simulate $O(n)$-honest $\p$-betting-games, and $\ptwospace$-permutation martingales can simulate $O(n^k)$-honest $\ptwo$-betting-games, we have the following:

\begin{proposition}\label{prop:pspace_random_implies_honest_p_betting_game_random}
	Let $\pi$ be a permutation.
	\begin{enumerate}
		\item If $\pi$ is a $\pspace$-random permutation, then $\pi$ is $O(n)$-honest $\p$-betting game random.
		\item If $\pi$ is a $\ptwospace$-random permutation, then $\pi$ is $O(n^k)$-honest  $\ptwo$-betting game random.
	\end{enumerate}
\end{proposition}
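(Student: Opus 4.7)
The plan is to prove the contrapositive for part (1): given an $O(n)$-honest $\p$-betting game $G$ that succeeds on $\pi$, construct a $\pspace$-permutation martingale $d$ that also succeeds on $\pi$. This is the permutation analogue of the standard $\pspace$-simulation of honest betting games from \cite{BvMRSS01,Hitchcock:PTROSCC}, adapted from Cantor space to $\bfPi$. Part (2) follows by the same argument with polynomial bounds replaced by quasi-polynomial ones throughout.

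For each time horizon $T$, first introduce the conditional-expectation martingale
\[
d_T(g) = \mathrm{E}_\pi\!\left[\, K_G(\pi, T) \,\mid\, \pi \text{ extends } g \,\right],
\]
where $K_G(\pi, T)$ is $G$'s capital after $T$ computation steps on oracle $\pi$, and the expectation is with respect to the uniform measure on $\bfPi$ conditioned on $\cyl{g}$. The tower property of conditional expectation immediately yields the averaging condition of \Cref{def:resource_bounded_permutation_martingales}, and since $G$ is a fair betting game, $d_T(\emptylist) = \mathrm{E}[K_G(\pi, T)] = 1$. The final martingale will then be a truncated mixture of the form $d(g) = \sum_{k \geq 1} 2^{-k} d_{T_k}(g)$ for a horizon sequence such as $T_k = t(2^k)$.

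The $O(n)$-honesty assumption is exactly what makes each $d_{T_k}$ $\pspace$-computable: since all nonzero bets of $G$ in time $T_k = 2^{O(k)}$ lie on strings of length at most $ck$ for a fixed constant $c$, the value $K_G(\pi, T_k)$ depends on $\pi$ only through its restriction to $\{0,1\}^{\leq ck}$. Hence $d_{T_k}(g)$ can be evaluated by enumerating all valid extensions of $g$ to prefix partial permutations covering $\{0,1\}^{\leq ck}$, simulating $G$ for $T_k$ steps on each extension, and averaging the resulting capitals. Both the enumeration index and the simulation use $2^{O(k)}$ space, which is polynomial in $|g|$ provided $k = O(\log |g|)$. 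For success, I would observe that once $|g|$ determines $\pi$ on all strings of length $\leq ck$, the inner conditional expectation collapses to $d_{T_k}(g) = K_G(\pi, T_k)$ exactly; combining this with $\sup_T K_G(\pi, T) = \infty$ and a standard savings-account modification of $G$ that guarantees growth along the fixed subsequence $\{T_k\}$ gives $\limsup_N d(\pi \restr N) = \infty$.

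The hard part will be reconciling the adaptive, nonmonotone structure of a betting game with the strictly prefix-ordered averaging of a permutation martingale: $G$ may query strings of any length in any order, but $d$ is required to satisfy exact uniform averaging over $\free(g)$ at each prefix $g$. The conditional-expectation framework handles this through the tower property, but fitting the computation into $\pspace$ requires simultaneously (a) using honesty to confine the relevant strings to short lengths, (b) truncating the outer sum so that $d(g)$ depends on only $\mathrm{poly}(|g|)$ terms without destroying the averaging condition, and (c) preprocessing $G$ with the savings trick so that the success of $G$ on $\pi$ transfers to $d$ regardless of which horizons in the fixed sequence $\{T_k\}$ happen to be the ``lucky'' ones for that particular $\pi$.
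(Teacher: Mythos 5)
The paper does not give a detailed proof of this proposition; it is dispatched with the one-line observation that $\pspace$-permutation martingales can simulate $O(n)$-honest $\p$-betting games, so there is no detailed ``paper proof'' to compare against. Your high-level framework (Doob/conditional-expectation martingales, the averaging condition via the tower property, a mixture over time horizons, and the savings trick) is the right general shape, and it is the same route the paper gestures at. However, the central computational step as you state it contains a genuine error, and the difficulties you flag at the end are not merely loose ends but the crux of the argument.

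The specific gap: you assert that because all nonzero bets of $G$ by time $T_k$ are on strings of length at most $ck$, ``the value $K_G(\pi, T_k)$ depends on $\pi$ only through its restriction to $\{0,1\}^{\leq ck}$.'' This inference is not valid under the paper's definition of honesty. Honesty constrains only \emph{nonzero} bets; the betting game is free to place zero bets (that is, make pure queries) on strings of length up to $T_k = 2^{O(k)}$. A zero bet does not change the capital directly, but the answer to such a query is part of the history and can steer all of $G$'s subsequent choices of which short strings to bet on and how much. Indeed, the proof of \Cref{lem:perm_bg_p_i} explicitly handles the case ``when the simulated betting game $G$ queries a string of length greater than $m = \log(t(2^n))$,'' confirming that long queries are permitted. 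Consequently $K_G(\pi, T_k)$ is a function of $\pi$ on all strings queried, not just on $\{0,1\}^{\leq ck}$, so your proposed evaluation---enumerate extensions of $g$ covering $\{0,1\}^{\leq ck}$, simulate $G$ on each, and average---does not compute $d_{T_k}(g)$. The conditional expectation $\mathrm{E}[K_G(\pi,T_k)\mid g]$ is still a well-defined permutation martingale, and it can be evaluated in $2^{O(k)}$ space by a depth-first recursion over $G$'s query tree that branches on the (uniformly conditioned) answer each time $G$ makes a zero bet on a string not yet determined by $g$; but that is a materially different computation from the one you describe, and your space bound needs to be rederived for it.

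Beyond the computability of each $d_{T_k}$, the issues you list as ``the hard part'' are not resolved and cannot be waved away. With horizons $T_k = t(2^k)$, a uniform bound $d_{T_k}(g) \leq 1/\mu(g)$ is of order $2^{m 2^m}$ for $g$ of betting length $m$, so with geometric weights the number of terms needed to achieve precision $2^{-r}$ grows far beyond the $k = O(m + \log r)$ range that stays within polynomial space in $|g|+r$. Controlling this requires capping each share (so each $d_{T_k}$ is bounded), and then the weight–cap pair must simultaneously (i) keep the mixture finite at the root, (ii) make the tail past $k = O(\log|g| + \log r)$ below $2^{-r}$, and (iii) still blow up along $\pi \restr N$ whenever $\limsup_T K_G(\pi,T) = \infty$, even when that $\limsup$ is achieved very slowly. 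A straightforward choice of $w_k$ and cap $B_k$ tends to fail one of these three, so a concrete share-allocation scheme (in the spirit of the paper's own ``split the capital into sub-shares $a_i = 1/i^2$'' constructions elsewhere) together with a savings modification of $G$ has to be spelled out and verified. As written, your argument identifies the right ingredients but does not establish the proposition.
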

}

\section{Elementary Properties of Random Permutations}\label{sec:p_rand_permutations_app}

In this section, we explore fundamental properties of random permutations that provide insights into how permutation martingales and betting games operate. Understanding these properties is crucial for applying permutation randomness in computational complexity.
We show that random permutations are computationally difficult to  compute and to invert. We then investigate the relationship between random permutations and random oracles, showing how random permutations can generate random oracles.

\subsection{Intractability of Random Permutations}

\begin{definition}
	A permutation $\pi \in \bfPi$ is {\em noticeably polynomial time} if there are polynomials $p,q$ and $\TM$ $M$ such that for infinitely many $n$, $M$ computes $\pi$ on at least $2^n/p(n)$ strings of length $n$ in $q(n)$ time for each string.
\end{definition}

\begin{theorem}\label{thm:NoticeablePermutationMeasure}
	The set
	$X = \{ \pi \in \bfPi \mid \pi_n \textrm{ is noticeably polynomial time} \}$
	has $\p$-permutation measure 0.
\end{theorem}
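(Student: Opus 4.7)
The plan is to enumerate standardized witness triples $(M_i,p_i,q_i)$ with $p_i(n)=q_i(n)=n^{k_i}$, so that every triple $(M,p,q)$ witnessing noticeability is dominated by some indexed triple, and for each $i$ to build a permutation martingale $m_i$ with $m_i(\emptylist)=2^{-i}$ that succeeds on every $\pi$ whose noticeability is witnessed by $(M_i,p_i,q_i)$. The target martingale is then $d=\sum_{i\ge 1} m_i$, which starts at $d(\emptylist)=1$ and is itself a permutation martingale because weighted sums of permutation martingales remain permutation martingales.

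Each $m_i$ is idle on prefixes with betting length below $2^i$ (so $m_i(g)=2^{-i}$ when $l(g)<2^i$). For $l(g)=n\ge 2^i$, when asked to bet on the next string $x$ with free set of size $F=|\free(g)|$, $m_i$ simulates $M_i(x)$ for $q_i(n)$ steps; if the simulation halts with output $y^\star\in\free(g)$ and $F\ge 3$, $m_i$ places a factor-$2$ bet on $y^\star$, so $m_i(g,y^\star)=2m_i(g)$ and $m_i(g,y)=\tfrac{F-2}{F-1}m_i(g)$ for every other $y\in\free(g)$; otherwise the bet is neutral. If $(M_{i_0},p_{i_0},q_{i_0})$ witnesses the noticeability of $\pi$, then at each noticeable length $n\ge 2^{i_0}$ the set of $x$ with $M_{i_0}(x)=\pi(x)$ has size at least $2^n/p_{i_0}(n)$, and each such step doubles $m_{i_0}$'s capital. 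The wrong-bet loss factors over a single length telescope,
\[
\prod_{F=3}^{2^n}\frac{F-2}{F-1}=\frac{1}{2^n-1},
\]
so the total multiplicative change across length $n$ is at least $2^{2^n/p_{i_0}(n)-O(n)}$. The cumulative loss at all prior lengths is bounded by $2^{-O(n^2)}$, which is dwarfed by $2^{2^n/p_{i_0}(n)}$ at any sufficiently large noticeable $n$; hence $m_{i_0}$ reaches arbitrarily large values infinitely often on $\pi$, and $d\ge m_{i_0}$ succeeds on $\pi$.

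The main obstacle is showing that $d$ is $\p$-computable in $|g|$. Because $m_i$ is idle for $l(g)<2^i$, only the mini-martingales with $i\le\lfloor\log l(g)\rfloor$ are nontrivial at $g$, and the tail $\sum_{i>\lfloor\log l(g)\rfloor}2^{-i}$ has a closed-form value. Evaluating each active $m_i$ requires simulating $M_i$ on all strings up to length $n=l(g)$, using $q_i(n')\le (n')^i$ steps per simulation; summing over the $O(|g|)$ such strings gives per-$m_i$ time $O(|g|\cdot n^i)$, and summing over $i\le\log n$ yields total running time $O(|g|\cdot n^{\log n})=O(|g|\cdot 2^{(\log\log|g|)^2})=|g|^{1+o(1)}$, since $(\log\log|g|)^2=o(\log|g|)$. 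The delicate part of the argument is this trade-off: the threshold $i\le\log l(g)$ must be large enough that every fixed witness index $i_0$ eventually activates (which happens the moment $n\ge 2^{i_0}$, i.e.\ $\log n\ge i_0$), yet small enough that the blow-up $n^i$ from $q_i$'s degree stays polynomial in $|g|\approx 2^n$. Calibrating this cutoff is exactly what makes the construction fit the $\p$-permutation martingale definition of \Cref{def:resource_bounded_permutation_martingales}, giving $\mup(X)=0$.
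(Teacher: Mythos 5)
Your proof is correct, and it takes a genuinely different route from the paper. The paper's construction works in stages: at length $n$ it partitions $\{0,1\}^n$ into $2^n/n^{\lg n}$ contiguous subintervals, enumerates all size-$n^{1/2}$ subsets of each subinterval, runs $n$ machines on each, and places a tiny all-or-nothing parallel bet of size $1/(n^2N_n)$ on every (subset, machine) pair, winning because one such bet accumulates a factor of roughly $(2^n)^{n^{1/2}}$ when a good machine is correct on a whole size-$n^{1/2}$ block. By contrast, you place a single sequential Kelly-style bet per string: a factor-$2$ gain on the predicted image and an even spread of the shortfall across the remaining free strings, so each wrong bet only shaves off $\frac{F-2}{F-1}$. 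The telescoping product $\prod_{F=3}^{2^n}\frac{F-2}{F-1}=\frac{1}{2^n-1}$ caps the length-$n$ downside at $2^{-n}$, while at least $2^n/p(n)$ wins contribute $2^{2^n/p(n)}$; this gap is far larger than the $2^{n^{1.4}}$ margin the paper settles for, and it removes the need for the interval-partition and subset-averaging machinery entirely. What the paper's method buys is a gain factor proportional to $|\free(g)|$ per correct prediction, hence success from only $n^{1/2}$ agreements, but at the cost of a delicate combinatorial accounting; your method trades the per-bet gain down to a constant $2$ in exchange for a much simpler loss analysis, and the noticeability hypothesis supplies plenty of wins to compensate. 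Your handling of the resource bound—idling $m_i$ below betting length $2^i$ so only $i\le\log l(g)$ mini-martingales are active, and observing that $n^{\log n}=2^{(\log\log|g|)^2}=|g|^{o(1)}$ while the tail $\sum_{i>\log l(g)}2^{-i}$ is exact—is sound and hits the same $\DTIME(2^{O(n)})$ budget the paper uses. One minor point worth making explicit in a final write-up: the wrong-bet product is over a \emph{subset} of positions (only those with a halting prediction landing in the free set and $F\ge 3$), and since each deleted factor is $<1$, the restricted product is only larger than the full telescoping value; this is why $\frac{1}{2^n-1}$ is a valid lower bound on the net loss rather than an equality.
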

\noindent

The proof uses a simple averaging argument: we partition the set of length-\(n\) strings into \(2^n / n^{\lg n}\) subintervals, each of size \(n^{\lg n}\). By the noticeably-polynomial-time property of the permutations in \(X\), at least one subinterval contains superpolynomially many strings whose images are computable. The martingale then identifies a sufficiently small subset of these strings and makes correct predictions often enough to succeed.

\begin{proof}[Proof of \Cref{thm:NoticeablePermutationMeasure}]
	We design a polynomial-time permutation-martingale game that succeeds
	on $X$. The martingale succeeds on $X$ by going through all polynomial time
	TMs and finding a size-$n^{1/2}$ subset of $\binaryn$ where it wins enough
	bets to succeed.

	This martingale operates in stages. In stage $n$, the martingale uses the
	first $n$ polynomial TMs to bet on $\binaryn$. We partition $\binaryn$ into
	$2/n^{\lg n}$ subsets of size $n^{\lg n}$. For each partition, we consider
	all its size-$n^{1/2}$ subsets. Then we run the $n$ TMs on each
	size-$n^{1/2}$ subset of each partition. So, in total we consider
	$N_n = (2^n/n^{\lg n})  \binom{n^{\lg n}}{n^{1/2}}=2^{\Theta(n)}$ subsets.
	We use the TMs outputs to predict the permutation we are betting on. Each
	size-$n^{1/2}$ subset and TM is given capital $\frac{1}{n^2N_n}$ to make
	its bet. Next, we show that for any $\pi \in X$, for infinitely many $n$,
	there is a subset that increases its capital to $\omega(1)$.

	We now specify how the martingale bets in stage $n$. First, we naturally
	partition $\binaryn$ into $2^n/n^{\lg n}$ contiguous subsets
	$T_i = [0^n + in^{\lg n}, 0^n + (i+1)n^{\lg n})$, for
	$i \in [0, 2^n/n^{\lg n})$. For each $T_i$, the martingale examines all its
	size-$n^{1/2}$ subsets. It then runs each size-$n^{1/2}$ subset on the first
	$n$ TMs, $M_k$ for $k \in [1,n]$. Let $M_{i,j,k}$ denote the image of the
	$j$th size-$n^{1/2}$ subset of $T_i$ under the function computed by $M_k$.
	As mentioned before, we use $\frac{1}{n^2N_n}$ to bet with each $M_{i,j,k}$.
	The martingale uses $\frac{1}{n^2N_n}$ to bet that each $M_{i,j,k}$ is
	correct, if $M_{i,j,k}$ is incorrect, the martingale loses all its capital
	reserved for betting on $M_{i,j,k}$, otherwise, its capital increases by a
	factor described later.

	We now argue that for each $\pi \in X$, there are infinitely many $n$, such
	that the capital of the martingale after betting on $\binaryn$ is
	$\omega(1)$. Suppose
	that $\pi_n$ is computable by at least one of the first $n$ polynomial time
	TMs on a noticeable subset of $\binaryn$ i.e., $\pi_n$ is computed on a
	subset of size at least $2^n/n^k$ for some positive constant $k$. Let us
	call such a TM, a good TM. Using a simple averaging argument, we can see
	that there must be some size-$n^{\lg n}$ subset $T_i$ where a good TM
	computes $\pi_n$ correctly on a subset of size at least $n^{\lg n}/n^k$.
	For sufficiently large $n$, $n^{\lg n - k} \geq n^{1/2}$. Therefore, for
	sufficiently large $n$, we can always find size-$n^{1/2}$
	subsets where a good TM computes $\pi$. In fact, for sufficiently large $n$,
	there are $\Omega(2^n/n^{\lg n + k})$ subsets, $T_i$, with at least
	$n^{1/2}$ strings that are computed by a good TM.

	We now examine how much capital the martingale gains after betting with
	a good TM. A correct prediction on a string
	$s \in T_i = [0^n + in^{\lg n}, 0^n + (i+1)n^{\lg n})$
	increases the amount bet on $s$ by a factor of at least
	$2^n - (i+1)n^{\lg n} + 1$ for any $i \in [0, 2^n/ n^{\lg n + k})$.
	In the worst case, some $T_i$ with $i = O(2^n/n^{\lg n + k})$ will have
	$n^{1/2}$ strings that are correctly predicted by a good TM. Therefore the
	capital wagered on a the good TM increases by a factor of at least
	\[(2^n - (i+1)n^{\lg n} + 1)^{n^{1/2}}
			= 2^{n^{3/2}} (1 - (i+1)2^{-n}n^{\lg n} + 2^{-n})^{n^{1/2}}
		= \omega({2^{n^{1.4}}}).\]
	The last equality follows because $k > 0$ and $i = O(2^n/n^{\lg n + k})$.

	Since this value grows much faster than
	$\frac{1}{n^2N_n} = \omega(2^{-1.1n})$, the value reserved
	for betting on each $M_{i,j,k}$, we see that our martingale succeeds on $X$.

	Finally, we argue that our martingale operates in linear exponential time.
	The most time consuming step of the martingale is the simulation of $n$
	polynomial time TMs that are run on $N_n = 2^{\Theta(n)}$ subsets with
	$n^{1/2}$ length-$n$ strings. Clearly this can be done in $2^{\Theta(n)}$
	time.
\end{proof}

\begin{corollary}
	If $\pi\in\bfPi$ is $\p$-random, then any polynomial-time $\TM$ will be able to compute $\pi$ on at most a $1/\poly$ fraction for all sufficiently large $n$.
\end{corollary}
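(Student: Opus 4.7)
The plan is to derive the corollary directly from \Cref{thm:NoticeablePermutationMeasure} by combining it with the definition of $\p$-randomness and unpacking the negation of ``noticeably polynomial time.'' Since $\pi$ is $\p$-random, $\pi$ lies outside every $\p$-permutation measure 0 set, and in particular outside the set $X$ of noticeably polynomial time permutations.

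First, I would state the logical chain: $\pi \notin X$ means $\pi$ is not noticeably polynomial time. Spelling out the negation of the definition, this says: for every polynomial-time Turing machine $M$ (whose runtime is bounded by some polynomial $q$) and every polynomial $p$, it is \emph{not} the case that $M$ computes $\pi$ correctly on at least $2^n/p(n)$ strings of length $n$ for infinitely many $n$. Equivalently, for every such $M$ and every polynomial $p$, the machine $M$ computes $\pi$ correctly on fewer than $2^n/p(n)$ strings of length $n$ for all but finitely many $n$.

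Next, I would translate this string-count bound into the stated fractional bound. Dividing by $2^n$, the fraction of length-$n$ inputs on which $M$ correctly computes $\pi$ is strictly less than $1/p(n)$ for all sufficiently large $n$. Since $p$ was arbitrary, any polynomial-time machine correctly computes $\pi$ on at most a $1/\poly$ fraction of length-$n$ strings almost everywhere, which is exactly the corollary's conclusion.

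There is no genuine obstacle here; the corollary is essentially a restatement of \Cref{thm:NoticeablePermutationMeasure} together with the definition of $\p$-randomness. The only care needed is to handle the quantifiers correctly: the negation of an ``infinitely often'' statement over both $p$ and $M$ yields the desired ``for all $p$, almost everywhere'' conclusion, and one must note that a fixed polynomial-time $M$ has some polynomial time bound $q$, so quantifying over $(M,p)$ pairs suffices to cover all polynomial-time machines.
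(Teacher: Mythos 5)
Your proposal is correct and matches the (implicit) argument the paper relies on: the corollary is an immediate consequence of \Cref{thm:NoticeablePermutationMeasure} once one unwinds the definition of $\p$-randomness and negates the $\exists p, q, M\ \exists^\infty n$ quantifier block in the definition of noticeably polynomial time. You also correctly flag the one place where care is needed, namely that the time-bound parameter $q$ is subsumed by fixing a polynomial-time $M$, so quantifying over $(M,p)$ pairs is enough.
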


Similarly, we can show that random permutations are hard to invert on a
noticeable subset infinitely often.
The main difference is that we search for TMs
inverting the permutation rather than TMs that compute the permutation.

\begin{definition}
	A permutation $\pi\in\bfPi$ is {\em noticeably invertible} if there is a polynomial-time $\TM$ $M$ and a polynomial $p$ such that for infinitely many $n$,
	$| \{ x \in \{0,1\}^n \mid M(\pi(x))=x \} | \geq 2^n/p(n).$
\end{definition}

\begin{theorem}\label{th:noticeably_invertible}
	The set
	$X = \{ \pi \in \bfPi \mid \pi_n \textrm{ is noticeably invertible} \}$
	has $\p$-permutation measure 0.
\end{theorem}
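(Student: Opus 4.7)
The plan is to adapt the proof of \Cref{thm:NoticeablePermutationMeasure} almost verbatim, replacing the forward-computation candidate by an inverter-derived forward predictor. Enumerate the polynomial-time Turing machines as $M_1, M_2, \ldots$. For each candidate $M_k$ and each length $n$, I first precompute $M_k(y)$ for every $y \in \{0,1\}^n$ and organise the results into preimage buckets
\[
I_k(x) \;=\; \{\, y \in \{0,1\}^n : M_k(y) = x \,\}, \qquad x \in \{0,1\}^n.
\]
When the martingale is about to bet on $\pi(x)$, the capital allocated to $M_k$ is spread uniformly over $I_k(x) \cap \free(g)$, with zero weight on the remaining free strings. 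Since $M_k(\pi(x)) = x$ forces $\pi(x) \in I_k(x)$, such a bet wins whenever $M_k$ correctly inverts $\pi$ on $x$, with payoff factor $|\free(g)|/|I_k(x) \cap \free(g)|$.

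The single new technical ingredient is a bucket-size averaging step. The sets $I_k(x)$ are pairwise disjoint subsets of $\{0,1\}^n$, so $\sum_x |I_k(x)| \leq 2^n$. Hence if $M_k$ correctly inverts $\pi$ on a set $G_n$ of size at least $2^n/p(n)$ for some polynomial $p$, then at most $|G_n|/2$ elements of $G_n$ can satisfy $|I_k(x)| > 2p(n)$, leaving a \emph{super-good} subset of size at least $2^n/(2p(n))$ on which every bucket has size at most $2p(n)$. For every super-good $x$ a correct bet therefore multiplies capital by $\Omega(2^n/p(n))$, exactly the factor a singleton prediction would achieve in the proof of \Cref{thm:NoticeablePermutationMeasure}.

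With super-good strings in hand, I then run the identical partition-and-subset bookkeeping: split $\{0,1\}^n$ into $2^n/n^{\lg n}$ contiguous blocks $T_i$, and for every triple $(T_i, S, k)$ with $S \subseteq T_i$ of size $n^{1/2}$ and $k \leq n$ dedicate capital $1/(n^2 N_n)$ to a sub-martingale that uses $I_k$ to bet on the $x \in S$ and bets uniformly (hence neutrally) on every other string. For infinitely many $n$, averaging forces some block to contain at least $n^{1/2}$ super-good strings, so one of the triples wins all $n^{1/2}$ of its bets and multiplies its stake by $\bigl(\Omega(2^n/p(n))\bigr)^{n^{1/2}} = 2^{\Omega(n^{3/2})}$, dwarfing the $2^{-O(n)}$ initial allocation. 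The total runtime remains $2^{\Theta(n)}$ as in \Cref{thm:NoticeablePermutationMeasure}: I merely add the precomputation of $M_k(y)$ for every $y$ of length $n$ and every $k \leq n$, which is still polynomial in $|g|$. The main obstacle is the bucket-size averaging step; once it is pinned down, the rest is a cosmetic relabelling of the previous proof.
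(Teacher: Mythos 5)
Your argument is correct, but it takes a genuinely different route from the paper's. The paper keeps $S$ as a set of candidate \emph{images}: it runs $M$ on each $x_i\in S$, interprets $M(x_i)$ as a candidate preimage, and bets all sub-capital that $\pi(M(x_i))=x_i$ when the enumeration reaches the string $M(x_i)$. The paper's extra bookkeeping is the distinctness check on the outputs $M(x_i)$, which prevents two members of $S$ from claiming the same preimage; non-injectivity of $M$ simply suppresses the bet. This means the paper's martingale places its wagers at enumeration positions $M(x_i)$ that are controlled by the adversarial TM rather than by the block index, so the block-based lower bound on $|\free(g)|$ from \Cref{thm:NoticeablePermutationMeasure} does not transfer verbatim; one must additionally note that, because the set of correctly-inverted preimages has density $\ge 1/p(n)$, at least half of them lie early enough that the free set when betting on them still has size $\Omega(2^n/p(n))$.

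Your construction keeps the enumeration position fixed at $x$ (exactly as in \Cref{thm:NoticeablePermutationMeasure}), so the free-set lower bound comes for free from the block index, and you instead pay for the non-injectivity of $M_k$ by spreading capital over the bucket $I_k(x)=\{y:M_k(y)=x\}$. The price is your bucket-size averaging lemma (disjoint buckets of total size $\le 2^n$, hence at least half of the correctly-inverted $x$ have $|I_k(x)|\le 2p(n)$), which recovers the same $\Omega(2^n/p(n))$ per-bet payoff. Both routes are sound; yours is arguably a more modular transfer of the payoff analysis from \Cref{thm:NoticeablePermutationMeasure}, while the paper's avoids the precomputation of $M_k$ on all of $\{0,1\}^n$ and the bucket analysis at the cost of a subtler accounting of where the bets land in the enumeration. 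One small technicality you should flag: when $I_k(x)\cap\free(g)=\emptyset$ the sub-martingale must fall back to a neutral (uniform) bet to keep the averaging identity valid; otherwise the construction as written is incomplete.
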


\begin{appendixproof}[Proof of \Cref{th:noticeably_invertible}]
	This proof closely mirrors the argument for Theorem~4.2. The key difference is that we treat the outputs of the TMs as \emph{preimages} rather than images of the permutation. Specifically, given a size-$n^{1/2}$ subset $S = \{x_1, \ldots, x_{n^{1/2}}\} \subseteq \{0,1\}^n$, we run each TM $M$ on $x_i$ and interpret the output $M(x_i)$ (if it is length-$n$) as a prediction that $\pi(M(x_i)) = x_i$.

	The martingale allocates capital to each subset–TM pair and places bets accordingly. It only bets when the outputs $M(x_i)$ are all distinct and of length $n$. As in Theorem~4.2, noticeable invertibility ensures that for infinitely many $n$, there exists a TM making sufficiently many correct inversion predictions. In such cases, the martingale's capital grows superpolynomially, and hence it succeeds.
\end{appendixproof}

\subsection{Random Permutations versus Random Oracles}\label{sec:rand_perm_sw}

Bennett and Gill used random permutations, rather than random languages, to
separate $\P$ from $\NP \cap \coNP$. It is still unknown whether random
oracles separate $\P$ from $\NP \cap \coNP$. In this section, we examine how random permutations yield random languages. We show that a
$\p$-random permutation can be used to generate a $\p$-random language. All of the results in this section are stated for $\p$-randomness. They also hold for $\ptwo$-randomness.

Given a permutation $\pi \in \bfPi$, we define the language
\[L_\pi = \{x \mid \textrm{ the first bit of } \pi(0^{2|x|}x) \textrm{ is 1}\}.\]
For a set of permutations $X \subseteq \bfPi$, we define the set of languages
\[L_X = \{L_\pi \mid \pi \in X\}.\]

\begin{lemma}
	\label{lem:perm_random}
	For any set of permutations $X \subseteq \bfPi$, if a $\p$-computable
	martingale $d$ succeeds on the set of languages
	$L_X = \{L_\pi \mid \pi \in X \}$, then there is a
	$\p$-computable permutation martingale that succeeds on $X$.
\end{lemma}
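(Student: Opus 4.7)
The plan is to simulate the classical martingale $d$'s betting on $\chi_{L_\pi}$ directly as a permutation martingale $d'$ processes $\pi$'s values in standard order. The key structural observation I would exploit is that the encoding map $\sigma(y) = 0^{2|y|}y$ preserves standard-enumeration order, so the \emph{case B} steps of $d'$---those at which the newly added string $s_{|g|}$ equals $\sigma(s_j)$ for some $j$---arrive in exactly the same order as the bits of $\chi_{L_\pi}$. In particular, exactly one new bit of $\chi_{L_\pi}$ becomes known at each case B step and it is always the next in-order bit for $d$, so the length $j^*(g)$ of the currently known prefix of $\chi_{L_\pi}$ advances by exactly one at each case B step and there are no out-of-order revelations.

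Building on this, I would define $d'$ recursively: at every non-case-B step, set $d'(g,x) = d'(g)$; at a case B step, with $u = u(g) = \chi_{L_\pi}\restr j^*(g)$, $b(x)$ the first bit of $x$, and $f_b(g) = |\{x' \in \free(g) : b(x') = b\}|$, set
\[ d'(g,x) \;=\; d'(g)\cdot\frac{|\free(g)|}{2\,f_{b(x)}(g)}\cdot\frac{d(u\cdot b(x))}{d(u)}. \]
The first factor corrects for the fact that, conditional on $g$, the first bit of $\pi(s_{|g|})$ equals $b$ with probability $f_b/|\free|$ rather than $1/2$, while the second factor is $d$'s relative bet on that bit. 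Verifying the averaging condition $d'(g) = \frac{1}{|\free(g)|}\sum_{x \in \free(g)} d'(g,x)$ is a short computation: grouping the sum by $b \in \{0,1\}$ causes the $f_b$ factors to cancel and reduces everything to the classical identity $d(u) = \tfrac{1}{2}(d(u0)+d(u1))$.

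The main analytical step is then to control the ratio $R(g) := d'(g)/d(u(g))$. Each case B step multiplies $R$ by $|\free(g)|/(2 f_{b(x)}(g))$, and at betting length $3k$ there are exactly $2^k$ such steps. Because the permutation maps into $\{0,1\}^{3k}$ (which contains $2^{3k-1}$ strings with each first bit) and at most $2^k$ values have been used at that length so far, one has $|\free(g)| \ge 2^{3k}-2^k$ and the first-bit imbalance satisfies $|f_0-f_1| \le 2^k$. A direct estimate then gives a per-step log-contribution of $O(2^{-2k})$ and a cumulative log-contribution of $O(2^{-k})$ over the $2^k$ case B steps at length $3k$; summing over $k$ yields $|\log R| = O(1)$ uniformly in $\pi$. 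Hence $d'(\pi \restr N) = \Theta(d(\chi_{L_\pi}\restr j^*(N)))$, so because $j^*(N) \to \infty$ and $d$ succeeds on $\chi_{L_\pi}$ for $\pi \in X$, $d'$ succeeds on $\pi$. Polynomial-time computability of $d'$ is then routine: iterate through $g$ while maintaining $|\free|$, the counts $f_b$, and the determined prefix $u(g)$ (of length $O(\log|g|)$), then invoke $d$ once on $u(g)$. The hard part is preserving the permutation-martingale averaging against the nonuniform first-bit distribution on $\free(g)$; the correction factor $|\free|/(2f_b)$ is what repairs averaging, and the fact that its cumulative logarithmic effect telescopes across lengths to a constant is what lets $d'$ continue to track $d$'s unbounded growth.
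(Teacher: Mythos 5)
Your construction is precisely the paper's: you bet only at the steps where the current string is $0^{2|x|}x$, multiply by the same correction factor $|\free(g)|/(2f_{b(x)}(g))$ times $d$'s relative bet $d(u\,b(x))/d(u)$, verify averaging by grouping the sum over free strings by first bit, and control the cumulative distortion by the same bound $|\free|/(2f_b)\ge(2^{3n}-2^n)/2^{3n}=1-2^{-2n}$ over the $2^n$ bets at length $3n$, summing to $O(1)$. The only cosmetic omission is the WLOG assumption that $d$ never hits zero (needed since you divide by $d(u)$), which the paper states explicitly; otherwise this is the paper's proof.
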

\begin{appendixproof}[Proof of \Cref{lem:perm_random}]
	We will use $d$ to design a $\p$-computable permutation martingale $D$,
	that succeeds on any $\pi \in X$.

	The main idea behind $D$'s design is to use the capital of the martingale $d$
	to bet on the images of strings that affect the membership of strings in
	$L_\pi$. The bet placed by $D$ on each possible image of the string being
	bet on is proportional to the bet placed by $d$ on the membership in
	$L_\pi$ of the string whose membership is determined by the image.
	For example, the bet placed by $D$ on the image of $0^{2|x|}x$ being $1y$ is
	proportional to the bet placed by $d$ on $x$ being in $L_\pi$.

	The permutation martingale $D$ only bets on strings of form $0^{2n}x$, where
	$x \in \binaryn$ and $n > 0$ (this is the string that determines the
	membership of $x$ in $L_\pi$). Given a string $w$, let
	$\pi\upharpoonright w$ be the prefix list of $\nu_\pi$ that consists of the
	images of all the strings less than $w$. We let $\free(\pi\upharpoonright w)$ be the (nonempty) set of length-$n$ strings that have not yet appeared in the image of the prefix partial permutation $\pi\upharpoonright w$. Where $n$ is the smallest integer for which there exists strings of length $n$ that have not been mapped to any string.
	For $b\in\binary$, $\free(\pi\upharpoonright w)_b$
	is the set of strings of $\free(\pi\upharpoonright w)$ that have $b$ as
	their first bit. Let $F(\pi\upharpoonright w)$ and
	$F(\pi\upharpoonright w)_b$ denote the size of $\free(\pi\upharpoonright w)$
	and $\free(\pi\upharpoonright w)_b$,
	respectively. Finally, let $L(\pi\upharpoonright w)$ be a prefix of the
	characteristic sequence of $L_\pi$, it includes the characteristic bits for
	all strings whose membership can be determined from $\pi\upharpoonright w$.
	For example, if $w = 0^{3n+1}$, then $L(\pi\upharpoonright w)$ is the
	characteristic prefix for the subset $L_\pi$ that consists of all string of
	length at most $n$.

	We now define $D$. As mentioned previously, $D$ only
	bets on strings that have the from $0^{2n}x$, where $x \in \binaryn$ and
	$n > 0$. These are the strings that determine the membership of strings in
	$L_{\pi}$. The martingale $D$ starts with initial capital $d(\lambda)$, the
	initial capital of the martingale that succeeds on $L_\pi$. Without
	loss of generality, we assume that $d$ is non-zero.
	When betting on the images of string $w = 0^{2|x|}x$ with $x \in \binaryn$,
	for each length-$3n$ string beginning with bit $b$, the
	martingale wagers
	$\frac{d(L(\pi\upharpoonright w)b)}{2 d(L(\pi\upharpoonright w))F(\pi\upharpoonright w)_b}$ fraction of its current capital. The term
	$\frac{d(L(\pi\upharpoonright w)b)}{2 d(L(\pi\upharpoonright w))}$ is the
	fraction of $d$'s current capital used to bet on the characteristic bit of
	$x$ in $L_\pi$ being $b$. We then divide this term by
	$F(\pi\upharpoonright x)_b$, the number of length-$3n$ strings that make the
	characteristic bit of $x$ in $L_\pi$ to be $b$.

	It is easy to see that $D$ is computable in $\p$ if $d$ is also computable
	in $\p$. Given $\pi \upharpoonright w$, it takes linear time to compute
	the string $L(\pi\upharpoonright w)$ and the number
	$F(\pi \upharpoonright w)_b$. It takes $\p$-time to compute
	$d(L(\pi\upharpoonright w)b)$ and $2d(L(\pi\upharpoonright w))$. All that is
	left after that is to perform a multiplication and a division, both of which
	requires $\p$-time.

	We now argue that $D$ succeeds on $L_\pi$. Let us examine the
	martingale's capital after betting on string $w = 0^{2|x|}x$. After betting
	on $w$, the martingale's capital multiplied by a factor of
	$\frac{d(L(\pi\upharpoonright w)b)F(\pi\upharpoonright w)}{2 d(L(\pi\upharpoonright w))F(\pi\upharpoonright w)_b}$.
	Since $F_1 = \frac{d(L(\pi\upharpoonright w)b)}{d(L(\pi\upharpoonright w))}$
	represents the factor by which $d$'s capital increases while betting on
	$L_\pi$, we only need to show that the remaining factor
	$F_2 = \frac{F(\pi\upharpoonright w)}{2 F(\pi\upharpoonright w)_b}$ doesn't
	reduce $F_1$ by much, so that if $d$ succeeds on $L_\pi$, $D$ also succeeds
	on $\pi$. We note that $F(\pi\upharpoonright w)$ (the number of
	length-$3n$ strings whose images have yet to be queried) is at least
	$2^{3n}-2^n$, and $F(\pi\upharpoonright w)_b$ (the number of the previously counted strings that begin with bit $b$) is at most $2^{3n-1}$. Therefore,
	$\frac{F(\pi\upharpoonright w)}{2F(\pi\upharpoonright w)_b} \geq \frac{2^{3n}-2^n}{2\times2^{3n-1}} = 1-2^{-2n}$.
	Hence, after betting on all length-$3n$ strings of the
	form $0^{2|x|}x$, $D$'s capital is the same as $d$'s capital after betting
	on length-$n$ strings multiplied by a factor of at least
	$(1-2^{-2n})^{2^n} \approx e^{-2^{-n}}$.  This implies that $D$'s capital
	after betting on length-$3n$ strings grows by approximately the same factor
	by which $d$'s capital grows after betting on length-$n$ strings. Therefore,
	if $d$ succeeds on $L_\pi$, $D$ succeeds on $\pi$.
\end{appendixproof}

\begin{corollary}\label{cor:perm_random}
	If $\pi$ is a $\p$-random permutation, then $L_\pi$ is a $\p$-random language.
\end{corollary}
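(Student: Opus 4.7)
The plan is to derive the corollary as a straightforward contrapositive application of \Cref{lem:perm_random}. Suppose, for the sake of contradiction, that $L_\pi$ is not a $\p$-random language. Then by the definition of $\p$-randomness there exists a $\p$-computable martingale $d$ that succeeds on the singleton $\{L_\pi\}$, i.e.\ $L_\pi \in S^\infty[d]$.

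Next, I would instantiate \Cref{lem:perm_random} with the singleton class $X = \{\pi\}$, so that $L_X = \{L_\pi\}$. The hypothesis of the lemma is met because the $\p$-computable martingale $d$ succeeds on $L_X$. The lemma then produces a $\p$-computable permutation martingale $D$ with $\pi \in S^\infty[D]$. This directly contradicts the assumption that $\pi$ is $\p$-random, since a $\p$-random permutation cannot lie in the success set of any $\p$-permutation martingale. Hence $L_\pi$ must be $\p$-random.

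The main (and essentially only) obstacle is verifying that \Cref{lem:perm_random} can indeed be invoked on a singleton $X = \{\pi\}$ to lift a martingale succeeding on a single language $L_\pi$ to a permutation martingale succeeding on the single permutation $\pi$; this is immediate from the statement of the lemma, which is quantified over arbitrary $X \subseteq \bfPi$. No further computation is required, as the construction of $D$ from $d$ and the preservation of polynomial-time computability are handled inside the lemma's proof.
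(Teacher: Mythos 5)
Your proof is correct and matches the intended (and only sensible) reading of the paper: \Cref{cor:perm_random} is stated without explicit proof as an immediate contrapositive consequence of \Cref{lem:perm_random} applied to the singleton $X=\{\pi\}$, which is exactly what you do.
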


We now extend the previous lemma to honest $\p$-permutation betting games.
By Lemma \ref{lem:perm_bg_p_i},  honest $\p$-permutation betting games do
not cover $\permE$ and honest $\p_2$-permutation betting games do not cover
$\permEXP$.

\begin{lemma}
	\label{lem:perm_random_1}
	For any set of permutations $X \subseteq \bfPi$, if an honest $\p$-betting
	game $g$ succeeds on the set of languages
	$L_X = \{L_\pi \mid \pi \in X \}$, then there is an honest
	$\p$-permutation betting game $G$ that succeeds on $X$.
\end{lemma}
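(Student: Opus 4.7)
The plan is to replay the construction of \Cref{lem:perm_random} in the adaptive setting of betting games. We build $G$ to simulate $g$ step by step: whenever $g$ elects to query a language string $y$ and split its capital $C_g$ so that the post-query capital is $2C_gp_0$ if $y\notin L_\pi$ and $2C_gp_1$ if $y\in L_\pi$ (with $p_0+p_1=1$), $G$ responds by querying the permutation string $a=0^{2|y|}y$. Writing $k$ for the current number of free length-$3|y|$ images, $k_b$ for the number of those whose first bit is $b$, and $C_G$ for $G$'s current capital, $G$ assigns every free image with first bit $b$ the post-query capital $C_G\cdot kp_b/k_b$. This obeys the permutation betting-game averaging condition, and since $L_\pi$ is defined by the first bit of $\pi(0^{2|x|}x)$, the revealed image of $a$ immediately tells $G$ the characteristic bit of $y$ in $L_\pi$ to hand back to the simulated $g$.

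To meet the requirement that every length-$m$ permutation string be queried on time, $G$ interleaves these substantive queries with \emph{trivial} queries that assign the same posterior capital to every free outcome; such queries leave $C_G$ unchanged and also satisfy the averaging condition. The crucial scheduling rule is to delay every trivial query of length $3n$ until after all substantive queries of length $3n$ have occurred, while trivial queries at other lengths may be performed whenever convenient. Because $g$ is honest, all of its length-$n$ queries occur within time $\poly(2^n)$, so the corresponding substantive length-$3n$ queries of $G$ are completed within $\poly(2^n)$, and the remaining $O(2^{3n})$ trivial queries of length $3n$ fit inside $\poly(2^{3n})$ time. Hence $G$ is itself an honest $\p$-permutation betting game.

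The main quantitative step compares $C_G$ to $C_g$ throughout the simulation. Trivial queries leave the ratio $C_G/C_g$ unchanged. On a substantive query at length $3|y|$ with observed first bit $b$, $C_G$ is multiplied by $kp_b/k_b$ while $C_g$ is multiplied by $2p_b$, scaling the ratio by $k/(2k_b)$. Because substantive length-$3n$ queries precede every trivial length-$3n$ query, at the instant of any substantive query at most $2^{|y|}-1$ length-$3|y|$ strings have been queried, giving $k\ge 2^{3|y|}-2^{|y|}$ and $k_b\le 2^{3|y|-1}$, whence $k/(2k_b)\ge 1-2^{-2|y|}$. Taking the product over the at most $2^{|y|}$ substantive queries of length $3|y|$ and then over all $|y|$ yields a uniform lower bound of the form $\prod_{n\ge 1}(1-2^{-2n})^{2^n}\ge c>0$, so $C_G$ stays within a constant factor of $C_g$ throughout. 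Consequently $L_\pi\in S^\infty[g]$ implies $\pi\in S^\infty[G]$, establishing $X\subseteq S^\infty[G]$. The step I expect to be the most delicate is this scheduling argument: simultaneously securing honesty and the lower bound on $k/(2k_b)$ hinges on never letting a trivial length-$3n$ query precede a substantive one.
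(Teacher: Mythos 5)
Your proposal is correct and takes essentially the same approach as the paper: simulate $g$, query $0^{2|y|}y$ whenever $g$ queries $y$, translate $g$'s bets on the membership bit into $G$'s bets on the first bit of the image, and fill in all remaining strings with capital-neutral queries. The paper compresses the capital and scheduling analysis to "almost identical to the previous lemmas," whereas you helpfully make explicit the scheduling rule (delay all trivial length-$3n$ queries until every substantive length-$3n$ query is done) that is needed both to preserve the bound $k \geq 2^{3n}-2^n$ and, together with the honesty of $g$, to keep $G$ an honest $\p$-permutation betting game; this is a worthwhile clarification but not a different argument.
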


\begin{appendixproof}[Proof of \Cref{lem:perm_random_1}]
	The proof follows the same strategy as the previous lemma. We use the
	betting game $g$ to construct $G$, a permutation betting game that succeeds
	on $\pi \in X$. The main difference from the previous lemma is in the
	order that
	the strings are queried. The order will be directed by simulating the
	betting game $g$. Whenever $g$ queries any nonempty string $x$, $G$ responds
	by querying the image of $0^{2|x|}x$. The permutation betting game $G$ then bets
	on the first bit of the image of $0^{2|x|}x$ using $g$'s bets on the
	membership of $x$ to determine the proportion placed on the two possible
	outcomes, just like we did in the previous lemma. Since $G$ will not bet on
	strings that do not have the form $0^{2|x|}x$, those strings can be queried
	after $g$ queries all strings of length $n$. Because $g$ is an honest
	$\p$-betting game, it is easy to see that $G$ will be an honest
	$\p$-permutation betting game.

	The arguments for $G$'s capital and runtime are almost identical to that of
	the previous lemmas, the only difference is that we maintain a list of
	strings that may not be in the standard lexicographic order.
\end{appendixproof}

\begin{corollary}\label{co:perm_betting_game_random}
	If $\pi$ is an honest $\p$-betting game random permutation, then $L_\pi$ is an honest $\p$-betting game random language.
\end{corollary}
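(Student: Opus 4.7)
The plan is to derive this corollary directly from \Cref{lem:perm_random_1} by contraposition, taking the singleton set $X = \{\pi\}$. Specifically, suppose for contradiction that $L_\pi$ is not honest $\p$-betting game random. Then by definition there exists an honest $\p$-betting game $g$ that succeeds on $L_\pi$. Setting $X = \{\pi\}$, we have $L_X = \{L_\pi\}$, so $g$ succeeds on every element of $L_X$. \Cref{lem:perm_random_1} then supplies an honest $\p$-permutation betting game $G$ that succeeds on every element of $X$, i.e., on $\pi$ itself. This contradicts the hypothesis that $\pi$ is honest $\p$-betting game random.

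All the nontrivial work---in particular, showing that the reordering induced by $g$'s adaptive queries can be simulated within the honesty bound on query lengths, and verifying that $G$'s capital tracks $g$'s capital up to a factor of $(1-2^{-2n})^{2^n} \approx e^{-2^{-n}}$ per length---has already been carried out in the proof of \Cref{lem:perm_random_1}. Thus no new construction is required for the corollary; it is a pure contrapositive application of the lemma.

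The only step that warrants explicit attention is ensuring that the singleton-class version of the lemma is valid, which it is: the lemma is stated for arbitrary $X \subseteq \bfPi$, so $X = \{\pi\}$ is admissible. The main conceptual obstacle was absorbed into the preceding lemma, namely that an honest $\p$-betting game for $L_\pi$ must be transformable into an honest $\p$-permutation betting game for $\pi$ without violating the honesty constraint on bet lengths---and this was handled by observing that when $g$ queries a string of length $n$, $G$ queries the length-$3n$ string $0^{2n}x$, which is still polynomially bounded in $n$ and therefore preserves honesty up to a constant factor in the betting-length relationship.
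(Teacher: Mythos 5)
Your proof is correct and is exactly the intended derivation: the paper states \Cref{lem:perm_random_1} for arbitrary $X \subseteq \bfPi$ precisely so that the corollary follows by taking the singleton $X = \{\pi\}$ and applying the contrapositive, just as you describe. The paper gives no separate proof of the corollary because this one-line specialization is the whole argument.
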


\begin{definition}
	Given a language $L$, we define $\bfPi_{L}$ to be set of permutations
	\[\bfPi_L = \myset{\pi \in \bfPi }{
			\begin{array}{l}
				\textrm{ for all }  n > 0 \textrm{ and } x \in \binaryn,   \\
				\pi(0^{2n}x) = by \textrm{ for some } y \in \binary^{3n-1} \\
				\textrm{ if and only if } L[x] = b
			\end{array} }.
	\]
	Given a set of languages $X$, we define $\bfPi_X$ as the set of
	permutations $\bfPi_X = \bigcup_{L \in X} \bfPi_{L}.$
\end{definition}

The next lemma is a weak converse to Lemma \ref{lem:perm_random}.

\begin{lemma}\label{lem:perm_random_2}
	For any set of languages $X \subseteq \binaryinfty$, if a $\p$-computable
	permutation martingale $d$ succeeds on the set of permutations
	$\bfPi_X$, then there is a $\p$-computable
	martingale that succeeds on $X$.
\end{lemma}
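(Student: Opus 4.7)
The plan is to define a classical martingale $d'$ by taking a normalized conditional expectation of $d$ over permutations consistent with the prefix $w$ of a characteristic sequence. Concretely, for each $w \in \binary^*$, let $M(w)$ be the least index so that every string of the form $0^{2|s_i|}s_i$ with $i < |w|$ and $s_i \neq \lambda$ appears in positions $0, 1, \ldots, M(w)-1$ of the standard enumeration; note that $M(w)$ is polynomial in $|w|$. Set
\[f(w) \;=\; \sum_{g \in G_w} \mu(\cyl{g})\, d(g) \qquad\text{and}\qquad d'(w) \;=\; 2^{|w|} f(w),\]
where $G_w \subseteq \PFPi$ consists of the length-$M(w)$ prefix partial permutations $g$ such that, for each $i < |w|$ with $s_i \neq \lambda$, the first bit of $g(0^{2|s_i|}s_i)$ equals $w[i]$.

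First I would check that $d'$ is a classical martingale. The partition $\bfPi_w = \bfPi_{w0} \sqcup \bfPi_{w1}$, combined with the permutation martingale averaging property of $d$ (which makes $f$ insensitive to the choice of summation length above $M(w)$), gives $f(w) = f(w0) + f(w1)$, and hence the classical averaging condition $d'(w) = (d'(w0) + d'(w1))/2$. Nonnegativity and the initial value $d'(\emptylist) = d(\emptylist)$ are immediate.

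Next I would argue the $\p$-computability of $d'$. Although $G_w$ has super-polynomially many elements, the constraints in $w$ touch only the sparse family of strings $0^{2m}y$, and the averaging property of $d$ lets us marginalize out the many unconstrained function values at each length. This reduces the computation to a recursion over the polynomially-many constraint positions, and combined with $d$'s $\poly(|w|)$-time evaluation on prefix partial permutations of length $M(w)$, should yield a $\poly(|w|+r)$-time algorithm approximating $d'(w)$ to additive error $2^{-r}$.

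Finally I would show that $d'$ succeeds on every $L \in X$. Since $d$ succeeds on each $\pi \in \bfPi_L$, we have $\bfPi_L \subseteq S^k[d]$ for every $k$. The measure estimate $\mu(\bfPi_{\chi_L \restr n})$ is comparable to $2^{-n}$, so $d'(\chi_L \restr n)$ equals, up to a bounded factor, the conditional expectation $E[d(\pi \restr M(n)) \mid \bfPi_{\chi_L \restr n}]$. Combining Lemma \ref{le:permutation_martingale_conservation} with a Fatou-type argument on the shrinking family $\{\bfPi_{\chi_L \restr n}\}_{n}$, together with the fact that $d$ attains arbitrarily high values on every $\pi \in \bfPi_L$, should force this conditional expectation to be unbounded in $n$, giving $\limsup_n d'(\chi_L \restr n) = \infty$.

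The main obstacle I anticipate is the $\p$-computability step, since direct enumeration of $G_w$ is exponential; the proof must carefully leverage both the permutation martingale structure of $d$ and the sparsity of constraints from $w$ to collapse the sum into a polynomial-time recursion. A secondary delicate point is the success argument, which requires translating the pointwise $\limsup$-success of $d$ on the measure-zero set $\bfPi_L$ into unbounded growth of the associated classical conditional expectation along $\chi_L$.
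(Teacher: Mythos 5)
Your construction is genuinely different from the paper's.  The paper builds the classical martingale $D$ \emph{pointwise}: it maintains a single explicitly-constructed $\pi_L \in \bfPi_L$ by filling in non-witness images so as to \emph{minimize} the permutation martingale $d$, and then transfers $d$'s growth factor at each witness position $0^{2|x|}x$ to $D$'s bet on $L[x]$.  Because only one $\pi_L$ is tracked and $d$ is evaluated on a single polynomially-long prefix at a time, the resulting $D$ is manifestly $\p$-computable and its success follows pointwise from $d$'s success on that $\pi_L$.  You instead take a conditional-expectation (averaging) approach, and this introduces two obstacles that the paper's route avoids.

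The first and more serious gap is exactly the one you flag: $\p$-computability.  The quantity $f(w)=\sum_{g\in G_w}\mu(\cyl{g})\,d(g)$ sums $d(g)$ over exponentially many length-$M(w)$ prefix partial permutations, and $d(g)$ may depend arbitrarily on the \emph{entire} prefix $g$, not just on the first bits of the witness images.  The averaging property of $d$ only lets you collapse the sum over the \emph{last} coordinate $\free(g)$; it does not give any factorization across the scattered unconstrained positions that would allow a polynomial-time recursion.  I do not see how "sparsity of the constraints" alone fixes this, and indeed this seems to be the reason the paper does not attempt a conditional expectation: it would require $d$ to have special structure (e.g.\ depend only on the witness first bits), which it need not.

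The second gap is the success argument.  It is not true that $\limsup_N d(\pi\restr N)=\infty$ for every $\pi\in\bfPi_L$ forces $\limsup_n E[d(\pi\restr M(n))\mid A_n]=\infty$.  Fatou controls $\liminf$, not $\limsup$: you would need $\liminf_N d(\pi\restr N)=\infty$ $\mu_L$-almost surely to push through $E_{\mu_L}[\liminf_N d]\le\liminf_N E_{\mu_L}[d]$.  Since $d$ succeeds only in the $\limsup$ sense, you must first replace $d$ by a "savings" martingale (e.g.\ $\tilde d=\sum_k 2^{-k}d^{(k)}$, where $d^{(k)}$ freezes once $d$ crosses $2^k$), which achieves $\liminf=\infty$ on $S^\infty[d]$ and is still $\p$-computable; only then does a Fatou-style passage work.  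Your sketch neither invokes the savings trick nor explains why the $\limsup$ transfers to the average, so as written the success claim is unsupported.  There is also a small normalization bug: since you exclude $s_0=\lambda$ from the constraints, $G_0=G_1=G_\lambda$ and hence $f(\lambda)=f(0)=f(1)$, which makes $d'(\lambda)\neq(d'(0)+d'(1))/2$; you would need to shift the normalization by one (or declare $d'$ not to bet on the $\lambda$-bit, as the paper's $D$ does).  The paper's greedy, worst-case construction of $\pi_L$ sidesteps all three issues at once, which is why it is used.
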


\begin{appendixproof}[Proof of \Cref{lem:perm_random_2}]
	This proof follows the same general strategy as the proof of
	\Cref{lem:perm_random}. So, we will focus on where the proofs differ.

	We will use the permutation martingale $d$ to construct a martingale $D$,
	that succeeds on the set of languages $X$. The martingale $D$ operates by
	simulating $d$ and using $d$'s bets on strings of the form $0^{2|x|}x$ to
	bet on the membership of $x$ in the language $D$ is betting on.
	Let $L$ be the language $D$ is betting on and let $\pi_L$ be some language
	in $\bfPi_L$. We will specify how $\pi_L$ is constructed when we examine the
	success set of $D$.

	We now analyze how $D$ bets on all nonempty strings; it does not bet on the empty string.
	Let $D_L^x(1)$ denote the portion of $D$'s
	current capital placed on $x \in \binaryn$ being a member of $L$ and
	$D_L^x(0)$ denote the portion placed on $x$ not being a member of $L$. We
	set $D_L^x(1)$ to be the portion $d$ places on the first bit of
	$\pi_L(0^{2n}x)$ being 1 and
	$D_L^x(0)$ to be the portion $d$ places on the first bit of
	$\pi_L(0^{2n}x)$ being 0. For simplicity, we use $w$ to denote the string
	$\pi_L \upharpoonright (0^{2n}x)$. Therefore,
	\[D_L^x(b) = \frac{\sum_{s \in \textrm{free}_b(w)} d(ws)}{|\textrm{free}(w)|d(w)},\]
	where $\textrm{free}_b(w)$ is the set of strings in $\textrm{free}(w)$ that
	begin with bit $b$. Let $s^\prime \in \textrm{free}_b(w)$ be a string that
	minimizes $d(ws)$. Then,
	\[D_L^x(b) \geq
		\frac{|\textrm{free}_b(w)|}{|\textrm{free}(w)|} \frac{d(ws^\prime)}{d(w)}
		\geq
		\frac{(2^{3n-1} - 2^n)}{2^{3n}}\frac{d(ws^\prime)}{d(w)} =  \frac{1}{2}(1-\frac{1}{2^{2n-1}}) \frac{d(ws^\prime)}{d(w)}.\]

	The first inequality follows by our choice of $s^\prime$, the second follows
	because any string of the form $0^{2n}x$ is among the first $2^n$ string of
	$\{0,1\}^{3n}$. After betting that $x$'s characteristic bit is
	$b$, the martingale's capital is multiplied by a factor of
	$2D_L^x(b) \geq (1-\frac{1}{2^{2n-1}}) \frac{d(ws^\prime)}{d(w)}$. As we
	have previously shown, the factor $(1-\frac{1}{2^{2n-1}})$ so small that it
	only contributes a constant factor after betting on all string. So, we only
	have to focus on the factor $\frac{d(ws^\prime)}{d(w)}$. This factor is
	the factor by which $d$'s current capital grows after betting on $x$'s
	characteristic bit being $b$.

	We now specify how $\pi_L$ is constructed.
	When generating the string $w = \pi_L \upharpoonright (0^{2|x|}x)$ which is
	polynomially longer than $L \upharpoonright x$, we choose images of strings
	that do not have the form $0^{|x|}x$ to be any string that minimizes the
	current capital of the permutation martingale, $d$. Now we show that $D$
	succeeds on any language $L \in X$. It is not hard to see that
	$\pi_L \in \bfPi_L \subseteq \bfPi_X$. Therefore, $d$ succeeds
	on $\pi_L$ by the hypothesis of this theorem. By the construction of
	$\pi_L$, the only time $d$'s capital could increase while betting on
	it, is when it is betting on strings of the form $0^{|x|}x$. The factor
	by which $d$'s current capital is multiplied is $\frac{d(ws^\prime)}{d(w)}$,
	the same factor by which $D$ is multiplied after betting on $L[x]$ being
	$b$. Therefore, if $d$ succeeds on $\pi_L$, then $D$ also succeeds on $L$.
	Since $L$ is an arbitrary language in $\bfPi_X$, the theorem follows.
\end{appendixproof}

\begin{corollary}
	If $L$ is a p-random langauge, then $\Pi_L$ does not have p-permutation measure 0.
\end{corollary}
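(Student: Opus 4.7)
The plan is to obtain this corollary as a direct instance of \Cref{lem:perm_random_2}, arguing by contrapositive. Suppose $\bfPi_L$ has $\p$-permutation measure $0$. By definition, there is a $\p$-computable permutation martingale $d$ that succeeds on every permutation in $\bfPi_L$.

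Now I would apply \Cref{lem:perm_random_2} with the singleton set of languages $X = \{L\}$. Unwinding the definition, $\bfPi_X = \bigcup_{L' \in X} \bfPi_{L'} = \bfPi_L$, so the hypothesis of that lemma is met: the $\p$-computable permutation martingale $d$ succeeds on the set of permutations $\bfPi_X$. The lemma then yields a $\p$-computable classical martingale $D$ that succeeds on every language in $X$, in particular on $L$ itself.

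The existence of such a $D$ contradicts the assumption that $L$ is $\p$-random, since by definition a $\p$-random language is not in the success set of any $\p$-computable martingale. Therefore $\bfPi_L$ cannot have $\p$-permutation measure $0$, which is exactly the conclusion of the corollary. There is no real obstacle here beyond correctly matching the universally quantified hypothesis of \Cref{lem:perm_random_2} with the singleton case $X = \{L\}$; the heavy lifting (construction of the martingale $D$ that mimics $d$'s bets on strings of the form $0^{2|x|}x$ and the estimate $(1 - 2^{-(2n-1)})$ of capital loss) has already been carried out in the proof of that lemma.
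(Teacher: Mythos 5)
Your argument is correct and is exactly the intended derivation: the paper states this as an immediate corollary of \Cref{lem:perm_random_2}, and the contrapositive instantiation with $X = \{L\}$ (so that $\bfPi_X = \bfPi_L$) is precisely what is needed. No gaps; the only content beyond the lemma is the routine unwinding of the measure-0 and $\p$-randomness definitions, which you handle correctly.
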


\section{Random Permutations for \texorpdfstring{$\NP$}{NP} \texorpdfstring{$\cap$}{∩} \texorpdfstring{$\coNP$}{coNP}} \label{sec:p_np_conp}

Bennett and Gill \cite{BennettGill81} studied the power of random oracles in separating complexity classes. In particular, they showed that $\P^A \neq \NP^A$ relative to a random oracle with probability 1. However, they were not able to separate $\P$ from $\NP \cap \co\NP$ relative to a random oracle. They also made the observation that if $\P^A = \NP^A \cap \co\NP^A$ for a random oracle $A$, then $\P^A$ must include seemingly computationally hard problems such as factorization. They also proved that any non-oracle-dependent language that belongs to $\P^A$ with probability 1, also belongs to $\BPP$. As a result, if $\P^A = \NP^A \cap \co\NP^A$ for a random oracle $A$ with probability 1, then these difficult problems in $ \NP \cap \coNP$ would be solvable in $\BPP$. To achieve a separation between  $\P^A$ and  $\NP^A \cap \co\NP^A$, they considered length-preserving permutations on $\binary^*$ and showed that
$\P^\pi \neq \NP^\pi \cap \co\NP^\pi$ for every random permutation $\pi$.

Using resource-bounded permutation betting games on the set of all
length preserving permutations of $\binary^\star$, we  strengthen the Bennett-Gill permutation separation, proving that
$\P^\pi\neq \NP^\pi \cap \co\NP^\pi$ for any $\p$-betting-game random
permutations $\pi$. More generally, we show that the set of permutations
$\pi$ such that, $\NP^\pi$ is not $\DTIME^{\pi}(2^{kn})$-bi-immune has
$\p$-permutation-betting-game measure 0. Recall that a language $L$ is bi-immune to a complexity class $C$ if no infinite subset of $L$ or its complement is decidable in $C$ \cite{FlajoletSteyaert74, BalSch85}.

The following is our main theorem where its first part states that relative to a $\p$-betting-game random permutation $\pi$, there is a language $L$ in $\CCfont{NLIN}^\pi \cap \co\CCfont{NLIN}^\pi$ such that  no infinite subset of $L$ or its complement is $\DTIME^\pi(2^{kn})$-decidable.

\begin{theorem}\label{th:main}
	\begin{enumerate}
		\item If $\pi$ is a $\p$-betting-game random permutation, then $\CCfont{NLIN}^\pi \cap \co\CCfont{NLIN}^\pi$ contains a $\DTIME^\pi(2^{kn})$-bi-immune language for all $k \geq 1$.
		\item If $\pi$ is a $\ptwo$-betting-game random permutation, then $\NP^\pi\cap\coNP^\pi$ contains a $\DTIME^\pi(2^{n^k})$-bi-immune language for all $k \geq 1$.
	\end{enumerate}
\end{theorem}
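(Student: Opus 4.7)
For each fixed $k \geq 1$, I will define a padded ``preimage-first-bit'' language that lies in $\CCfont{NLIN}^\pi \cap \co\CCfont{NLIN}^\pi$ but whose membership bit at input length $n$ depends on $\pi$ at length $(k+2)n$, preventing any $\DTIME^\pi(2^{kn})$ machine from brute-forcing the relevant portion of $\pi$. A $\p$-permutation betting game will then show that the set of $\pi$ for which this language fails to be $\DTIME^\pi(2^{kn})$-bi-immune has $\p$-betting-game measure zero, and a weighted combination over the pairs $(k,i)$ will deliver a single master game handling all $k$ and all machines simultaneously. Part 2 will be a direct rescaling: padding by length $n^{k+1}$, promoting to $\ptwo$ so that $\DTIME^\pi(2^{n^k})$-simulations fit in the budget, and observing that the resulting language lies in $\NP^\pi \cap \coNP^\pi$ because its nondeterministic witness now has polynomial length.

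Fix $k$ and set $L_\pi = \{ x \mid \pi^{-1}(x 0^{(k+1)|x|}) \text{ begins with } 1 \}$. Because $\pi$ is a length-preserving bijection, both $\CCfont{NLIN}^\pi$ and $\co\CCfont{NLIN}^\pi$ decide $L_\pi$ by guessing a length-$(k+2)n$ witness starting with the appropriate bit and verifying via one oracle query. Enumerate the $\DTIME^\pi(2^{kn})$ oracle machines as $M_1, M_2, \ldots$, and for each $M_i$ build a $\p$-permutation betting game $G_i$ that succeeds on every $\pi$ whose $L(M_i^\pi)$ is infinite and satisfies $L(M_i^\pi) \subseteq L_\pi$ or $L(M_i^\pi) \subseteq \bar L_\pi$.

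Each $G_i$ runs in stages indexed by $n$. At stage $n$, $G_i$ first silently simulates $M_i^\pi$ on every $x \in \{0,1\}^n$ using zero-capital-change uniform bets for every oracle query $M_i$ makes, thereby computing $A_n = \{x : M_i^\pi(x) = 1\}$. Because the simulation touches at most $2^{(k+1)n}$ of the $2^{(k+2)n}$ strings of length $(k+2)n$, nearly every such string is still unqueried. If $A_n$ contains some $x$ whose padded preimage $\pi^{-1}(x 0^{(k+1)n})$ has not yet been uncovered, let $x_n$ be the lexicographically first such $x$ and run two ``doubling sub-bets'' on disjoint capital pools. The first enumerates the still-unqueried length-$(k+2)n$ strings beginning with $0$, and at each step places the entire sub-pool uniformly on the free values other than $x_n 0^{(k+1)n}$; a telescoping product of the form $\prod_j F_j/(F_j - 1)$ collapses to $|\text{free at start}|/|\text{free at end}|$, which is $2 - o(1)$ in this regime, exactly when $x_n \in L_\pi$, and is $0$ otherwise. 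The symmetric sub-bet on $1$-prefixed strings pays off when $x_n \in \bar L_\pi$. Wrapping each sub-bet in an $\epsilon$-reserve gives per-stage multipliers $1+\epsilon - o(1)$ on a correct hypothesis and $1-\epsilon$ on an incorrect one, so whenever $L(M_i^\pi)$ is infinite and is entirely contained in $L_\pi$ (respectively $\bar L_\pi$), the corresponding pool's capital grows like $(1+\epsilon)^{|S|}$ for the infinite set $S$ of active stages, proving $G_i$ succeeds.

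The main obstacle will be combining the $G_i$'s, and then the resulting master games across $k$, into a single $\p$-computable permutation betting game while coping with the non-monotone, adaptive query order each sub-game demands. I plan to handle this with a shared bookkeeping table, so that sub-games reuse each other's query answers, together with a round-robin scheduler at each fresh length $n$ that issues the union of the active sub-games' next-needed queries and routes responses back. Because each $M_i$ touches only $2^{O(kn)}$ oracle positions per input, the per-stage cost stays polynomial in $2^n$, meeting the $\p$-betting-game bound. A secondary issue is that at stages where every $x \in A_n$ already has its padded preimage revealed by the simulation, no doubling sub-bet can be placed; but a counting argument shows that $M_i$'s simulation can reveal only a $2^{-n}/\poly(n)$ fraction of length-$(k+2)n$ preimages per stage, so infinitude of $L(M_i^\pi)$ forces infinitely many stages with an ``uninformed'' acceptance, which is exactly what the sub-bets are designed to exploit.
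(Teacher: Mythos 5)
Your overall architecture is right and matches the paper's in spirit: a preimage-first-bit test language with padding to length $(k+2)n$, membership in $\NLIN^\pi\cap\co\NLIN^\pi$ by guessing the preimage, and a betting game that extracts capital from any $\DTIME^\pi(2^{kn})$ machine that correctly decides an infinite subset of $L_\pi$ or its complement. The telescoping $\prod_j F_j/(F_j-1)=F_{\mathrm{start}}/F_{\mathrm{end}}\approx 2$ trick for the doubling sub-bets is sound, as is the $\epsilon$-reserve wrapper and the weighted combination across $(k,i)$.

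The genuine gap is your dismissal of the ``informed acceptance'' case. You claim a counting argument ``forces infinitely many stages with an uninformed acceptance'' because $M_i$ can only reveal a $2^{-n}/\mathrm{poly}(n)$ fraction of length-$(k+2)n$ preimages. That bound is correct \emph{on average}, but it does not rule out an adversarial $\pi$ on which $M_i$ happens to query the preimage of $x\,0^{(k+1)n}$ for \emph{every} $x$ it accepts, at \emph{every} stage — there are only $2^n$ relevant preimages at stage $n$ against $2^{(k+1)n}$ total queries, so nothing prevents $M_i$'s query set from containing all of them. On such $\pi$ your game never finds an eligible $x_n$, its sub-pools never activate, and the argument collapses. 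The paper handles this explicitly with a second sub-game: whenever the simulated $M_i$ is about to query a string whose image could serve as a padded preimage/witness, the betting game preemptively places a tiny bet on that low-probability outcome; the payoff on a hit is roughly $2^{(k+3)n}$ against a stake of order $2^{-(k+2)n}/\mathrm{poly}(n)$, so capital grows whenever witness-queries occur infinitely often. Only in the complementary case (witnesses queried finitely often) does the paper run a doubling strategy like yours. You need to add this ``bet on the witness-query outcome'' branch; without it the proof does not cover all $\pi$ where $L(M_i^\pi)$ is an infinite correct subset.

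Two smaller points worth fixing. First, you should state the honesty restriction explicitly: the betting game at stage $n$ makes nonzero wagers only on strings of length $O(n)$ (it may still query longer strings with zero wager to feed the simulation of $M_i$); this is what places the game inside the $\p$-betting-game class with a measure-conservation theorem, as in the paper's Lemma~\ref{le:main_theorem}. Second, your round-robin scheduler must respect the betting-game rule that each oracle position is queried once; since different $M_i$'s and different stages overlap in the strings they touch, the shared bookkeeping table has to be the single source of truth for all answers, with sub-games always consulting it before issuing a fresh query — this is workable but should be said, since a naive per-sub-game simulation would illegally re-query positions.
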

Our headline result is a corollary of \Cref{th:main}.
\begin{corollary}\label{co:main}
	If $\pi$ is a $\p$-betting-game random permutation, then $\P^\pi \neq \NP^\pi \cap \coNP^\pi$.
\end{corollary}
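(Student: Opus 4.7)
The plan is to derive this corollary directly from the first part of \Cref{th:main}. First, I would apply that theorem to the given $\p$-betting-game random permutation $\pi$ with $k = 1$, obtaining a language $L \in \NLIN^\pi \cap \co\NLIN^\pi$ that is $\DTIME^\pi(2^{n})$-bi-immune. Since any nondeterministic linear-time machine is in particular a nondeterministic polynomial-time machine, $\NLIN^\pi \subseteq \NP^\pi$ and $\co\NLIN^\pi \subseteq \coNP^\pi$, so $L \in \NP^\pi \cap \coNP^\pi$.

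The second step is to verify that $L \notin \P^\pi$. For any polynomial $p(n)$ we have $p(n) \leq 2^{n}$ for all sufficiently large $n$, hence $\P^\pi \subseteq \DTIME^\pi(2^{n})$. By the definition of bi-immunity, no infinite subset of $L$ (and no infinite subset of $\overline{L}$) is decidable in $\DTIME^\pi(2^{n})$. Bi-immune languages are by convention infinite (indeed, both $L$ and $\overline{L}$ must be infinite, else the definition is vacuous on one side), so if $L$ were in $\P^\pi \subseteq \DTIME^\pi(2^{n})$, then $L$ itself would be an infinite subset of $L$ inside $\DTIME^\pi(2^{n})$, a direct contradiction. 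Therefore $L \notin \P^\pi$, which gives $\P^\pi \neq \NP^\pi \cap \coNP^\pi$.

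Because all of the content is already packaged inside \Cref{th:main}, the derivation is routine and there is no substantive obstacle. The only mild subtlety worth spelling out is the definitional remark that bi-immunity presupposes an infinite language, which is what converts the ``no infinite subset'' conclusion of \Cref{th:main} into the outright exclusion of $L$ from $\P^\pi$ needed for the separation.
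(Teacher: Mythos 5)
Your proposal is correct and follows exactly the route the paper intends: the paper presents \Cref{co:main} as an immediate consequence of \Cref{th:main}(1) without spelling out the derivation, and your argument — take $k=1$, observe $\NLIN^\pi\cap\co\NLIN^\pi\subseteq\NP^\pi\cap\coNP^\pi$ and $\P^\pi\subseteq\DTIME^\pi(2^n)$, then use bi-immunity (which forces $L$ to be infinite) to exclude $L$ from $\P^\pi$ — supplies precisely the missing routine details.
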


To prove \Cref{th:main}, we first define the following test languages. For each $k \geq 1$, define the ``half range'' test languages

\newcommand{\POLYHALFRANGE}{\CCfont{POLYHRNG}}

\begin{align*}
	\HALFRANGE_{k}^\pi
	 & = \{x\ |\ \exists\ y \in \binary^{k|x|-1},\ \pi(0y) = x^{k}\}      \\
	 & = \{x\ |\ \forall \ y \in \binary^{k|x|-1},\ \pi(1y) \neq x^{k}\},
\end{align*}
and
\begin{align*}
	\POLYHALFRANGE_{k}^\pi
	 & = \{x\ |\ \exists\ y \in \binary^{|x|^k-1},\ \pi(0y) = x^{|x|^{k-1}} \}     \\
	 & = \{x\ |\ \forall \ y \in \binary^{|x|^k-1},\ \pi(1y) \neq x^{|x|^{k-1}}\}.
\end{align*}

A string $x \in \{0,1\}^n$ belongs to $\HALFRANGE_k^\pi$ if the preimage of $x^k$ ($k$ copies of $x$) in $\{0,1\}^{kn}$ begins with 0. If $x$ does not belong to $\HALFRANGE_k^\pi$, then the preimage of $x^k$ begins with 1. In either case, the preimage serves as a witness for $x$. The language $\POLYHALFRANGE_k^\pi$ is similar, but we are looking for a preimage in $\{0,1\}^{n^k}$ of $x^{n^{k-1}}$ ($n^{k-1}$ copies of $x$).
It follows that
\[\HALFRANGE_{k}^\pi \in \NLIN^\pi\cap\co\NLIN^\pi\]and
\[\HALFRANGE_{k}^\pi \in \NTIME^\pi(n^{k}) \cap \co\NTIME^\pi(n^{k})\]
for all $k \geq 1$.

The following lemma implies \Cref{th:main}.

\begin{lemma}\label{le:main_theorem}
	Let $k \geq 0$.
	\begin{enumerate}
		\item The set
		      $X = \{ \pi \in \bfPi \mid \HALFRANGE_{k+3}^\pi \mathrm{\ is\ not\ } \DTIME(2^{kn})^\pi \mathrm{-immune} \}$
		      has $O(n)$-honest $\p$-permutation-betting-game measure 0.
		\item The set
		      $X = \{ \pi \in \bfPi \mid \POLYHALFRANGE_{k+1}^\pi \mathrm{\ is\ not\ } \DTIME(2^{n^{k}})^\pi \mathrm{-immune} \}$
		      has $O(n^{k})$-honest $\ptwo$-permutation-betting-game measure 0.
	\end{enumerate}
\end{lemma}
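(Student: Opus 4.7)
The plan is to cover $X$ by a weighted sum of sub-betting-games, one per oracle machine. Enumerate the oracle Turing machines $M_1, M_2, \ldots$ running in time $2^{kn}$ (for part~1) or $2^{n^k}$ (for part~2). For each $M_i$ I build a sub-game $G_i$ that succeeds on every $\pi$ for which $L(M_i^\pi)$ is an infinite subset of $\HALFRANGE_{k+3}^\pi$ (respectively $\POLYHALFRANGE_{k+1}^\pi$), plus an auxiliary sub-game $H_i$ that handles a measure-zero residue discussed below. The aggregate $G=\sum_{i\ge 1} 2^{-i}(G_i+H_i)$ then succeeds on all of $X$. I sketch part~1; part~2 follows by substituting the parameters into the same template and yields an honest $\ptwo$-computable betting game.

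The sub-game $G_i$ proceeds in stages $n=1,2,\ldots$. At stage $n$ it first simulates $M_i^\pi(x)$ for every $x\in\{0,1\}^n$, resolving each oracle query by a corresponding query to $\pi$ with a trivial (capital-preserving) bet. Let $Q_n$ denote the set of simulation queries and define
\[
S_n = \{\, x\in\{0,1\}^n : M_i^\pi(x)=1 \text{ and } \pi^{-1}(x^{k+3})\notin Q_n\,\}, \qquad T_n=\{x^{k+3}:x\in S_n\}.
\]
For each still-free string $y$ of length $(k+3)n$ beginning with the bit $1$, $G_i$ queries $y$ and bets \emph{against} $\pi(y)\in T_n$: it assigns weight $0$ to every value in $T_n$ and spreads the remaining weight uniformly over the other free values. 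Remaining length-$(k+3)n$ strings are queried with trivial bets, meeting the honesty requirement on strings of length $O(n)$ and the $2^{O(n)}$ per-stage time budget. If $L(M_i^\pi)$ is an infinite subset of $\HALFRANGE_{k+3}^\pi$, then every $x\in S_n$ has $\pi^{-1}(x^{k+3})$ starting with $0$, so no $1$-starting $y$ maps into $T_n$ and every auxiliary bet wins; writing $F_j$ for the number of free length-$(k+3)n$ values just before the $j$-th auxiliary query, the stage multiplies capital by $\prod_j F_j/(F_j-|T_n|)$ which, since $|T_n|$ stays fixed at $|S_n|$ throughout the stage and $\sum_j 1/F_j\approx \ln 2$, evaluates to roughly $2^{|S_n|}$. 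To conclude that the capital diverges it suffices to show $\sum_n |S_n|=\infty$; since $L(M_i^\pi)$ is infinite we have $\sum_n s_n=\infty$ where $s_n$ counts the accepted length-$n$ strings, and a union bound $\Pr_\pi[s_n\ne |S_n|]\le 2^n\cdot 2^{kn}/2^{(k+3)n}=2^{-2n}$ combined with Borel--Cantelli forces $s_n=|S_n|$ for all but finitely many $n$ on all $\pi$ outside a classically measure-zero set.

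The main obstacle is the residual classically measure-zero set of permutations on which $M_i$ repeatedly locates a target preimage during simulation, where $G_i$ fails to gain. These are captured by the auxiliary sub-game $H_i$, which at every $M_i$-simulation query $y$ places a mildly slanted bet toward $\pi(y)\in\{x^{k+3}:x\in\{0,1\}^n\}$: each preimage hit multiplies the capital by a constant factor strictly greater than $1$ while a no-hit query depreciates it only by the negligible factor $1-O(2^{-n})$, so infinitely many hits across stages drive the capital to infinity. The combined game $\sum_i 2^{-i}(G_i+H_i)$ therefore succeeds on every $\pi$ witnessing non-immunity, is $\p$-computable, and remains $O(n)$-honest since every bet sits on a string of length $(k+3)n$; the analogous construction with $\HALFRANGE_{k+3}\to \POLYHALFRANGE_{k+1}$ and $2^{kn}\to 2^{n^k}$ gives the $\ptwo$-betting-game for part~2. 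The technical subtleties lie in making the capital-growth calculation quantitative enough to drown the $H_i$-losses on typical $\pi$ and in verifying that the aggregate remains within the permitted betting-game time class after summing over $i$.
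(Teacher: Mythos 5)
Your treatment of the case where $\HALFRANGE_{k+3}^\pi$ is infinite closely parallels the paper's: your $G_i$ corresponds to the subcase where, past some point, the simulation never queries a codeword preimage, and your $H_i$ to the subcase where it does so infinitely often. However, there is a genuine gap. By the standard convention, ``$\HALFRANGE_{k+3}^\pi$ is not $\DTIME(2^{kn})^\pi$-immune'' includes the possibility that $\HALFRANGE_{k+3}^\pi$ is \emph{finite} (a finite language is automatically non-immune). On such a $\pi$ no machine $M_i$ can have $L(M_i^\pi)$ be an \emph{infinite} subset of $\HALFRANGE_{k+3}^\pi$, so no $G_i$ ever fires, and $H_i$ only gains on simulation queries that happen to land on codeword preimages, which is not guaranteed to happen infinitely often for such $\pi$. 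The paper explicitly partitions $X$ into $X_1$ (finite $\HALFRANGE$) and $X_2$ (infinite $\HALFRANGE$), and devotes a dedicated family of sub-shares $\{a_i\}$ to $X_1$: for each $i$ and each $n \geq i$ it wagers on the probability-$\tfrac12$ event $\mathcal{A}_n$ that $0^n \notin \HALFRANGE_{k+3}^\pi$, which for finite $\HALFRANGE_{k+3}^\pi$ holds for all but finitely many $n$ and so doubles infinitely many of the $a_i$. Your construction needs an analogous component to cover $X_1$.

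Two smaller issues. First, writing $G = \sum_{i} 2^{-i}(G_i + H_i)$ glosses over the fact that permutation betting games choose their query order adaptively and do not sum pointwise; you really want a single master game with capital sub-shares (as the paper does), and then $S_n$ and $T_n$ should be defined relative to \emph{all} strings the master has queried so far, not merely $M_i$'s own simulation queries. Second, the Borel--Cantelli step is an unnecessary detour: once both $G_i$ and $H_i$ are in place, a direct case split on each fixed $\pi$ (either $s_n = |S_n|$ eventually, in which case $G_i$ wins, or preimage hits recur infinitely often, in which case $H_i$ wins) dispenses with any appeal to measure.
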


\begin{appendixproof}[Proof of \Cref{le:main_theorem}]

	We design a betting game that can succeed on two classes of permutations.
	The argument breaks down into two main cases.

	The betting game succeeds on permutations $\pi$ that do \emph{not} map
	strings of the form $1y$ to $x^{k+3}$.
	These permutations are relatively straightforward, because one can easily
	compute the probability that, for all $y$, $\pi(1y) \neq 0^{(k+3)n}$, i.e.\
	$0^n \not\in \HALFRANGE_{k+3}^\pi$.
	This calculation merely requires examining the status of all length-$(k+3)n$
	strings that have been queried.
	In particular, the order in which the betting game queries strings
	\emph{does not} affect its ability to wager on this event, making the
	betting strategy simpler in this case.

	We then consider the complementary class of permutations that \emph{do} map
	some string $1y$ to a string of the form $x^{k+3}$.  Here, $y$ is a
	\emph{potential witness string}.  Unlike the previous case, we exploit
	the betting game's ability to query strings \emph{out of order}.
	In particular, we know there exists some permutation TM capable of predicting
	events of the form $\pi(1y) = x^{k+3}$.  Accordingly, our betting game attempts
	\emph{all} TMs by simulating them on all length-$n$ strings.
	It is precisely this universal simulation that triggers queries
	on (potential) witness strings in arbitrary orders.
	Whenever the simulation would query a witness string, the betting game
	places a bet on that string \emph{first}.

	We show that if, for infinitely many $n$, a witness string is indeed
	queried during the simulation, then the betting game wins unbounded capital.
	On the other hand, if none of the potential witness strings are queried,
	then there is a TM that makes infinitely many \emph{correct} predictions
	of the form $\pi(0y) = x^{k+3}$, thereby allowing another betting strategy
	to succeed.

	We now formalize the above outline and construct the required betting game
	in detail.
	Frist, partition \(X\) into the two sets
	\[
		X_1 \;=\; \{\:\pi \in X \mid \HALFRANGE_{k+3}^\pi
		\text{ is finite}\}\quad\text{and}\quad
		X_2 \;=\; \{\:\pi \in X \mid \HALFRANGE_{k+3}^\pi
		\text{ is infinite}\}.
	\]
	We now design an honest \(\p\)-permutation-betting game, \(G\),
	that succeeds on all permutations in \(X_1\) as well as in \(X_2\).
	To do this, we split \(G\)'s initial capital into infinitely many sub-shares
	according to
	$ a_i = b_i = c_i = \frac{1}{i^2}, i=1,2,\dots.$
	The sub-shares \(\{a_i\}\) are used to bet on permutations in \(X_1\),
	while the sub-shares \(\{b_i,c_i\}\) are reserved for betting on those
	in \(X_2\).

	\textbf{Succeeding on $X_1$.}
	For a randomly selected permutation $\pi$, consider the event $\mathcal{A}_n$, that $0^n \not\in \HALFRANGE_{k+3}^\pi$ i.e., $\pi(1y) = 0^{n(k+3)}$, for some string $y$. The probability of $\mathcal{A}_n$ is 1/2 and it only depends on images of length-$(k+3)n$ strings. To keep track of which length-$(k+3)n$ strings have been queried, we let
	$\omega \in (\,\{\binary^{(k+3)n}\}\;\cup\;\{\star\})^{2^{(k+3)n}}$
	encode the current \emph{status} of each strings. Specifically, if the $i$th length-$(k+3)n$ string has already
	been queried and mapped to some string $z$, then we set
	$\omega[i] = z \in \binary^{(k+3)n}$; otherwise, if it has not
	been queried yet, we set $\omega[i] = \star$.

	Recall that $G$ is our overall betting game, which updates its
	capital shares $a_i$ through a ``subgame'' $G_{a_i}$ whenever $i \ge n$.
	We define
	\[
		G_{a_i}(\omega)
		\;=\;
		\frac{a_i}{\Pr(\mathcal{A}_n)}
		\;\Pr\bigl(\mathcal{A}_n \,\big|\,
		\omega\bigr).
	\]

	Let $\omega^{\,i\to b}$ denote the string $\omega$ with its $i$th component
	replaced by $b\in \{\binary^{(k+3)n}\}\cup\{\star\}$.  We define
	$\free(\omega)$ to be the set of all length-$(k+3)n$ strings available as an image.  It is routine to verify that:
	\[
		G_{a_i}(\omega^{\,i\to \star})
		\;=\;
		\frac{1}{\bigl|\free(\omega^{\,i\to \star})\bigr|}
		\;\sum_{\,b \,\in\, \free(\omega^{\,i\to\star})}\;
		G_{a_i}\bigl(\omega^{\,i\to b}\bigr).
	\]
	Hence, the summation of all $G_{a_i}$ forms a betting game.

	Since $\Pr(\mathcal{A}_n)=\tfrac12$, it is easy to verify that whenever
	$\omega$ represents a configuration in which $\mathcal{A}_n$ is true, we have
	\[
		G_{a_i}(\omega) \;=\; 2\,a_i.
	\]
	Thus every time $\mathcal{A}_n$ occurs, the subgame $G_{a_i}$ doubles its capital
	from $a_i$ to $2\,a_i$.  For every permutation $\pi \in X_1$, $\mathcal{A}_n$
	holds for all but finitely many~$n$, so infinitely many $a_i$'s grow
	unboundedly.  Therefore, $G$ succeeds on all $\pi \in X_1$.

	The ``subgame" $G_{a_i}$ can be computed in $O\bigl(2^{(k+3)n}\bigr)$ time
	by examining the mapping of every string of length~$(k+3)n$.  Moreover,
	the order in which $G$ queries length-\(2n\) strings has no impact on its
	success for $\pi \in X_1$.  As we shall see next, this changes when dealing
	with permutations in~$X_2$.

	\textbf{Succeeding on $X_2$.}
	Given $\pi \in X_2$, we know $\HALFRANGE_{k+3}^\pi$ is not
	$\DTIME(2^{kn})^\pi$-immune.  Thus there is some $\DTIME(2^{kn})^\pi$
	oracle TM that recognizes an infinite subset of $\HALFRANGE_{k+3}^\pi$.
	A string $y \in \{0,1\}^{k(n+1)-1}$ is a \emph{witness} for the membership of
	a length-$n$ string $x$ in $\HALFRANGE_{k+3}^\pi$ if
	$\pi(0y) = x^{(k+1)}$.  We simulate the first $n$ such $\DTIME(2^{kn})^\pi$
	machines on all length-$n$ strings.  Two cases arise:

	(1)~We query witnesses of some $x \in \HALFRANGE_{k+3}^\pi$ for
	infinitely many $n$ during the simulation phase.  Let
	$Q_n \subseteq \{0,1\}^{(k+3)n-1}$ be the set of strings that might
	witness membership of any length-$n$ string $x$ queried in the simulation.
	We use a portion of $b_n$ to bet on the event that $y \in Q_n$ is indeed
	a witness to $x \in \HALFRANGE_{k+3}^\pi$.  For large $n$,
	$|Q_n| < n^2\,2^{(k+1)n}$ because each of the $n$ rounds simulates at most
	$n$ distinct $\DTIME(2^{kn})^\pi$ machines on all length-$n$ inputs.
	We split $b_n$ evenly among the $2^n \times |Q_n|$ possible string-witness
	pairs; a winning bet multiplies that share by at least
	$2^{(k+3)n} - n^2 \,2^{(k+2)n}$ (because the TMs running so far could have
	queried at most $n^2\,2^{(k+2)n}$ length-$(k+3)n$ strings).  Hence the
	capital used per event jumps to
	$ (\frac{b_n}{n\,2^{(k+2)n}}) (2^{(k+3)n} - n^2\,2^{(k+2)n}) = \omega\bigl(2^{0.9n}\bigr),
	$
	and since one of these events arises infinitely often, the betting game
	succeeds.

	(2)~Witness strings are queried only finitely many times.  Eventually, none
	of the TMs ever queries a witness string.  At that stage, after simulating
	the TMs on length-$n$ strings, the betting game employs $c_i$ to wager on
	any $\DTIME(2^{kn})^\pi$ TM $M_i$ that has not made a mistake and accepts
	some $x \in \{0,1\}^n$.  Let $B_{x}$ be the event $x \in
		\HALFRANGE_{k+3}^\pi$, i.e.\ there exists $y$ with
	$\pi(0y)= x^{(k+1)}$.  For a random permutation, $\Pr(B_{x})= \tfrac12$.
	After simulating all TMs on length-$n$ strings \emph{without} querying any
	witness for $x$, the conditional probability of $B_{x}$ remains at least
	$ \frac{2^{(k+3)n} - n^2\,2^{(k+2)n}}{2^{(k+3)n}} \approx \tfrac12.$
	Hence the betting game approximately doubles $c_i$ each time $M_i(x)$ is
	accepted.  Because we are guaranteed that some $M_i$ correctly decides an
	infinite subset of $\HALFRANGE_{k+3}^\pi$, this betting also
	succeeds.

	Clearly, this construction is implementable in $\DTIME(2^{O(n)})$.
	Notice moreover that while querying strings of length $n$, the betting
	game places wagers only on strings of length $(k+3)n$, making it an
	$(k+3)n$-honest $\p$-permutation betting game.

	\textbf{Sketch for part (2).}
	The second statement uses \(\POLYHALFRANGE_{k+1}^\pi\) in place
	of \(\HALFRANGE_{k+3}^\pi\), and the class \(\DTIME(2^{n^k})^\pi\)
	replaces \(\DTIME(2^{kn})^\pi\).  One repeats the same betting-game strategy
	but scales the length parameters and the honesty parameter accordingly;
	the resulting game is \(O(n^{k+1})\)-honest in \(\DTIME(2^{n^{O(1)}})\).  All other details are essentially unchanged. Therefore, the set of permutations
	for which $\POLYHALFRANGE_{k+1}^\pi$ is not
	\(\DTIME(2^{n^k})^\pi\)-immune has $\ptwo$-permutation-betting-game measure~0.

\end{appendixproof}

By symmetry of $\NLIN^\pi\cap\co\NLIN^\pi$ and $\NTIME^\pi(n^{k}) \cap \co\NTIME^\pi(n^{k})$, \Cref{le:main_theorem} also applies to the complement of $\HALFRANGE_{k+3}^\pi$ and $\POLYHALFRANGE_{k+1}^\pi$. Therefore, both languages are bi-immune and \Cref{th:main} follows.

Combining \Cref{le:main_theorem} with \Cref{prop:pspace_random_implies_honest_p_betting_game_random} also gives the following corollary. In the next section we will prove more results about $\pspace$-random permutations.

\begin{corollary}\label{co:main_pspace}
	\begin{enumerate}
		\item If $\pi$ is a $\pspace$-random permutation, then $\CCfont{NLIN}^\pi \cap \co\CCfont{NLIN}^\pi$ contains a $\DTIME^\pi(2^{kn})$-bi-immune language for all $k \geq 1$.
		\item If $\pi$ is a $\ptwospace$-random permutation, then $\NP^\pi\cap\coNP^\pi$ contains a $\DTIME^\pi(2^{n^k})$-bi-immune language for all $k \geq 1$.
	\end{enumerate}
\end{corollary}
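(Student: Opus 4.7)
The plan is to chain \Cref{prop:pspace_random_implies_honest_p_betting_game_random} and \Cref{le:main_theorem} together, using the test languages \(\HALFRANGE_{k+3}^\pi\) and \(\POLYHALFRANGE_{k+1}^\pi\) to witness bi-immunity. Fix \(\pi \in \bfPi\) satisfying the hypothesis of part~(1). By \Cref{prop:pspace_random_implies_honest_p_betting_game_random}, \(\pi\) is \(O(n)\)-honest \(\p\)-betting-game random, meaning no \(O(n)\)-honest \(\p\)-permutation betting game succeeds on \(\pi\). Fix an arbitrary \(k \geq 1\). Applying \Cref{le:main_theorem}(1) (with the lemma's parameter set to \(k\)) gives an \(O(n)\)-honest \(\p\)-permutation betting game whose success set covers every \(\sigma\) for which \(\HALFRANGE_{k+3}^\sigma\) is not \(\DTIME(2^{kn})^\sigma\)-immune. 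Since \(\pi\) is not in the success set of any such betting game, \(\HALFRANGE_{k+3}^\pi\) is \(\DTIME(2^{kn})^\pi\)-immune.

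To upgrade immunity to bi-immunity, I would invoke the symmetry noted immediately before the corollary statement: the complement of \(\HALFRANGE_{k+3}^\pi\) admits a defining condition of exactly the same syntactic form (swap the roles of \(0y\) and \(1y\) in the existential/universal clauses), so the identical betting-game construction of \Cref{le:main_theorem} also covers the bad set for the complement. Hence both \(\HALFRANGE_{k+3}^\pi\) and its complement are \(\DTIME(2^{kn})^\pi\)-immune, i.e.\ \(\HALFRANGE_{k+3}^\pi\) is \(\DTIME^\pi(2^{kn})\)-bi-immune. Since by construction \(\HALFRANGE_{k+3}^\pi \in \NLIN^\pi \cap \co\NLIN^\pi\) (the witness \(y\) has length linear in \(|x|\) and the definitions are dual \(\exists/\forall\) statements), this language realizes part~(1) for the chosen \(k\); as \(k\) was arbitrary, part~(1) follows.

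Part~(2) is proved identically, replacing \(\HALFRANGE_{k+3}^\pi\) by \(\POLYHALFRANGE_{k+1}^\pi\), \(\DTIME(2^{kn})^\pi\) by \(\DTIME(2^{n^k})^\pi\), the \(O(n)\)-honesty bound by the \(O(n^{k+1})\)-honesty bound, and \(\p\)-betting-game randomness by \(\ptwo\)-betting-game randomness; \Cref{prop:pspace_random_implies_honest_p_betting_game_random}(2) supplies the required transfer from \(\ptwospace\)-randomness. There is essentially no obstacle beyond bookkeeping: the main conceptual work has been done in \Cref{le:main_theorem} and \Cref{prop:pspace_random_implies_honest_p_betting_game_random}, and the only point requiring care is checking that the honesty parameter matches in each case (\(O(n)\) for part~(1), \(O(n^k)\) for part~(2)), which is precisely how the proposition was stated.
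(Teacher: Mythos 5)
Your proposal is correct and matches the paper's own argument, which is simply to combine \Cref{prop:pspace_random_implies_honest_p_betting_game_random} with \Cref{le:main_theorem} and then use the symmetry of the half-range languages to upgrade immunity to bi-immunity. The only minor quibble is that you cite an \(O(n^{k+1})\) honesty bound in part (2) while \Cref{prop:pspace_random_implies_honest_p_betting_game_random} and \Cref{le:main_theorem}(2) both say \(O(n^k)\), but since the proposition applies at every polynomial honesty level this is merely a bookkeeping mismatch and does not affect the validity of the chain.
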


\section{Random Permutations for \texorpdfstring{$\NP$}{NP} \texorpdfstring{$\cap$}{∩} \texorpdfstring{$\coNP$}{coNP} versus Quantum Computation}\label{sec:quantum}

Bennett, Bernstein, Brassard, and Vazirani \cite{BeBeBrVa97} showed
that  $\NP^\pi \cap \coNP^\pi \not\subseteq \BQTIME^\pi(o(2^{n/3}))$ relative to a random permutation $\pi$ with probability 1. In this section we investigate how much of their result holds relative to individual random oracles at the space-bounded level.

We begin with a general lemma about test languages and $\QTM$s. We write $\PFPi_{\leq n} = \{ g \in \PFPi \mid |g| \leq 2^{n+1}-1 \}$ for all prefix partial permutations defined on strings in $\{0,1\}^{\leq n}$.  For a string $s_i$ in the standard enumeration, we write $g \restr s_i$ for the length $i$ prefix of $g$. In other words, $g  \restr s_i = [g(s_0),\ldots,g(s_{i-1})]$.

\begin{lemma}\label{le:test_langauge_lemma}
	Let $\pi$ be a permutation with an associated test language $L_{\pi}$ and let $p(n)$ be a linear function (polynomial function, respectively). If for some oracle $\QTM$ $M$ the following conditions hold,  then $\pi$ is not a $\pspace$-random ($\ptwospace$-random, respectively) permutation.
	\begin{enumerateC}
		\item The membership of $0^n$ in $L_{\pi}$ depends on the membership of the strings
		of length at most $p(n)$.
		\item $M^{\pi}$ decides $L_{\pi}$ with  error probability $\delta$, for some constant $0 < \delta < 1$, and queries only  strings of length at most $p(n)$.
		\item For any partial prefix permutation $\rho \in \PFPi_{\leq p(n)}$, the conditional
		probability $$ \PR{|\psi| = p(n)}{M^\psi(0^n) = L_\psi(0^n)\bigm| \rho \prefix \psi}$$
		is computable in
		$O(2^{O(n)})$
		space
		($O(2^{n^{O(1)}})$ space, respectively).
		\item For some constant $1 > \epsilon > \delta$ and for all but  finitely many $n$,
		$$\PR{|\psi| = l(n)}{M^\psi(0^n) = L_\psi(0^n)  \bigm| \pi\restr 0^n \prefix \psi } < 1 - \epsilon.$$
	\end{enumerateC}

\end{lemma}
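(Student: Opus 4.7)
The plan is to exploit the gap between $\alpha(\pi,n) := \Pr_M[M^\pi(0^n) = L_\pi(0^n)] \geq 1-\delta$ (from condition 2) and $E_\psi[\alpha(\psi,n) \mid \pi\restr 0^n \prefix \psi] < 1-\epsilon$ (from condition 4). Setting $X_n(\psi) := 1 - \alpha(\psi,n) \in [0,1]$ and $\rho_n := \pi\restr 0^n$, the two conditions rearrange to $X_n(\pi) \leq \delta < \epsilon < E_\psi[X_n \mid \rho_n \prefix \psi]$. Thus $\pi$ falls into an event whose conditional probability given $\rho_n$ is bounded away from $1$, precisely the setup for a likelihood-ratio permutation martingale.

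Writing $r_n := |\rho_n| = 2^n - 1$ and $R_n$ for the length of the prefix covering all strings of length $\leq p(n)$, I define
\[ d_n(g) := \frac{1 - E_\psi[X_n(\psi) \mid g \prefix \psi]}{1 - E_\psi[X_n(\psi) \mid (g\restr r_n) \prefix \psi]} \qquad \text{when } r_n \leq |g| \leq R_n, \]
with $d_n(g) := 1$ for $|g| < r_n$ and $d_n(g) := d_n(g\restr R_n)$ for $|g| > R_n$. The denominator is at least $\epsilon > 0$, and the tower property supplies the averaging identity at every stage, so $d_n$ is a valid permutation martingale with $d_n(\emptylist) = 1$. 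For our $\pi$, whenever condition 4 holds at $n$ we obtain $d_n(\pi\restr R_n) = (1-X_n(\pi))/(1 - E[X_n \mid \rho_n]) \geq (1-\delta)/(1-\epsilon) =: c > 1$.

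The next step is to choose a sparse subsequence $n_1 < n_2 < \cdots$ with $p(n_k) + 1 < n_{k+1}$; geometric growth works when $p$ is linear and super-geometric growth when $p$ is polynomial, giving infinitely many $n_k$. The betting windows $[r_{n_k}, R_{n_k}]$ are then pairwise disjoint, so in the product $D(g) := \prod_{k\geq 1} d_{n_k}(g)$, at most one factor is actively updating at any given stage while all others are constant under one-step extensions. The averaging identity for $D$ thus reduces to the averaging identity for the single active factor, making $D$ a permutation martingale; and since $d_{n_k}(\pi \restr R_{n_k}) \geq c$ for all but finitely many $k$, we get $D(\pi\restr R_{n_k}) \geq c^{k - O(1)} \to \infty$, so $D$ succeeds on $\pi$.

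The main obstacle is verifying the resource bound. Each evaluation of $d_{n_k}(g)$ reduces to two invocations of the conditional probability of condition 3 at parameter $n_k$, using $2^{O(n_k)}$ space in the linear case and $2^{n_k^{O(1)}}$ in the polynomial case. Because $d_{n_k}$ is nontrivial only for $|g| \geq r_{n_k} = 2^{n_k} - 1$, we have $n_k = O(\log|g|)$ in the linear case and $n_k = (\log|g|)^{O(1)}$ in the polynomial case, yielding per-factor space $\operatorname{poly}(|g|)$ and $2^{\log^{O(1)}|g|}$ respectively. Evaluating $D$ factor-by-factor while reusing workspace keeps the total space within $\pspace$ (respectively $\ptwospace$), so $\pi$ cannot be $\pspace$-random (respectively $\ptwospace$-random).
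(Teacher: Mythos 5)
Your proof is essentially the paper's proof: both construct a likelihood-ratio permutation martingale whose block-$j$ factor is $\Pr\bigl[M^\psi(0^{n_j})=L_\psi(0^{n_j})\mid g\bigr]$ normalized by its value at the start of the block, chained over a sparse sequence $n_1<n_2<\cdots$ with $p(n_k)<n_{k+1}$ so the betting windows are disjoint, and both extract a per-block gain of at least $(1-\delta)/(1-\epsilon)>1$ from conditions 2 and 4. The only cosmetic difference is that you present it as a product of local martingales supported on disjoint windows, whereas the paper gives the equivalent recursive definition $d(\rho)=\frac{d(\rho\restr 0^{n_{j-1}})}{\Pr(\psi\mid\rho\restr 0^{n_{j-1}})}\Pr(\psi\mid\rho)$ and starts it at the first index past which condition 4 holds unconditionally. (One small slip: the denominator $1-E[X_n\mid g\restr r_n]$ is not bounded below by $\epsilon$ — condition 4 gives an upper bound $<1-\epsilon$, not a lower bound, and this quantity can in fact be exponentially small in $2^n$; it is, however, positive and of magnitude at least $2^{-O(|g|)}$, so well-definedness and the space bound survive, a point the paper also glosses over.)
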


\begin{appendixproof}[Proof of \Cref{le:test_langauge_lemma}]
	Let $n_0 < n_1 < n_2 < \cdots$ be a sequence of numbers such that $p(n_i) < n_{i+1}$ for all $i \geq 0$. Based on the statement of the lemma, $M^{\psi}(0^{n_{j-1}})$ cannot query any string of length $n_j$ and the membership of $0^{n_{j - 1}}$ does not depend on the membership of any string of length greater than $n_j$. Let $n_k$ be the first number in the sequence above such that for any $n > n_k$ we have $\PR{|\psi| = l(n)}{M^\psi(0^n) = L_\psi(0^n)  \mid \pi\restr 0^n \prefix \psi } < 1 - \epsilon$ and define the following martingale:
	$$d(\rho) =
		\begin{cases}
			1                                                                                                                       & \textrm{if } |\rho| \leq 2^{n_k + 1} - 1                                             \\
			\frac{d(\rho \;\restr\; 0^{n_{j-1}})}{\mathrm{Pr}(\psi \;|\; \rho \;\restr \; 0^{n_{j-1}})}\mathrm{Pr}(\psi \;|\; \rho) & \textrm{if } 2^{n_{j - 1}} < |\rho| \leq 2^{n_j + 1}  - 1 \textrm{, for some } j > k
		\end{cases}
	$$
	where $\mathrm{Pr}(\psi | \nu)$ is the probability that $M^\psi(0^n) = L_\psi(0^n)$ given that $\nu \prefix \psi$. The third condition in the statement of the lemma implies that this martingale is $\pspace$ computable ($\ptwospace$ computable, respectively). Now consider a permutation $\pi$ that satisfies the conditions of the lemma. Then for $j > k$ we have:
	$$
		d(\pi \restr 0^{n_j}) = \frac{d(\pi \restr 0^{n_{j - 1}})}{\mathrm{Pr}(\psi \;|\; \pi \;\restr \; 0^{n_{j-1}})} \mathrm{Pr}(\psi \;|\; \pi \restr 0^{n_j})
		\geq  \frac{d(\pi \restr 0^{n_{j - 1}}) }{1 - \epsilon}(1 - \delta)
	$$

	The last inequality holds because it follows from the first two conditions of the lemma that $\mathrm{Pr}(\psi \;|\; \pi \restr 0^{n_j}) \geq 1 - \delta$. By repeating this process, we can see that $d$ succeeds on $\pi$.

\end{appendixproof}

In the following theorem, we use \Cref{le:test_langauge_lemma} to extend the result by Bennett, Bernstein, Brassard, and Vazirani \cite{BeBeBrVa97} to  $\ptwospace$-random permutations.

\begin{theorem}\label{th:NP_intersect_coNP_BQP}
	If $\pi$ is a $\ptwospace$-random permutation,
	then $\NLIN^\pi \cap \co\NLIN^\pi$ is not contained in $\BQP^\pi$.
\end{theorem}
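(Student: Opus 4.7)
The plan is to apply \Cref{le:test_langauge_lemma} with the test language $L_\pi = \HALFRANGE_k^\pi$ for a fixed $k \geq 2$; since this language lies in $\NLIN^\pi \cap \co\NLIN^\pi$, exhibiting a $\ptwospace$-random permutation $\pi$ for which $\HALFRANGE_k^\pi \not\in \BQP^\pi$ will prove the theorem. Suppose for contradiction that some polynomial-time oracle $\QTM$ $M$ decides $\HALFRANGE_k^\pi$ with bounded error $\delta < 1/3$, and let $p(n)$ be a polynomial upper bound on its running time (and hence its query length). I will verify the four hypotheses of \Cref{le:test_langauge_lemma} relative to $M$.

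Conditions (1) and (2) are immediate: membership of $0^n$ in $\HALFRANGE_k^\pi$ depends only on images of length-$kn$ strings, which are of length at most $p(n)$, and $M^\pi$ by assumption decides $\HALFRANGE_k^\pi$ with error $\delta$ while only querying strings of length at most $p(n)$. For condition (3), I would evaluate the conditional probability by enumerating all completions of $\rho \in \PFPi_{\leq p(n)}$ to a full partial permutation on strings of length at most $p(n)$, then sum the exact acceptance probability of $M$ on each completion weighted by its conditional probability. Each quantum simulation can be performed exactly by standard amplitude bookkeeping in polynomial space, and iterating over the completions with reused workspace fits comfortably in $O(2^{n^{O(1)}})$ space.

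Condition (4) is the heart of the argument and relies on the quantum query lower bound of Bennett, Bernstein, Brassard, and Vazirani \cite{BeBeBrVa97}. Since $\pi \restr 0^n$ fixes only the images of strings of length strictly less than $n$ and $\psi$ is length-preserving, conditioning on $\pi \restr 0^n \prefix \psi$ leaves the length-$kn$ block of $\psi$ uniformly distributed over all permutations of $\{0,1\}^{kn}$, independently of all other lengths. Under this distribution, determining $L_\psi(0^n)$ is equivalent to identifying the first bit of the single marked preimage $\psi^{-1}(0^{kn})$. The BBBV hybrid bound shows that any $T$-query quantum algorithm succeeds at this task with probability at most $\tfrac{1}{2} + O(T^2/2^{kn})$; for $T \le p(n)$ polynomial and $k \geq 2$ this is $\tfrac{1}{2} + o(1)$, which is below $1 - \epsilon$ for any fixed $\epsilon < \tfrac{1}{2}$ (in particular one can pick $\epsilon > \delta$). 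Applying \Cref{le:test_langauge_lemma} then produces a $\ptwospace$-computable permutation martingale that succeeds on $\pi$, contradicting $\ptwospace$-randomness.

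The main technical obstacle will be rigorously adapting the BBBV hybrid argument to this conditional setting, where the adversary's oracle is not a uniformly random permutation on $\{0,1\}^{kn}$ but one drawn from the marginal distribution induced by $\pi \restr 0^n \prefix \psi$ on all lengths up to $p(n)$. The key observation that makes this routine rather than delicate is that $\psi$ being length-preserving means the induced distribution factors as a product over lengths, so $M$'s queries to lengths other than $kn$ can be simulated from the already-fixed data or from independent random choices that carry no information about $\psi^{-1}(0^{kn})$; one then reduces a $p(n)$-query algorithm against $\psi$ to a $p(n)$-query algorithm against a uniformly random permutation on $\{0,1\}^{kn}$, at which point BBBV applies verbatim.
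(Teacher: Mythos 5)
Your proof is correct and takes essentially the same approach as the paper's: both instantiate \Cref{le:test_langauge_lemma} with a $\HALFRANGE$ test language (the paper writes $\HALFRANGE_n^\pi$, but the intent is a fixed-index $\HALFRANGE_k^\pi$ as you use) and invoke the BBBV quantum query lower bound to establish condition (4), then conclude non-randomness of $\pi$. You spell out more of the bookkeeping than the paper does — the product-measure factorization across lengths that makes the conditional distribution on the length-$kn$ block uniform, and the exact amplitude-tracking simulation of $M$ in $2^{n^{O(1)}}$ space for condition (3) — whereas the paper's proof simply cites BBBV for a failure probability of at least $1/8$ and asserts the space bound, so your write-up is a more detailed rendition of the same argument.
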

\begin{appendixproof}[Proof of \Cref{th:NP_intersect_coNP_BQP}]
	Let $M$ be a $\BQP$ oracle machine running in time $t(n)$.
	From Bennett et al. \cite{BeBeBrVa97} it follows that  that relative to a random permutation $\pi$, $M$ fails to decide whether $0^n$ is in $\HALFRANGE^\pi_n$ with probability at least $1/8$.
	We can compute
	$$\PR{|\psi|=t(n)}{M^\psi(0^n) = \HALFRANGE_n^\psi(0^n) \mid \rho \prefix \psi}$$
	in space $2^{n^{O(1)}}$. The Theorem follows from \Cref{le:test_langauge_lemma}.
\end{appendixproof}

We now refine the previous result by considering more restricted quantum  machines that only query strings of $O(n)$ length. This restriction allows us to extend the result to machines with running time $o(2^{n/3})$, analogous to the result of Bennett et al. \cite{BeBeBrVa97}. Whether this extension holds without the restriction on query length remains an open problem.

\begin{theorem}\label{th:NP_intersect_coNP_BQTIME}
	If $\pi$ is a $\pspace$-random permutation and $T(n) = o(2^{n/3})$,
	then $\NLIN^\pi \cap \co\NLIN^\pi$
	is not contained $\BQTIME^{\pi,O(n)\mbox{-}\CCfont{honest}}(T(n))$.
\end{theorem}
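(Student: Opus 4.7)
The plan is to apply \Cref{le:test_langauge_lemma} with the test language $\HALFRANGE^\pi_k$ for a suitable constant $k$ in the \emph{linear} regime of the lemma, which yields $\pspace$-randomness rather than the $\ptwospace$-randomness used in \Cref{th:NP_intersect_coNP_BQP}. Let $M$ be an $O(n)$-honest $\BQTIME(T(n))$ oracle machine with $T(n)=o(2^{n/3})$, and let $c\geq 1$ be the honesty constant so that $M^\pi(0^n)$ queries only strings of length at most $cn$. Choosing $k=c$ makes $\HALFRANGE^\pi_k(0^n)$ depend only on images of strings of length $kn$, so condition~1 of \Cref{le:test_langauge_lemma} holds with $p(n)=kn$ linear, and condition~2 holds by honesty together with the bounded-error convention for $\BQTIME$. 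For condition~4 I would invoke the Bennett--Bernstein--Brassard--Vazirani lower bound exactly as in the proof of \Cref{th:NP_intersect_coNP_BQP}: any quantum oracle algorithm making $o(2^{n/3})$ queries fails to decide the $\HALFRANGE$ predicate with error probability at least $1/8$, so taking $\delta$ as the error probability of $M$ and $\epsilon$ slightly smaller than $1/8$ gives
\[
\PR{|\psi|=kn}{M^\psi(0^n)=\HALFRANGE^\psi_k(0^n)\bigm|\pi\restr 0^n\prefix\psi}<1-\epsilon
\]
for all sufficiently large $n$.

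The main obstacle is verifying condition~3: computing the above conditional probability in $2^{O(n)}$ space. The plan is to enumerate, one completion at a time, every extension $\psi$ of the given prefix partial permutation $\rho\in\PFPi_{\leq p(n)}$ to a permutation on all strings of length at most $kn$. Each such $\psi$ is describable in $2^{O(n)}$ bits, so the outer enumeration uses only $2^{O(n)}$ workspace. For each fixed $\psi$ one simulates $M^\psi(0^n)$ exactly to compute its acceptance probability; since $T(n)\leq 2^{n/3}$, the standard polynomial-space simulation of a bounded-time quantum computation runs in $\mathrm{poly}(T(n))\leq 2^{O(n)}$ space. Comparing this acceptance probability against $\HALFRANGE^\psi_k(0^n)$, which is a direct syntactic check on $\psi$, and accumulating the uniform average over all completions of $\rho$ yields the required conditional probability to any prescribed precision within $2^{O(n)}$ total space.

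With the four conditions verified, \Cref{le:test_langauge_lemma} forces any $\pi$ on which $M^\pi$ decides $\HALFRANGE^\pi_k$ to fail to be $\pspace$-random. Contrapositively, for every $\pspace$-random $\pi$ and every $O(n)$-honest $\BQTIME(T(n))$ oracle machine $M$ with $T(n)=o(2^{n/3})$, $M^\pi$ fails to decide $\HALFRANGE^\pi_k$; since both $\HALFRANGE^\pi_k$ and its complement lie in $\NLIN^\pi$, this gives the claimed separation.
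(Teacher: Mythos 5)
Your proposal is correct and follows the same approach as the paper's one-sentence proof: verify conditions (1)--(4) of \Cref{le:test_langauge_lemma} for a $\HALFRANGE^\pi_k$ test language with $p(n)=kn$ tied to the honesty constant, compute the necessary conditional probability in $2^{O(n)}$ space, and conclude $\pi$ is not $\pspace$-random. Your write-up correctly supplies the details the paper leaves implicit, notably the choice $k=c$, the $2^{O(n)}$-space exhaustive enumeration over completions of $\rho$ up to length $kn$, and the $\mathrm{poly}(T(n))=2^{O(n)}$-space simulation of the bounded-time quantum oracle machine.
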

\begin{appendixproof}[Proof of \Cref{th:NP_intersect_coNP_BQTIME}]
	Under the honesty condition, we can compute the necessary conditional probability in $2^{O(n)}$ space and apply \Cref{le:test_langauge_lemma}.
\end{appendixproof}
Together, these theorems extend the classical separation of Bennett et al. \cite{BeBeBrVa97} to individual space-bounded random permutations, both in the general and the honest-query setting.

\section{Random Oracles for \texorpdfstring{$\NP$}{NP} \texorpdfstring{$\cap$}{∩} \texorpdfstring{$\coNP$}{coNP} and 0-1 Laws for Measure in \texorpdfstring{$\EXP$}{EXP}}\label{sec:limitations}

Tardos \cite{Tardos89} used the characterizations
\[\BPP = \ALMOST\P =\myset{ A }{ \Pr_R\left[ A \in \P^R\right] = 1} \  \text{\cite{BennettGill81}} \]
and
\[\AM = \ALMOST\NP = \myset{ A }{ \Pr_R\left[ A \in \NP^R\right] = 1}\ \text{\cite{NisWig94}}\]
to prove the following conditional theorem separating $\P$ from $\NPcoNP$ relative to a random oracle.
\begin{theorem_cite}{Tardos \cite{Tardos89}}\label{th:Tardos}
	If $\AM \cap \coAM \not= \BPP$, then $\P^R \not= \NP^R \cap
		\coNP^R$ for a random oracle $R$ with probability 1.
\end{theorem_cite}

Recently, Ghosal et al. \cite{Ghosal:STOC23} used non-interactive zero-knowledge (NIZK) proofs to prove a similar conditional theorem.

\begin{theorem_cite}{Ghosal et al. \cite{Ghosal:STOC23}}\label{th:Ghosal}
	If $\UP \not\subseteq \RP$,
	then $\P^R \neq \NP^R \cap \coNP^R$ for a random oracle $R$ with probability 1.
\end{theorem_cite}

In this section we use \Cref{th:Tardos,th:Ghosal} to connect the open problem of $\P$ versus $\NPcoNP$ relative to a random oracle to open questions about the resource-bounded measure of complexity classes within $\EXP$. In particular, we relate the problem to measure 0-1 laws and measurability in $\EXP$. First, we need the following derandomization lemma. The first two parts follow from previous work, while the third part of the lemma is a new observation as far as we know, though its proof uses the techniques from the proofs of the first two parts.

\begin{lemma}\label{le:measure_derand}
	\begin{enumerate}
		\item If $\mup(\NP) \neq 0$, then $\BPP \subseteq \NPcoNP = \AMcoAM$.
		\item If $\mup(\UPcoUP) \neq 0$, then $\BPP \subseteq \UPcoUP$.
		\item If $\mup(\FewP) \neq 0$, then $\BPP \subseteq \FewP \cap \co\FewP$.
	\end{enumerate}
\end{lemma}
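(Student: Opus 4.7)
The plan is to exploit a uniform template based on the Bennett--Gill characterization $\BPP = \ALMOST{\P}$ \cite{BennettGill81}: for any class $\calC$ containing $\P$, we automatically have $\BPP \subseteq \ALMOST{\calC}$, and the content of each part reduces to establishing $\ALMOST{\calC} \subseteq \calC$ under the stated measure hypothesis. I would prove this inclusion in three steps: (i) amplify the relativized $\calC$-machine witnessing $L \in \ALMOST{\calC}$ so that the per-input failure probability is $2^{-n^2}$; (ii) construct a $\p$-computable martingale that succeeds on every oracle on which the amplified machine errs somewhere, using that the amplified error events depend only on polynomial-length oracle prefixes and are polynomial-time testable; (iii) invoke $\mup(\calC) \neq 0$ to extract $A \in \calC$ outside this martingale's success set, so that $L \in \calC^A$, and then absorb the oracle via a closure property of $\calC$.

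The absorption step differs across the three parts. For part 1, I would use the Nisan--Wigderson identification $\AM = \ALMOST{\NP}$ \cite{NisWig94}, whose relativized $\NP$ machine takes a form in which $\NP$-queries to $A \in \NP$ collapse back into a single $\NP$ computation via existential guessing of certificates. This gives $\AM \subseteq \NP$, hence $\coAM \subseteq \coNP$ by symmetry, so $\AMcoAM = \NPcoNP$; combined with the unconditional $\BPP \subseteq \AMcoAM$, we obtain $\BPP \subseteq \NPcoNP = \AMcoAM$. For part 2, taking $\calC = \UPcoUP$, the absorption uses the direct closure $(\UPcoUP)^{\UPcoUP} \subseteq \UPcoUP$: both membership and non-membership in $A$ admit unique witnesses, which the outer $\UPcoUP$ machine guesses alongside its own witness while preserving the uniqueness of the accepting path. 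Since $\UPcoUP$ is already closed under complement, $\BPP \subseteq \UPcoUP$ follows. For part 3, the new contribution, I would verify the analogous $\FewP^{\FewP \cap \co\FewP} \subseteq \FewP$: each oracle query resolves via a polynomial number of certificates, and the product of polynomial ambiguities across polynomially many queries and the outer $\FewP$ machine remains polynomially bounded. Then $\BPP = \coBPP$ yields $\BPP \subseteq \FewP \cap \co\FewP$.

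The main obstacle is the $\p$-uniformity of the martingale in step (ii): the amplification rate must be chosen as a fixed polynomial so that each per-input error event is decidable in polynomial time from a polynomial-length oracle prefix, and the geometric weighting over inputs must yield polynomial-time-computable martingale values. A secondary and part-specific challenge, new to part 3, is the absorption closure $\FewP^{\FewP \cap \co\FewP} \subseteq \FewP$: I would verify by a path-counting argument that, after inlining the oracle certificates, the composed machine's accepting paths are bounded by the product of the outer $\FewP$ ambiguity and the per-query certificate counts, which remains polynomial. If this composition behaves as expected, part 3 falls out of the same template used in parts 1 and 2.
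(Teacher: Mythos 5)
Your approach departs fundamentally from the paper's, and unfortunately the departure introduces gaps that I don't see how to close. The paper's proof for all three parts routes through established \emph{hardness-to-randomness} results: for part~1 it cites Impagliazzo--Moser, who show directly that $\mup(\NP)\neq 0$ implies $\NP=\AM$ (via a hard language in $\NP$, a pseudorandom generator, and derandomization of $\AM$); for part~2 it observes that $\mup(\UPcoUP)\neq 0$ implies the $\UP$-machine hypothesis, which Hitchcock--Pavan show gives $\BPP\subseteq\UPcoUP$; and for part~3 it extracts a $\p$-random language $R\in\FewP$, uses the longest-runs theorem to locate witnesses near the beginning of each length, and argues these witnesses have high circuit complexity and can seed a Nisan--Wigderson/Impagliazzo--Wigderson generator. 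Your proposal instead tries to prove $\ALMOST{\calC}\subseteq\calC$ by a ``direct'' measure argument, but that argument has two unfixable problems.

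First, step (ii) of your template fails for nondeterministic classes. To invoke $\mup(\calC)\neq 0$ you need a \emph{polynomial-time-computable} martingale covering the bad oracles. For $\calC=\P$ this works because checking whether a $\P$-machine errs on a fixed input given a poly-length oracle prefix is itself a polynomial-time task; but for $\calC\in\{\NP,\UPcoUP,\FewP\}$ the corresponding ``did $M^A$ err on $x$?'' event is an $\NP\cup\coNP$ (or $\UP\cup\coUP$, etc.) question, not a $\P$ question, so the martingale values cannot be computed in deterministic polynomial time. Your claim that the amplified error events are ``polynomial-time testable'' is false for nondeterministic machines; this is precisely why the paper reaches for the derandomization machinery rather than imitating the $\ALMOST{\P}=\BPP$ proof.

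Second, the absorption step for part~1 is wrong as stated. You propose to find $A\in\NP$ with $L\in\NP^A$ and then ``collapse'' the oracle queries ``via existential guessing of certificates'' to conclude $\AM\subseteq\NP$. But the $\NP$ machine witnessing $L\in\NP^A$ makes \emph{both} membership and non-membership queries to $A$ (it is reading random bits from the oracle, and both answers occur), and resolving a negative query to $A\in\NP$ is a $\coNP$ task; absorbing gives only $\NP^{\NP}=\Sigma_2^\P$, not $\NP$. The genuine route to $\AM\subseteq\NP$ is the hardness-vs-randomness argument: the oracle is replaced by the deterministic output of a PRG, not queried as an $\NP$ set. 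In contrast, your absorption observations for parts~2 and~3 are at least plausible in spirit ($\UP^{\UPcoUP}\subseteq\UP$ is true, though the $\FewP$ path-counting needs care since the ambiguity compounds multiplicatively across queries), but they don't rescue the overall argument given the failure of step (ii).
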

\begin{appendixproof}[Proof of \Cref{le:measure_derand}]
	\begin{enumerate}
		\item  $\mup(\NP) \neq 0$ implies $\NP = \AM$ \cite{ImpMos09}. Then $\BPP \subseteq \AMcoAM = \NPcoNP$.

		\item If $\mup(\UPcoUP) \neq 0$, then the $\UP$-machine hypothesis holds, which implies $\BPP \subseteq \UPcoUP$ \cite{Hitchcock:HHDCC}.

		\item If $\mup(\FewP) \neq 0$, then there is a $\p$-random language $R$ in $\FewP$ \cite{AmTeZh97}. All but finitely many witnesses for membership in $R$ have high circuit complexity \cite{ImpMos09,Hitchcock:HHDCC}. By the longest runs theorem \cite{Harkins:SRICS}, there is guaranteed to be a string in $R$ within the first $2n$ strings of $\{0,1\}^n$, for all but finitely many $n$. We guess one of these strings and a witness. If we find a valid witness, we use it to build a pseudorandom generator \cite{NisWig94,ImpWig01} and derandomize $\BPP$ \cite{Hitchcock:HHDCC}. There are at most a polynomial number of witnesses, placing $\BPP \subseteq \FewP \cap \co\FewP$.
	\end{enumerate}
\end{appendixproof}

In the following theorem, we have three hypotheses where a complexity class $X$ is assumed to be not equal to $\EXP$ and the $\p$-measure of a subclass of $X$ is concluded to be 0.

\begin{theorem}\label{th:limitations_main}
	Suppose that $\P^R = \NP^R\cap\coNP^R$ for a random oracle $R$ with probability 1. Then all of the following hold:
	\begin{enumerate}
		\item $\NP \neq \EXP \Rightarrow \mup(\NP \cap \coNP) = 0.$
		\item $\UP \neq \EXP \Rightarrow \mup(\UP \cap \coUP) = 0.$
		\item $\FewP \neq \EXP \Rightarrow \mup(\UP) = 0.$
	\end{enumerate}
\end{theorem}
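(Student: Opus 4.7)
The plan is to prove each of the three implications by contrapositive, combining the structural consequences of the hypothesis (via \Cref{th:Tardos} and \Cref{th:Ghosal}) with the derandomization results of \Cref{le:measure_derand} to force each hypothesized class to coincide with $\BPP$ as much as needed, and then to invoke the measure dichotomy $\mup(\BPP) \neq 0 \Rightarrow \BPP = \EXP$. First I would extract the two consequences of the hypothesis: the contrapositives of \Cref{th:Tardos} and \Cref{th:Ghosal} applied to $\P^R = \NP^R \cap \coNP^R$ almost surely give
\[
\AMcoAM \;=\; \BPP \qquad\text{and}\qquad \UP \;\subseteq\; \RP,
\]
and these two facts drive all three parts.

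For part 1, assume $\mup(\NPcoNP) \neq 0$. Monotonicity of $\p$-measure with $\NPcoNP \subseteq \NP$ gives $\mup(\NP) \neq 0$, and \Cref{le:measure_derand}(1) then yields both $\BPP \subseteq \NPcoNP = \AMcoAM$ and $\NP = \AM$; combining $\NP = \AM$ with $\AMcoAM = \BPP$ forces $\NPcoNP = \BPP$, whence $\mup(\BPP) \neq 0$. For part 2, assume $\mup(\UPcoUP) \neq 0$: \Cref{le:measure_derand}(2) gives $\BPP \subseteq \UPcoUP$, and the hypothesis gives $\UP \subseteq \RP \subseteq \BPP$; together with the trivial $\UPcoUP \subseteq \UP$ this closes the chain $\UPcoUP \subseteq \UP \subseteq \BPP \subseteq \UPcoUP$, so $\UP = \UPcoUP = \BPP$ and again $\mup(\BPP) \neq 0$. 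For part 3, assume $\mup(\UP) \neq 0$; then $\mup(\FewP) \neq 0$ (from $\UP \subseteq \FewP$) and \Cref{le:measure_derand}(3) gives $\BPP \subseteq \FewP \cap \co\FewP$, while the hypothesis gives $\UP \subseteq \BPP$, so monotonicity delivers $\mup(\BPP) \neq 0$.

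Each part is finished by invoking the $\BPP$ dichotomy to conclude $\BPP = \EXP$. Part 1 then forces $\NPcoNP = \BPP = \EXP$ and hence $\NP = \EXP$; part 2 forces $\UP = \BPP = \EXP$; and part 3 forces $\FewP \supseteq \BPP = \EXP$, yielding $\FewP = \EXP$. This matches the weaker phrasing of part 3, since we only recover $\FewP = \EXP$ and cannot upgrade to $\UP = \EXP$ from $\mup(\UP) \neq 0$ alone. The main obstacle is justifying the $\BPP$ measure dichotomy $\mup(\BPP) \neq 0 \Rightarrow \BPP = \EXP$; this is a standard consequence of $\BPP$ being closed under $\leqpT$ reductions combined with the existence of $\leqpT$-complete sets for $\EXP$, and it should either be cited precisely or stated and proved as a preparatory lemma before the main theorem.
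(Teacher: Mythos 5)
Your proof is correct and relies on the same key ingredients as the paper's argument: \Cref{le:measure_derand}, \Cref{th:Tardos}, \Cref{th:Ghosal}, and the $\BPP$ 0-1 law of van Melkebeek. The only difference is presentational: you take the contrapositives of Tardos and Ghosal up front (using the Kolmogorov 0-1 law to pass between ``$\P^R=\NPcoNP^R$ w.p.\ 1'' and ``not ($\P^R\neq\NPcoNP^R$ w.p.\ 1)'') to obtain $\AMcoAM=\BPP$ and $\UP\subseteq\RP$ as standing facts and then derive $\NP=\EXP$, $\UP=\EXP$, $\FewP=\EXP$ directly, whereas the paper assumes both $\mup(\cdot)\neq 0$ and the class $\neq\EXP$ and reaches a contradiction by invoking Tardos/Ghosal at the end of each part --- logically equivalent routes through the same steps.
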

\begin{appendixproof}[Proof of \Cref{th:limitations_main}]
	\begin{enumerate}
		\item Suppose $\mup(\NPcoNP) \neq 0$ and $\NP \neq \EXP$. From \Cref{le:measure_derand} we have $\BPP \subseteq \NPcoNP = \AMcoAM$. Therefore $\BPP \subseteq \NP \neq \EXP$, so $\mup(\BPP) =0$ by the zero-one law for $\BPP$ \cite{vanM00}. Since $\AMcoAM$ and $\BPP$ have different $\p$-measures, the classes are not equal. The result follows from \Cref{th:Tardos}.
		\item Suppose $\mup(\UPcoUP) \neq 0$ and $\UP \neq \EXP$. From \Cref{le:measure_derand} we have $\BPP \subseteq \UPcoUP$.
		      Therefore $\BPP \subseteq \UP \neq \EXP$, so $\mup(\BPP) = 0$ by the zero-one law. Therefore $\UPcoUP \not\subseteq \BPP$, so $\UP \not\subseteq \RP$. The result follows from \Cref{th:Ghosal}.
		\item Suppose $\mup(\UP) \neq 0$ and $\FewP \neq \EXP$. Since $\UP \subseteq \FewP$, we have $\mup(\FewP) \neq 0$ and \Cref{le:measure_derand} implies $\BPP \subseteq \FewP \cap \co\FewP$. Therefore $\BPP \subseteq \FewP \neq \EXP$, so $\mup(\BPP) = 0$ by the zero-one law. Therefore $\UP \not\subseteq \BPP$, so $\UP \not\subseteq \RP$. The result follows from \Cref{th:Ghosal}.
	\end{enumerate}
\end{appendixproof}

\Cref{th:limitations_main} has the following corollary about measure 0-1 laws in $\EXP$. We recall the definitions $\mu(X \mid \EXP) = 0$ if $\muptwo(X \cap \EXP) = 0$ and $\mu(X \mid \EXP) = 1$ if $\muptwo(X^c \mid \EXP) = 0$ \cite{Lutz:AEHNC}.

\begin{corollary}\label{co:limitations_zero_one_laws}
	Suppose that $\P^R = \NP^R\cap\coNP^R$ for a random oracle $R$ with probability 1. Then all of the following hold:
	\begin{enumerate}
		\item $\mu(\NP \cap \coNP \mid \EXP) \in \{0,1\}.$
		\item $\mu(\UP \cap \coUP \mid \EXP) \in \{0,1\}.$
		\item
		      $\mu(\UP \mid \EXP) = 0$
		      or   $\mu(\FewP \mid \EXP) = 1.$
	\end{enumerate}
\end{corollary}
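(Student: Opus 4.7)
The plan is to derive the corollary from \Cref{th:limitations_main} by a straightforward case split on whether the relevant uniform class exhausts $\EXP$. In each of the three parts, one case yields measure $0$ inside $\EXP$ via the theorem, while the other case yields measure $1$ by a trivial inclusion argument that exploits the closure of $\EXP$ under complement. The mild subtlety is bridging between $\p$-measure (which is what \Cref{th:limitations_main} outputs) and the restricted measure $\mu(\,\cdot\mid\EXP)$, which is defined in terms of $\ptwo$-measure; this is handled by the standard observation that every $\p$-martingale is also a $\ptwo$-martingale, so $\mup(X)=0$ implies $\muptwo(X\cap\EXP)=0$.

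For part 1, I would first note that if $\NP=\EXP$, then since $\EXP$ is closed under complement we also have $\coNP=\EXP$, whence $\NP\cap\coNP\supseteq\EXP$ and so $\muptwo((\NP\cap\coNP)^c\cap\EXP)=0$, giving $\mu(\NP\cap\coNP\mid\EXP)=1$. Otherwise $\NP\neq\EXP$, and the first clause of \Cref{th:limitations_main} yields $\mup(\NP\cap\coNP)=0$, which lifts to $\muptwo(\NP\cap\coNP)=0$ and hence $\mu(\NP\cap\coNP\mid\EXP)=0$. Part 2 runs through exactly the same dichotomy with $\UP$ in place of $\NP$: the case $\UP=\EXP$ again forces $\coUP=\EXP$ and hence $\mu(\UP\cap\coUP\mid\EXP)=1$, while the case $\UP\neq\EXP$ invokes the second clause of \Cref{th:limitations_main}.

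Part 3 is structured slightly differently because the conclusion is a disjunction between two measure statements involving different classes. Here I would split on whether $\FewP=\EXP$. If so, then $\FewP\supseteq\EXP$ gives $\mu(\FewP\mid\EXP)=1$ immediately. If not, the third clause of \Cref{th:limitations_main} gives $\mup(\UP)=0$, which again lifts to $\muptwo(\UP\cap\EXP)=0$, so $\mu(\UP\mid\EXP)=0$. In either case the disjunction in the statement is satisfied.

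I do not anticipate a genuine obstacle; the only point worth checking carefully is the conversion from $\p$-measure zero to $\mu(\,\cdot\mid\EXP)=0$, and this is a one-line consequence of the inclusion of $\p$-martingales into $\ptwo$-martingales plus monotonicity of $\muptwo$ under intersection. All three parts then follow by direct application of \Cref{th:limitations_main} together with the elementary observation that $\EXP=\co\EXP$.
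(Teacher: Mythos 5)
Your proposal is correct and takes essentially the same route as the paper: a case split on whether $\NP$, $\UP$, or $\FewP$ equals $\EXP$, feeding the negative case into \Cref{th:limitations_main} and bridging $\p$-measure zero to $\mu(\,\cdot\mid\EXP)=0$ via the inclusion of $\p$-martingales in $\ptwo$-martingales. The one cosmetic difference is that the paper invokes the stronger cited fact ``$X=\EXP$ iff $\mu(X\mid\EXP)=1$ for classes closed under finite union/intersection,'' while you observe directly that the trivial direction ($X\supseteq\EXP$ forces $\mu(X\mid\EXP)=1$ because $X^c\cap\EXP=\emptyset$) is all that is needed; both are fine since the nontrivial converse is never used.
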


\begin{appendixproof}[Proof of \Cref{co:limitations_zero_one_laws}]
	The two key facts we need in this proof are that for any class $X \subseteq \EXP$:
	\begin{enumerate}[\upshape (a)]
		\item If $X$ is closed under finite union and intersection, then $X = \EXP$ if and only if $\mu(X \mid \EXP) = 1$ \cite{ReSiCa95,Lutz:QSET}.
		\item If $\mup(X) = 0$, then $\mu(X \mid \EXP) = 0$ \cite{Lutz:AEHNC}.
	\end{enumerate}
	We write each implication in \Cref{th:limitations_main} as a disjunction and apply the above facts:
	\begin{enumerate}
		\item We have $\NP = \EXP \Leftrightarrow \NPcoNP = \EXP \Leftrightarrow \mu(\NPcoNP\mid \EXP) = 1$ or $\mup(\NPcoNP) = 0 \Rightarrow \mu(\NPcoNP \mid \EXP) = 0$.
		\item We have $\UP = \EXP \Leftrightarrow \UPcoUP = \EXP \Leftrightarrow \mu(\UPcoUP \mid \EXP) = 1$ or $\mup(\UPcoUP) = 0 \Rightarrow \mu(\UPcoUP\mid\EXP) = 0.$
		\item We have $\FewP = \EXP \Leftrightarrow \mu(\FewP \mid \EXP) = 1$ or $\mup(\UP) = 0 \Rightarrow \mu(\UP \mid \EXP) = 0$.
	\end{enumerate}
\end{appendixproof}
In the third case of \Cref{co:limitations_zero_one_laws}, we almost have a 0-1 law for $\UP$. Can a full 0-1 law be obtained?

The contrapositives of the implications in \Cref{co:limitations_zero_one_laws} show that the random oracle question for $\P$ versus $\NPcoNP$ is resolved under nonmeasurability hypotheses. A complexity class $X$ is defined to be {\em not measurable} in $\EXP$ if $\mu(X\mid\EXP) \neq 0$ and $\mu(X\mid\EXP) \neq 1$ \cite{Lutz:QSET,ReSiCa95}.
\begin{corollary}\label{co:limitations_nonmeasurability}
	\begin{enumerate}
		\item If $\NPcoNP$ is not measurable in $\EXP$, then $\P^R \neq \NP^R\cap\coNP^R$ for a random oracle $R$ with probability 1.
		\item If $\UPcoUP$ is not measurable in $\EXP$, then $\P^R \neq \NP^R\cap\coNP^R$ for a random oracle $R$ with probability 1.
		\item If $\UP$ and $\FewP$ are both not measurable in $\EXP$, then $\P^R \neq \NP^R\cap\coNP^R$ for a random oracle $R$ with probability 1.
	\end{enumerate}
\end{corollary}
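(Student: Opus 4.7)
The plan is to derive each part by direct contraposition of the corresponding part of \Cref{co:limitations_zero_one_laws}, together with the observation that the event $\P^R = \NP^R \cap \coNP^R$ is a tail event on $R$ and therefore obeys a Kolmogorov 0-1 law. Thus the negation of ``$\P^R = \NP^R \cap \coNP^R$ holds with probability 1'' is equivalent to ``$\P^R \neq \NP^R \cap \coNP^R$ holds with probability 1,'' which is the conclusion we want.

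First I would record the tail-event observation: for any two oracles that agree cofinitely, the resulting classes $\P^R$ and $\NP^R \cap \coNP^R$ coincide up to finite variation. Hence the event $\{R \mid \P^R = \NP^R \cap \coNP^R\}$ lies in the tail $\sigma$-algebra of the product measure on $\{0,1\}^\infty$, so by Kolmogorov's 0-1 law its probability is either $0$ or $1$. This reduces each part of the corollary to showing that the hypothesis ``$\P^R = \NP^R \cap \coNP^R$ with probability 1'' is false.

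With this reduction in hand, each part is immediate from \Cref{co:limitations_zero_one_laws}. For part 1, if $\NPcoNP$ is not measurable in $\EXP$, then $\mu(\NPcoNP \mid \EXP) \notin \{0,1\}$, contradicting the conclusion of part 1 of \Cref{co:limitations_zero_one_laws}, so its hypothesis must fail; the 0-1 law then yields $\P^R \neq \NP^R \cap \coNP^R$ with probability 1. Part 2 is identical, applied to $\UPcoUP$ and part 2 of \Cref{co:limitations_zero_one_laws}. For part 3, the assumption that both $\UP$ and $\FewP$ are nonmeasurable in $\EXP$ gives $\mu(\UP \mid \EXP) \neq 0$ and $\mu(\FewP \mid \EXP) \neq 1$ simultaneously, which is precisely the negation of the disjunction in part 3 of \Cref{co:limitations_zero_one_laws}; the same contrapositive-plus-0-1-law argument finishes the case.

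I do not anticipate a real obstacle, since all the substantive work is contained in \Cref{th:limitations_main} and \Cref{co:limitations_zero_one_laws}. The only point that warrants explicit care is the invocation of the 0-1 law, which is what lets us upgrade ``the equality does not hold almost surely'' to ``the inequality holds almost surely''; without this step one would only obtain that the random-oracle event has probability strictly less than $1$.
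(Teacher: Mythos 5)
Your proposal is correct and matches the paper's intended argument: the paper states the corollary follows by taking contrapositives of \Cref{co:limitations_zero_one_laws}. The step you make explicit---that $\{R \mid \P^R = \NP^R \cap \coNP^R\}$ is a tail event (invariant under finite changes to $R$, since such changes can be hard-coded into the oracle machines) and hence obeys a Kolmogorov 0-1 law, so that failure of the equality almost surely yields the inequality almost surely---is taken for granted by the paper following Bennett and Gill, but it is indeed the precise point needed to upgrade ``not probability 1'' to ``probability 1 of the complement,'' so it is a worthwhile clarification rather than a new route.
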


On the other hand, if the consequence of \Cref{co:limitations_nonmeasurability} can be proved with measure in $\EXP$, then we would have $\BPP \neq \EXP$, which implies $\mu(\BPP \mid \EXP) = 0$ by the 0-1 law for $\BPP$ \cite{vanM00}.
\begin{theorem}\label{th:ptwo_measure_BPP_neq_EXP}
	If $\{ A \mid \P^A = \NP^A \cap \coNP^A \}$ has measure 0 in $\EXP$, then $\mu(\BPP \mid \EXP) = 0$.
\end{theorem}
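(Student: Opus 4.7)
We argue by contraposition. Suppose $\mu(\BPP \mid \EXP) \neq 0$; we will show this forces $\mu(\{A \mid \P^A = \NP^A \cap \coNP^A\} \mid \EXP) > 0$, contradicting the hypothesis. The opening step invokes van Melkebeek's $\ptwo$-measure 0-1 law for $\BPP$ \cite{vanM00}: $\mu(\BPP \mid \EXP) \in \{0,1\}$, with the value $1$ equivalent to the collapse $\BPP = \EXP$. Our assumption therefore forces $\BPP = \EXP$.

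From $\BPP = \EXP$ we cascade through several familiar collapses. First, $\BPP \subseteq \AM \cap \coAM \subseteq \EXP = \BPP$ forces $\AM \cap \coAM = \BPP$. For any $A \in \EXP = \BPP$, the standard containment $\NP^{\BPP} = \AM$ (and its dual) gives
\[
  \NP^A \cap \coNP^A \subseteq \AM \cap \coAM = \BPP = \BPP^A,
\]
using $\BPP^{\BPP} = \BPP$ together with $A \in \BPP$. Relativizing Ko's theorem ($\NP^A \subseteq \BPP^A \Rightarrow \NP^A = \RP^A$) converts this to $\NP^A \cap \coNP^A = \CCfont{ZPP}^A$, leaving only the gap between $\CCfont{ZPP}^A$ and $\P^A$ to close. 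If we collapse this gap on a $\ptwo$-measure positive family of $A \in \EXP$, we obtain $\P^A = \NP^A \cap \coNP^A$ on that family, contradicting the hypothesis.

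The main obstacle is precisely this final derandomization step, performed uniformly over positive $\ptwo$-measure. Under $\BPP = \EXP$, Adleman's theorem gives $\EXP \subseteq \P/\poly$, triggering Karp--Lipton-style collapses (such as $\EXP = \CCfont{MA}$); the intended route is an Impagliazzo--Wigderson pseudorandom-generator construction keyed to the oracle $A$, implemented uniformly inside a $\ptwo$-martingale, so that positively many $A \in \EXP$ furnish the hardness needed to deterministically simulate $\CCfont{ZPP}^A$ in $\P^A$. An alternative that sidesteps derandomization entirely is a direct martingale reduction: transform a $\ptwo$-martingale covering $\{A : \P^A = \NP^A \cap \coNP^A\} \cap \EXP$ into one covering $\BPP \cap \EXP$ by encoding $\BPP$ computations into oracle queries, yielding $\mu(\BPP \mid \EXP) = 0$ without invoking the 0-1 law at all.
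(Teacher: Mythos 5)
Your proof has a genuine gap that you yourself flag: the final derandomization step, where you need $\CCfont{ZPP}^A = \P^A$ to hold uniformly on a positive $\ptwo$-measure subset of $\EXP$, is left unresolved, and the two routes you sketch (an Impagliazzo--Wigderson generator keyed to $A$, or a direct martingale reduction encoding $\BPP$ computations into oracle queries) are not completed and do not obviously close. In addition, the intermediate step ``$\NP^{\BPP} = \AM$'' is not a standard identity (the correct and relevant containment is $\NP^{\BPP} \subseteq \CCfont{MA}$, obtained by letting Merlin supply the full nondeterministic path and Arthur verify with an amplified $\BPP$ algorithm); writing it as an equality weakens the argument and risks circularity with the collapse you are trying to prove.

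The paper's proof sidesteps all of this with a one-observation argument that runs in the forward direction, not by contraposition. If $A$ is $\leqpT$-complete for $\EXP$, then $\EXP \subseteq \P^A \subseteq \NP^A \cap \coNP^A \subseteq \EXP$, so $\P^A = \NP^A \cap \coNP^A$ \emph{unconditionally}. Thus the class of $\leqpT$-complete sets for $\EXP$ is contained in $\{A \mid \P^A = \NP^A \cap \coNP^A\}$; if the latter has measure $0$ in $\EXP$, so does the former, and a result of Buhrman et al.\ \cite{BvMRSS01} says that the $\leqpT$-complete sets having measure $0$ in $\EXP$ implies $\BPP \neq \EXP$, whence $\mu(\BPP \mid \EXP) = 0$ by the $0$-$1$ law \cite{vanM00}. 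The key idea you are missing is not a derandomization but the trivial structural fact that complete sets already witness the equality $\P^A = \NP^A \cap \coNP^A$; once you see that, no collapse cascade or generator construction is needed.
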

\begin{appendixproof}[Proof of \Cref{th:ptwo_measure_BPP_neq_EXP}]
	Let $A$ be $\leqpT$-complete for $\EXP$.
	Then
	$$\NP^A \cap \coNP^A \subseteq \EXP \subseteq \P^A \subseteq \NP^A \cap \coNP^A.$$
	Therefore the $\leqpT$-complete sets are a subset of $\{ A \mid \P^A = \NP^A \cap \coNP^A \}$, so they have measure 0 in $\EXP$, which implies
	$\BPP \neq \EXP$ \cite{BvMRSS01} and $\mu(\BPP \mid \EXP) = 0$.
\end{appendixproof}

These results suggest that resolving whether $\P^R = \NP^R \cap \coNP^R$ relative to a random oracle $R$ requires a deeper understanding of the resource-bounded measurability within $\EXP$ of fundamental subclasses such as $\BPP$, $\NP$, $\UP$, and $\FewP$.

\section{Conclusion}\label{sec:conclusion}

We have introduced resource-bounded random permutations and shown that
$\P^\pi \neq \NP^\pi\cap\coNP^\pi$ for all $\p$-betting-game random permutations. We remark that all of the results in \Cref{sec:p_np_conp,sec:quantum} about $\NLIN\cap\co\NLIN$ and $\NP\cap\co\NP$  hold for their unambiguous versions $\CCfont{ULIN}\cap\co\CCfont{ULIN}$ and $\UPcoUP$, respectively.
An interesting open problem is whether our main theorem can be improved from betting-game random permutations to random permutations.

\begin{question}
	Does $\P^\pi \neq \NP^\pi \cap \co\NP^\pi$ for a $\p$-random permutation $\pi$?
\end{question}

More generally, the relative power of permutation martingales versus betting games should be investigated.

\begin{question}
	Are polynomial-time permutation martingales and permutation betting games equivalent?
\end{question}

We proved two restricted versions of the Bennett et al.~\cite{BeBeBrVa97} random permutation separation. Does the full version hold relative to individual random permutations?

\begin{question}\label{conj:NP_intersect_coNP_BQTIME}
	If $\pi$ is a $\pspace$-random permutation and $T(n) = o(2^{n/3})$,
	is $\NLIN^\pi \cap \co\NLIN^\pi$ not contained in $\BQTIME^\pi(T(n))$?
\end{question}

%


\end{document}